\documentclass[11pt,letterpaper]{article}
\usepackage[margin=1in]{geometry}
\usepackage{authblk}

\usepackage{graphicx}
\usepackage{amsfonts,amsmath,amssymb,amsthm,mathtools}
\usepackage{comment}
\usepackage{subcaption}
\usepackage{xcolor}
\definecolor{ForestGreen}{rgb}{0.1333,0.5451,0.1333}
\definecolor{DarkRed}{rgb}{0.80,0,0}
\definecolor{Red}{rgb}{1,0,0}
\usepackage[linktocpage=true,
pagebackref=true,colorlinks,
linkcolor=DarkRed,citecolor=ForestGreen,
bookmarks,bookmarksopen,bookmarksnumbered]
{hyperref}
\usepackage{cleveref}
\usepackage{algorithm}
\usepackage[noend]{algpseudocode}
\usepackage{array}
\usepackage{booktabs}
\usepackage{multirow}
\usepackage{tikz}
\usetikzlibrary{arrows.meta,shapes.geometric}

\DeclareMathOperator*{\Var}{Var}
\DeclareMathOperator{\med}{Med}
\DeclareMathOperator{\Median}{\med}
\DeclareMathOperator{\Medianp}{Median}
\DeclareMathOperator{\anchor}{Anchor}
\DeclareMathOperator{\bargain}{Bargain}

\theoremstyle{plain}
\newtheorem{theorem}{Theorem}[section]
\newtheorem{lemma}[theorem]{Lemma}
\newtheorem{proposition}[theorem]{Proposition}
\newtheorem{corollary}[theorem]{Corollary}

\theoremstyle{definition}
\newtheorem{definition}[theorem]{Definition}

\newtheorem{example}[theorem]{Example}

\newcommand{\sbpara}[1]{{\smallskip\noindent\textbf{#1}}}
\newcommand{\Exp}{\mathbb{E}}
\newcommand{\abs}[1]{\left| #1 \right|}
\newcommand{\ind}{\mathbf{1}}

\newcommand{\Real}{\mathbb{R}}
\newcommand{\Qcal}{\mathcal{Q}}
\newcommand{\Util}{\widetilde{U}}
\newcommand{\Vtil}{\widetilde{V}}
\newcommand{\Utilde}{\widetilde{U}}
\newcommand{\Vtilde}{\widetilde{V}}
\newcommand{\SCost}{\mathsf{SC}}
\newcommand{\Distortion}{\mathsf{Distortion}}

\begin{document}

\title{Anchored Sequential Deliberation}

\author[1]{Sijing Tu~\footnote{sijingtu@stanford.edu}}
\author[1]{Ashish Goel~\footnote{ashishg@stanford.edu}}
\affil[1]{%
  Management Science and Engineering, Stanford University
}
\date{}

\maketitle

\begin{abstract}
Sequential deliberation is a mechanism for collective decision making:
at each round, a uniformly randomly selected pair is asked to revise a collective outcome, which then becomes the input for the next round.
Existing theory by Fain et al.~\cite{fain2017sequential} treats the current outcome solely as the disagreement alternative in bargaining.
Yet the existing outcome might also carry social influence and anchor participants' positions toward the status quo.
In this paper, we introduce \emph{anchored sequential deliberation}.
At each round, two participants with bliss points $U$ and $V$ shift their positions toward the previous outcome $O_{t-1}$ with anchoring strength $0\leq \lambda<1$.
They then bargain using $O_{t-1}$ as the disagreement alternative.
For the sake of analysis, we assume that the decision space is one-dimensional, the anchoring effect is linear, and participants use Nash bargaining.
The update simplifies to
$O_t=(1-\lambda)\Medianp\{U,V,O_{t-1}\}+\lambda O_{t-1}$.

Our analysis reveals a trade-off.
For every population distribution, the outcome has a unique stationary distribution.
Through a coupling of two outcomes, we find that the process contracts in $1$-Wasserstein distance with a factor of at most $\frac{1+\lambda}{2}$, which implies that stronger anchoring slows mixing.
On the other hand, stationary distortion weakly decreases with $\lambda$, although the worst-case distortion remains $\frac{1+\sqrt{2}}{2}$ for every feasible $\lambda$.
We also identify a unique \emph{deliberative fixed point}, at which the expected movement is zero, and prove that the stationary distribution concentrates around it as $\lambda$ increases.
For symmetric populations, we further provide a tighter bound on the stationary variance around this fixed point.
For the uniform distribution, we establish upper and lower bounds on stationary distortion, both of which approach $1$ as $\lambda$ approaches $1$.
Simulations for uniform and Beta populations show that stronger anchoring slows mixing, concentrates the stationary distribution, and lowers stationary distortion in these instances.
\end{abstract}

\section{Introduction}
\label{sec:intro}
Public collective decision making has recently attracted growing attention in the computer science literature. 
A significant line of work models collective choice as a social choice problem, where participants' preferences are aggregated through a voting rule~\cite{charikar2024breaking,anshelevich2018approximating,gkatzelis2020resolving,kizilkaya2022plurality,kizilkaya2023generalized,munagala2019improved}.
In many public decision-making processes, however, voting is only the final step and should not replace deliberation~\cite{bachtigerOxfordHandbookDeliberative2018}.
Deliberation helps participants become better informed about the issues and can improve the quality of collective decisions.
This importance has motivated practical platforms for public deliberation, including the Online Deliberation Platform~\cite{onlineDeliberationPlatform}, Polis~\cite{polis}, and Decidim~\cite{decidim}.

A practical mechanism for deliberation should be simple to implement, decentralized, scalable to large populations, and should not require participants to reason over the entire decision space~\cite{fain2017sequential}.
Motivated by these principles, recent work has proposed mechanisms based on small-group deliberation~\cite{goelLargeScaleDeliberativeDecisionMaking2016,fain2017sequential,goel2025metric}.
These mechanisms are evaluated using distortion, defined as the ratio between the social cost of the chosen outcome and the optimal social cost~\cite{procaccia2006distortion}.
This line of research shows that integrating deliberation into a voting rule can produce higher-quality outcomes with lower distortion.

In this paper, we build on the sequential deliberation framework of Fain et al.~\cite{fain2017sequential}: in each round, a pair of participants is drawn from the population to bargain over the deliberative outcome from the previous round, which serves as the disagreement alternative.
We add an individual-behavior perspective that is natural in real-world deliberation but absent from the established theoretical model.
When participants deliberate over an existing policy, draft proposal, or budgeting arrangement, they might not interpret the current outcome only as a disagreement alternative. The current outcome can also act as a source of social influence that shapes what the participants are willing to propose, as changing an existing outcome can require effort, justification, and social risk.
We model this tendency to stay close to the current outcome as an \emph{anchoring bias}~\cite{tversky1974judgment}.
We use a research lab writing a shared lab rule as an illustrative example.

\begin{example}
Assume a research lab wants to write a shared lab rule. The professor writes the first draft. In each round, two randomly selected students revise the current draft by mutual agreement. Each student may have an ideal version of the rule: some prefer more supervision, while others prefer working independently. Yet large edits can be effortful or confrontational, especially when the current text already looks legitimate. As a result, the pair might propose an incremental revision that reflects their most important shared concerns, rather than an outcome obtained by bargaining directly from their ideal positions.
\end{example}

Thus, a single round of deliberation has two components: anchoring pulls participants' expressed stances toward the previous outcome, and bargaining turns those expressed stances into the next outcome.
We now introduce a model of anchored sequential deliberation that incorporates both components.

\floatname{algorithm}{Process}
\begin{algorithm}[t]
\caption{Sequential Deliberation Process}
\label{alg:anchored-sequential-deliberation}
\begin{algorithmic}[1]
    \Require
    Participants provide their preferred outcomes; an arbitrary outcome is selected as $O_0$.
    \For{$t = 1, 2, \ldots, T$}
        \State Select uniformly at random a pair of participants $u$ and $v$ with bliss points $U$ and $V$, respectively.
        \State Provide $u$ and $v$ with the current deliberation outcome $O_{t-1}$.
        \State The participants form anchored stances:
        \begin{equation}
        \label{eq:anchor-stance}
        \Util = \anchor_{\lambda} (U, O_{t-1}) \qquad
        \Vtil = \anchor_{\lambda} (V, O_{t-1})
        \end{equation}
        \State Participants $u$ and $v$ bargain using $O_{t-1}$ as the disagreement alternative.
        \begin{equation}
        \label{eq:bargain-step}
        O_{t} = \bargain(\Util, \Vtil, O_{t-1}).
        \end{equation}
    \EndFor
    \State \Return $O_T$
\end{algorithmic}
\end{algorithm}

\subsection{Our model: anchored sequential deliberation}
We formulate anchored sequential deliberation based on the framework of Fain et al.~\cite{fain2017sequential}; the process is shown in Process~\ref{alg:anchored-sequential-deliberation}.
In each round, the participants first observe the outcome from the previous round and shift their stances toward it, as in Equation~\ref{eq:anchor-stance};
they then bargain using that previous outcome as the disagreement alternative, as in Equation~\ref{eq:bargain-step}.

The function $\anchor_{\lambda}(\cdot)$ takes a participant's ideal outcome (i.e., bliss point) and the previous outcome $O_{t-1}$ as inputs and returns the participant's anchored stance.
The function $\bargain(\cdot)$ then takes the anchored stances and the previous deliberation outcome and returns the deliberative outcome for the current round.

\subsection{Analytical assumptions}
\label{sec:analyticalassumptions}
For the sake of analysis, we assume that the deliberation space is the interval $[0,1]$.
Each agent has a bliss point, representing the agent's ideal outcome, drawn from a cumulative distribution function $G$ on $[0,1]$.

\sbpara{The anchoring parameter $\lambda$.} Let $\lambda \in [0,1)$ measure the strength of anchoring: the extent to which participants move their expressed stances toward the previous deliberative outcome before bargaining.
Larger $\lambda$ means stronger anchoring, so expressed stances lie closer to the current outcome.
We use the following simple linear update rule:

\begin{equation}
    \label{eq:anchoring}
    \Util = (1-\lambda) U + \lambda O_{t-1},
    \qquad
    \Vtil = (1-\lambda) V + \lambda O_{t-1}
\end{equation}

This linear update is motivated by the empirical study of Lieder et al.~\cite{lieder2018empirical}, which finds that anchoring bias grows linearly with the distance between the anchor and the correct value.
It is also consistent with social-influence models in which people revise their opinions by weighting and combining their own positions with others' positions, including the DeGroot model~\cite{degroot1974reaching} and the Friedkin--Johnsen model~\cite{friedkin1990social}.

\sbpara{Nash bargaining.}
For tractability, we assume that the two selected participants use Nash bargaining.
Given the disagreement alternative $O_{t-1}$ and the anchored stances $\Util$ and $\Vtil$, their utility gains from an outcome $O_t$ are
$d(O_{t-1}, \Util) - d(O_{t}, \Util)$ and $d(O_{t-1}, \Vtil) - d(O_{t}, \Vtil)$, respectively.
The Nash bargaining solution maximizes the Nash product:
\[\arg\max_{O_{t} \in [0, 1]} \left(d(O_{t-1}, \Util) - d(O_{t}, \Util)\right) \left(d(O_{t-1}, \Vtil) - d(O_{t}, \Vtil)\right),\]
where both utility gains are non-negative. 

By Fain et al.~\cite[Lemma~1]{fain2017sequential}, the line is a special case of a median graph, and the Nash bargaining solution is the \emph{median} of the two anchored stances and the disagreement alternative.
We therefore model the sequence of bargaining outcomes as a Markov chain with the following update rule:
\begin{equation}
    \label{eq:bargain-step-outcome}
    O_{t} = \med(\Util, \Vtil, O_{t-1}),
\end{equation}
where the initial outcome $O_0\sim G$ is drawn from the population distribution.

\sbpara{Social cost and distortion.} For an outcome $O$, the social cost is the expected distance from the agents' bliss points to $O$:
$\SCost(O) \coloneqq \int_{0}^{1} |O - x| dG(x)$. Let $O^*$ minimize social cost, so $\SCost(O^*) = \min_{O} \SCost(O)$.
The distortion of $O$ is $\Distortion(O) \coloneqq \frac{\Exp[\SCost(O)]}{\SCost(O^*)}$.

\subsection{Our results}
\label{sec:main-results}
Does anchoring lead to better deliberative results?
Our analysis shows a trade-off between convergence and stability.
Stronger anchoring makes each round stay closer to the previous outcome, so the process takes longer to forget its starting point.
At the same time, once the process reaches its stationary distribution, stronger anchoring can make the outcome more stable.
Here, we define a \emph{deliberative fixed point}, the point where the expected unanchored movement is zero.
We find that increasing the anchoring strength makes the stationary distribution more concentrated around this fixed point.
Moreover, the stationary distortion weakly decreases as the anchor effect strengthens; however,
in the worst case, the distortion remains the same as in the unanchored process.
We therefore study more restricted settings where the deliberative fixed point is the population median, which is socially optimal.
For symmetric populations, stronger anchoring reduces stationary variance around the median; for the uniform case, this gives explicit upper and lower distortion bounds.
We also use simulations on several Beta distributions to demonstrate the same trade-off beyond the cases where we have closed-form bounds.

\sbpara{Convergence.}
For every population distribution $G$ on $[0,1]$ and every $\lambda<1$, the anchored process has a unique stationary distribution.
Its $1$-Wasserstein mixing time satisfies
$
    t_{\mathrm{mix}}^{(W_1)}(\varepsilon)
    \le
    \left\lceil
    \frac{\log(1/\varepsilon)}{\log(2/(1+\lambda))}
    \right\rceil,
$
and, when $0<\lambda<1$, it also has the lower bound
$
    t_{\mathrm{mix}}^{(W_1)}(\varepsilon)
    \ge
    \left\lceil
    \frac{\log(1/(2\varepsilon))}{\log(1/\lambda)}
    \right\rceil
$
for $0<\varepsilon<1/2$ (Theorem~\ref{thm:wasserstein-mixing-time}).
Both bounds increase as $\lambda$ increases.
We also show that the upper-bound rate is worst-case tight up to a factor of $2$ by constructing a symmetric two-cluster population at the endpoints (Lemma~\ref{lem:wasserstein-rate-tight}).
Thus stronger anchoring slows convergence.
The proof uses a monotone coupling.
First, the one-step update can be written as
\begin{equation}
    \label{eq:simplified-update}
    O_t=(1-\lambda)\med(U,V,O_{t-1})+\lambda O_{t-1},
\end{equation}
so the anchored step is a convex combination of the unanchored median step and the previous outcome.
We then run two copies of the chain with the same sampled bliss points in every round.
For any pair of outcomes $X_t$ and $Y_t$ from the two copies with $X_t\le Y_t$, the order is preserved, and the new gap is a linear combination, with weights $\lambda$ and $1-\lambda$, of the old gap $Y_t-X_t$ and the length of the overlap between $[X_t,Y_t]$ and the interval spanned by the two sampled bliss points.
Taking expectations gives
\[
    \Exp[Y_{t+1}-X_{t+1}]
    \leq
    \frac{1+\lambda}{2}\Exp[Y_t-X_t],
\]
which gives the $1$-Wasserstein upper bound and uniqueness of the stationary distribution.
For the lower bound, the same coupling from the endpoints keeps the gap at least $\lambda^t$; Kantorovich--Rubinstein duality (Theorem~\ref{thm:kantorovich-rubinstein}) and the triangle inequality then give the stated mixing-time lower bound.

\sbpara{Stationary distortion bound.}
We next compare the stationary social cost under different anchoring parameters.
For every population distribution $G$, let $0 \leq \lambda_2 \leq \lambda_1 < 1$ and let $X_{i, \infty} \sim \pi_{\lambda_i, G}$ be stationary outcomes.
The expected social cost of the stationary outcome weakly decreases as $\lambda$ increases
(Theorem~\ref{thm:anchored-worst-case}):
\[
\Exp_{X_{1, \infty}\sim\pi_{\lambda_1,G}}[\SCost(X_{1, \infty})]
\le
\Exp_{X_{2, \infty}\sim\pi_{\lambda_2,G}}[\SCost(X_{2, \infty})].
\]
We immediately obtain the same stationary distortion upper bound as for the unanchored process:
$(1+\sqrt{2})/2$.
Actually, this bound is tight: we can construct a two-cluster population with suitable masses that attains the same distortion for all anchoring parameters (Lemma~\ref{lemma:worst-case-tightness}).
The proof of Theorem~\ref{thm:anchored-worst-case} uses two ingredients:
first, we find a convex function $\varphi:[0, 1] \to \Real$ such that
$\Exp[\mathsf{SC}(X_{1, \infty})] - \Exp[\mathsf{SC}(X_{2, \infty})] = \Exp[\varphi(X_{1, \infty})] - \Exp[\varphi(M_1)],$ 
where $U,V\stackrel{\mathrm{i.i.d.}}{\sim}G$ and $M_1 = (1 - \lambda_2) \cdot \text{Med}(X_{1, \infty}, U, V) + \lambda_2 X_{1, \infty}$. 
By the convexity of $\varphi$ and the linearity of expectation, we can show that
$\Exp[\varphi(X_{1, \infty})] - \Exp[\varphi(M_1)] \leq 0.$

\sbpara{Concentration around the deliberative fixed point.}
Every population distribution has a unique \emph{deliberative fixed point} $\theta_G$, defined as the point where the expected unanchored movement is zero (Lemma~\ref{lem:deliberative-fixed-point}).
As $\lambda$ increases, the stationary distribution concentrates around this point (Theorem~\ref{thm:general-stationary-concentration}).
In particular,
\begin{equation*}
    \Exp[(O_\infty-\theta_G)^2] \leq 1-\lambda.
\end{equation*}
The proof uses the drift function $b_G(x)=\Exp[\med(x,U,V)]-x$.
An integral formula for $b_G$ shows that it is strictly decreasing, so it has a unique zero.
Substituting Equation~\eqref{eq:simplified-update} into the stationary second moment $\Exp[(O_\infty-\theta_G)^2]$ and applying the slope bound on $b_G$ gives the bound above.

\sbpara{Symmetric distributions.}
For symmetric population distributions, the deliberative fixed point is the population median, which minimizes social cost on the line.
Hence, concentration indicates a decrease in distortion.
For the uniform distribution, the stationary distortion is at most
$1+\frac{1-\lambda}{6(1+\lambda)}$ and at least
$1+\frac{1-\lambda}{9+7\lambda}$ for every $0\le\lambda<1$ (Theorem~\ref{thm:distortion-uniform}).
Both bounds decrease with $\lambda$ and approach $1$ as $\lambda$ approaches $1$.
When $\lambda=0$ and there is no anchoring effect, the exact stationary distortion is $\pi-2$.
Thus, in the uniform case, stronger anchoring tends to improve the stationary distortion guarantee.
More generally, for centered symmetric distributions,
\[
    \Exp\!\left[\left|O_\infty-\frac{1}{2}\right|^2\right]
    \leq
    \frac{2(1-\lambda)}{1+\lambda}\Var(U)
\]
(Theorem~\ref{thm:centered-second-moment-general-g}).
The proof reduces distortion to variance around the median.
In the uniform case, distortion is exactly $1+4\Exp[(O_\infty-\frac{1}{2})^2]$, and a one-step second-moment recursion gives the upper and lower bounds.
For general centered symmetric distributions, the same recursion is bounded using symmetry and the variance of a fresh bliss point.

\sbpara{Simulations.} We complement the theoretical analysis with simulations for several Beta distributions in Section~\ref{sec:simulations} and Appendix~\ref{app:more_simulations}.
The simulations show the same message: stronger anchoring slows convergence, concentrates the stationary distribution, and lowers stationary distortion in the simulated instances.
Overall, anchoring trades convergence speed for stationary stability: it slows mixing, but it can improve stationary outcome quality for natural distributions.

\subsection{Further related work}
\label{sec:further-literature-review}
\sbpara{Anchoring bias.}
Anchoring was first studied by Tversky and Kahneman~\cite{tversky1974judgment}: when people estimate a quantity after being shown an arbitrary value, their judgments remain biased toward that value.
The effect is robust across domains, tasks, expertise, motivation, and cognitive load~\cite{furnham2011literature}, and arises in both individual~\cite{samuelson1988status} and cooperative group~\cite{wilde2018anchoring} decision making.
Anchors can stem from advice taking~\cite{baileyMetaanalysisWeightAdvice2023a,yanivEgocentricDiscountingReputation2000}, social influence~\cite{degroot1974reaching,friedkin1990social,moussaidSocialInfluenceCollective2013}, or negotiation, where manipulated listing prices cause bias even for expert real-estate agents~\cite{northcraft1987experts} and first offers may shape final agreements~\cite{galinsky2001first}.
Most relevant to our model, Lieder et al.~\cite{lieder2018empirical} empirically show that anchoring bias grows linearly with the distance between the anchor and the correct value, motivating our linear update rule.
For deliberation specifically, Hartmann and Rafiee Rad~\cite{hartmann2020anchoring} and Braun et al.~\cite{braun2025anchoring} model deliberation as preference updating and show that outcomes can disproportionately resemble initial speakers' preferences, exhibiting anchoring as a structural bias that arises even among rational participants.

\sbpara{Nash bargaining.}
Bargaining is a fundamental concept in cooperative game theory.
Well-known solutions include the Nash bargaining solution~\cite{nash1953two}, the Kalai--Smorodinsky solution~\cite{kailai1975other}, and the egalitarian and utilitarian solutions~\cite{mas1995microeconomic}.
Nash bargaining has been widely adopted in theoretical analysis, as it is the only solution satisfying all of the following properties: (1) invariance to the origins and units of the utilities, (2) Pareto optimality, (3) symmetry, and (4) independence of irrelevant alternatives~\cite{nash1953two}.

\sbpara{Sequential deliberation and distortion.}
Fain et al.~\cite{fain2017sequential} introduced sequential deliberation as a simple and decentralized mechanism for social choice on median graphs.
Related work studies large-scale deliberative decision making and metric deliberation mechanisms~\cite{goelLargeScaleDeliberativeDecisionMaking2016,goel2025metric}.
Our work keeps the pairwise bargaining structure of sequential deliberation but adds anchoring as a behavioral feature of the current outcome.
The notion of \emph{distortion} was introduced by Procaccia and Rosenschein~\cite{procaccia2006distortion} in the voting setting to quantify the loss from choosing based on limited preference information rather than full utilities.
Metric distortion asks for similar guarantees when voters and alternatives lie in an unknown metric space~\cite{anshelevich2018approximating}.
A long line of work has improved the best achievable distortion bounds for deterministic and randomized voting rules~\cite{anshelevich2018approximating,munagala2019improved,gkatzelis2020resolving,kizilkaya2022plurality,kizilkaya2023generalized,charikar2024breaking}.

Conceptually close to our framework is the model of reality-aware social choice~\cite{shapiro2018incorporating,shapiro2026constitutional} proposed by Shapiro and Talmon.
Both models study how participants' preferences can be aggregated to update a status quo, but they use different aggregation methods. In particular, our model does not require the system to know all participants' explicit positions in the space, whereas Shapiro and Talmon's framework requires the disclosure of each participant's ideal element and the proposal's explicit position to obtain a generalized median.

\subsection{Notation}
Let $G$ be a population cumulative distribution on $[0,1]$, and let $U,V \stackrel{\mathrm{i.i.d.}}{\sim} G$.
Fix an anchoring parameter $\lambda \in [0,1)$. Conditioned on the previous outcome $O_{t-1}$, we define the anchored stances
$\Utilde \coloneqq (1-\lambda)U+\lambda O_{t-1}$ and $\Vtilde \coloneqq (1-\lambda)V+\lambda O_{t-1}$.
When we specify the previous outcome $O_{t-1}=x$, we write the corresponding updated stances as $\Utilde_x$ and $\Vtilde_x$.
Let $K_{\lambda,G}$ denote the one-step transition kernel
$K_{\lambda,G}(x,A)\coloneqq \Pr\left((1-\lambda)\med(U,V,x)+\lambda x\in A\right)$
for every Borel set $A\subseteq[0,1]$.
When there is no ambiguity, we write $K$ for $K_{\lambda,G}$ and $K^t$ for the $t$-step transition kernel.
We use $\delta_x K$, equivalently $K(x,\cdot)$, for the one-step distribution starting from $x$.

Because of space constraints, we omit the definitions and properties of the $1$-Wasserstein distance, those of the total-variation distance, and the mixing-time definitions; we refer to Appendix~\ref{app:omitted-definitions}.
We omit the one-step transition law and refer to Appendix~\ref{app:one-step-transition-law}.
All the omitted proofs are presented in the appendix.

\section{Contraction analysis}
\label{sec:transition-rule-and-mixing-time}
Our main result in this section gives the $1$-Wasserstein mixing-time bounds in Theorem~\ref{thm:wasserstein-mixing-time}; we first state the theorem and then present the coupling argument that proves it.
We put omitted proofs in Appendix~\ref{sec:omitted-proof-transition-rule-and-mixing-time}, where we also present self-contained results for the uniform distribution, including the one-step minorization argument.

\begin{theorem}
    \label{thm:wasserstein-mixing-time}
    Let $G$ be any population distribution on $[0,1]$, and let $0\le \lambda<1$.
    The chain defined by Equation~\eqref{eq:bargain-step-outcome} has a unique stationary distribution $\pi_{\lambda,G}$, and for every $0<\varepsilon<1$,
    $
    t_{\mathrm{mix}}^{(W_1)}(\varepsilon)
    \le
    \left\lceil
    \frac{\log(1/\varepsilon)}{\log(2/(1+\lambda))}
    \right\rceil.
    $
    If $0<\lambda<1$, then for every $0<\varepsilon<\frac{1}{2}$,
    $
    \left\lceil
    \frac{\log(1/(2\varepsilon))}
    {\log(1/\lambda)}
    \right\rceil
    \le
    t_{\mathrm{mix}}^{(W_1)}(\varepsilon).
    $
\end{theorem}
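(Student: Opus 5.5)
We prove the upper bound with a synchronous (monotone) coupling and the lower bound by running the same coupling from the two endpoints.

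\textbf{Reduction of the update.} Since the anchoring map $z\mapsto(1-\lambda)z+\lambda x$ is increasing and fixes $x$, the median commutes with it:
\[\med(\Utilde_x,\Vtilde_x,x)=(1-\lambda)\med(U,V,x)+\lambda x.\]
This gives Equation~\eqref{eq:simplified-update}, so we may write $O_t=F_{U,V}(O_{t-1})$ with $F_{U,V}(x)=(1-\lambda)m_{U,V}(x)+\lambda x$. Here $m_{U,V}(x)=\med(U,V,x)$ is the clamp of $x$ to the interval $[\min(U,V),\max(U,V)]$.

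\textbf{Coupling and contraction.} We run two copies $X_t,Y_t$ with the same sampled pair $(U,V)$ in every round. The clamp is nondecreasing and $1$-Lipschitz, so $X_t\le Y_t$ implies $X_{t+1}\le Y_{t+1}$. Moreover, $m(Y)-m(X)=|[X,Y]\cap[\min(U,V),\max(U,V)]|$, which yields
\[Y_{t+1}-X_{t+1}=\lambda(Y_t-X_t)+(1-\lambda)\,|[X_t,Y_t]\cap I_{U,V}|.\]
The key estimate is that for fixed $x<y$,
\[\Exp|[x,y]\cap I_{U,V}|=\int_x^y \Pr(\min(U,V)\le z\le\max(U,V))\,dz.\]
The probability that a point $z$ lies between two i.i.d.\ draws is $1-\Pr(U,V<z)-\Pr(U,V>z)\le 1-G(z^-)^2-(1-G(z))^2\le 1/2$, using $a^2+b^2\ge (a+b)^2/2$ with $a+b\ge 1$ up to atoms. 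Atoms are the delicate point. If $G$ has an atom at $z$, then $\Pr(U=V=z)>0$, but with closed intervals the probability becomes $1-\Pr(U,V<z)-\Pr(U,V>z)$ where $\Pr(U,V<z)+\Pr(U,V>z)$ can be small. The saving is that only Lebesgue-a.e.\ $z$ matters in the integral, and $G$ has at most countably many atoms, so for a.e.\ $z$ we have $G(z^-)=G(z)$ and the bound $1/2$ holds. Hence
\[\Exp[Y_{t+1}-X_{t+1}\mid X_t,Y_t]\le \frac{1+\lambda}{2}(Y_t-X_t).\]

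\textbf{Upper bound.} For arbitrary starting points, or an initial coupling of two initial laws, we order the pair or handle both orders symmetrically; the one-step bound applies to $|Y_t-X_t|$. It follows that
\[W_1(\mu K^t,\nu K^t)\le \left(\tfrac{1+\lambda}{2}\right)^t W_1(\mu,\nu)\le \left(\tfrac{1+\lambda}{2}\right)^t,\]
since the diameter is $1$. Thus $K$ is a strict contraction on the complete space $(\mathcal P([0,1]),W_1)$. By Banach's fixed point theorem it has a unique fixed point $\pi_{\lambda,G}$, and taking $\nu=\pi$ gives the stated mixing bound via $\left(\tfrac{1+\lambda}{2}\right)^t\le\varepsilon$.

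\textbf{Lower bound.} We run the coupling from $X_0=0$ and $Y_0=1$. The overlap term is nonnegative, so $Y_t-X_t\ge\lambda^t$ deterministically, and monotonicity gives $Y_t\ge X_t$. Taking the $1$-Lipschitz test function $f(z)=z$ in Kantorovich--Rubinstein (Theorem~\ref{thm:kantorovich-rubinstein}) yields
\[W_1(\delta_0K^t,\delta_1K^t)\ge \Exp[Y_t]-\Exp[X_t]\ge\lambda^t.\]
By the triangle inequality, $\max_{x\in\{0,1\}}W_1(\delta_xK^t,\pi)\ge \lambda^t/2$. This exceeds $\varepsilon$ whenever $t<\log(1/(2\varepsilon))/\log(1/\lambda)$, which gives the ceiling bound. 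The main obstacle is the $1/2$ bound on the between-probability for general $G$ with atoms; the rest is routine.
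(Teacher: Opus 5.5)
Your proposal is correct and follows the paper's proof essentially step for step. Both use the synchronous monotone coupling with the overlap identity, handle atoms via Fubini and a.e.\ continuity of $G$ to get the $\frac{1}{2}$ bound, obtain uniqueness from Banach's fixed-point theorem on $(\mathcal{P}([0,1]),W_1)$, and prove the lower bound with the endpoint coupling ($Y_t-X_t\ge\lambda^t$), Kantorovich--Rubinstein with $f(z)=z$, and the triangle inequality.
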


    Next, we show that there exists a population distribution $G$ such that the $W_1$-mixing-time upper bound is tight up to a factor of $2$.
    The construction is a symmetric two-cluster population at the endpoints.

    \begin{lemma}
    \label{lem:wasserstein-rate-tight}
    For every $0\le\lambda<1$, there exists a population distribution $G$ such that, for every $t\ge0$,
    $
        \sup_{x\in[0,1]}W_1(\delta_xK^t,\pi_{\lambda,G})
        \ge
        \frac{1}{2}\left(\frac{1+\lambda}{2}\right)^t.
    $
    Consequently,
    $t_{\mathrm{mix}}^{(W_1)}(\varepsilon) \ge \left\lceil \frac{\log(1/(2\varepsilon))}{\log(2/(1+\lambda))} \right\rceil$ for every $0<\varepsilon<\frac{1}{2}$.
    \end{lemma}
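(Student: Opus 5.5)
We take $G$ to be the symmetric two-cluster population at the endpoints, $G=\tfrac12\delta_0+\tfrac12\delta_1$. The plan is to compute the conditional mean of the chain exactly and then lower-bound $W_1$ by testing against the $1$-Lipschitz function $f(y)=y$, using Kantorovich--Rubinstein duality (Theorem~\ref{thm:kantorovich-rubinstein}).

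\textbf{One-step law.} With $U,V\in\{0,1\}$ i.i.d.\ fair, the median $\med(U,V,x)$ for any $x\in[0,1]$ takes three values:
\begin{itemize}
\item it equals $0$ with probability $1/4$ (when $U=V=0$);
\item it equals $1$ with probability $1/4$ (when $U=V=1$);
\item it equals $x$ with probability $1/2$ (when $U\neq V$).
\end{itemize}
By Equation~\eqref{eq:simplified-update}, from $O_{t-1}=x$ the next outcome is therefore $\lambda x$, $(1-\lambda)+\lambda x$, or $x$ with these probabilities. Hence
\[
\Exp[O_t\mid O_{t-1}=x]=\tfrac{1+\lambda}{2}\,x+\tfrac{1-\lambda}{4},
\qquad\text{equivalently}\qquad
\Exp[O_t\mid O_{t-1}=x]-\tfrac12=\tfrac{1+\lambda}{2}\left(x-\tfrac12\right).
\]
Since the conditional mean is affine in $x$, iterating with the tower property gives
\[
\Exp_x[O_t]-\tfrac12=\left(\tfrac{1+\lambda}{2}\right)^t\left(x-\tfrac12\right).
\]

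\textbf{Stationary mean.} By Theorem~\ref{thm:wasserstein-mixing-time}, the stationary law $\pi_{\lambda,G}$ exists and is unique. Integrate the affine identity against $\pi_{\lambda,G}$ and use stationarity. With $m=\int y\,d\pi_{\lambda,G}$, this gives $m-\tfrac12=\tfrac{1+\lambda}{2}(m-\tfrac12)$. Because $\tfrac{1+\lambda}{2}<1$, it follows that $m=\tfrac12$.

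\textbf{Lower bound on $W_1$.} The function $f(y)=y$ is $1$-Lipschitz, so duality yields
\[
W_1(\delta_xK^t,\pi_{\lambda,G})\ge\left|\Exp_x[O_t]-m\right|=\left(\tfrac{1+\lambda}{2}\right)^t\left|x-\tfrac12\right|.
\]
Choosing $x=0$ (or $x=1$) gives $\sup_x W_1(\delta_xK^t,\pi_{\lambda,G})\ge\tfrac12\left(\tfrac{1+\lambda}{2}\right)^t$ for every $t\ge0$. This includes $\lambda=0$, where the rate is $1/2$.

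\textbf{Mixing-time consequence.} We use the definition of $t_{\mathrm{mix}}^{(W_1)}(\varepsilon)$ in Appendix~\ref{app:omitted-definitions}, namely the least $t$ with $\sup_x W_1(\delta_xK^t,\pi_{\lambda,G})\le\varepsilon$. At such a $t$ we must have $\tfrac12\left(\tfrac{1+\lambda}{2}\right)^t\le\varepsilon$. Rearranging gives $t\ge\log(1/(2\varepsilon))/\log(2/(1+\lambda))$, and since $t$ is an integer we can take the ceiling; for $\varepsilon<1/2$ this bound is positive.

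The only mildly delicate points are pinning down the stationary mean without computing $\pi_{\lambda,G}$, which the stationarity argument above handles, and matching the precise form of the mixing-time definition; neither should pose a real obstacle.
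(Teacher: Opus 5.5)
Your proposal is correct and follows essentially the same route as the paper: the same two-point population $G=\tfrac12\delta_0+\tfrac12\delta_1$, the same affine conditional-mean recursion giving stationary mean $\tfrac12$, and the same Kantorovich--Rubinstein bound with $f(y)=y$ from an endpoint start. Your rewriting of the recursion as $\Exp_x[O_t]-\tfrac12=\left(\tfrac{1+\lambda}{2}\right)^t\left(x-\tfrac12\right)$ is just a slightly cleaner packaging of the paper's computation.
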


\begin{figure}[t]
    \centering
    \begin{tikzpicture}[
        x=0.82cm,
        y=0.72cm,
        >=Stealth,
        base/.style={line width=0.35pt},
        chain/.style={circle, fill=black, inner sep=1.5pt},
        drawpt/.style={circle, draw=black, fill=white, inner sep=1.4pt},
        nextpt/.style={diamond, draw=black, fill=black!12, inner sep=1.5pt},
        moveX/.style={->, line width=1pt, color=red!70},
        moveY/.style={->, line width=1pt, color=blue!70},
        caselabel/.style={anchor=east, align=right, font=\small},
        pointlabel/.style={font=\scriptsize}
    ]
        \def\x{3.4}
        \def\y{6.6}

        \foreach \yy/\case/\formula in {
            0/{Case 1: same side outside}/{$Z_{t+1}=\lambda z$},
            -2.0/{Case 2: one inside}/{$Z_{t+1}=(1-\lambda)(R-X_t)+\lambda z$},
            -4.0/{Case 3: both inside}/{$Z_{t+1}=(1-\lambda)(R-L)+\lambda z$},
            -6.0/{Case 4: opposite sides}/{$Z_{t+1}=z$}
        }{
            \draw[base] (0,\yy) -- (10,\yy);
            \node[pointlabel, below] at (0,\yy) {$0$};
            \node[pointlabel, below] at (10,\yy) {$1$};
            \node[caselabel] at (-0.35,\yy) {\case\\\formula};
            \node[chain, label={[pointlabel]above:$X_t$}] at (\x,\yy) {};
            \node[chain, label={[pointlabel]above:$Y_t$}] at (\y,\yy) {};
        }

        \node[drawpt, label={[pointlabel]above:$L$}] at (1.15,0) {};
        \node[drawpt, label={[pointlabel]above:$R$}] at (2.05,0) {};
        \node[nextpt, label={[pointlabel]below:$X_{t+1}$}] at (2.60,-0.35) {};
        \node[nextpt, label={[pointlabel]below:$Y_{t+1}$}] at (3.90,-0.35) {};
        \draw[moveX] (2.05,-0.11) -- (2.60,-0.11);
        \draw[moveY] (2.05,-0.18) -- (3.90,-0.18);

        \node[drawpt, label={[pointlabel]above:$L$}] at (1.30,-2) {};
        \node[drawpt, label={[pointlabel]above:$R$}] at (4.75,-2) {};
        \node[nextpt, label={[pointlabel]below:$X_{t+1}=X_t$}] at (\x,-2.35) {};
        \node[nextpt, label={[pointlabel]below:$Y_{t+1}$}] at (5.50,-2.35) {};
        \draw[moveY] (4.75,-2.18) -- (5.50,-2.18);

        \node[drawpt, label={[pointlabel]above:$L$}] at (4.25,-4) {};
        \node[drawpt, label={[pointlabel]above:$R$}] at (5.55,-4) {};
        \node[nextpt, label={[pointlabel]below:$X_{t+1}$}] at (3.90,-4.35) {};
        \node[nextpt, label={[pointlabel]below:$Y_{t+1}$}] at (5.95,-4.35) {};
        \draw[moveX] (4.25,-4.18) -- (3.90,-4.18);
        \draw[moveY] (5.55,-4.18) -- (5.95,-4.18);

        \node[drawpt, label={[pointlabel]above:$L$}] at (1.45,-6) {};
        \node[drawpt, label={[pointlabel]above:$R$}] at (8.55,-6) {};
        \node[nextpt, label={[pointlabel]below:$X_{t+1}=X_t$}] at (\x,-6.35) {};
        \node[nextpt, label={[pointlabel]below:$Y_{t+1}=Y_t$}] at (\y,-6.35) {};
    \end{tikzpicture}
    \caption{The four cases in the coupling argument.  
    The black points $X_t$ and $Y_t$ are fixed in each row, while $U$ and $V$ are sampled from the same distribution $G$; we let $L = \min \{U, V\}$ and $R = \max \{U, V\}$, and demonstrate the four cases with respect to the relative positions of $L$ and $R$ with respect to $X_t$ and $Y_t$. 
    The diamonds show the corresponding next states $X_{t+1}$ and $Y_{t+1}$, we use the red and blue arrows to show the selected medians, and how they are updated and become $X_{t+1}$ and $Y_{t+1}$.
    The $\lambda$ is set to be $0.4$ in this figure.
    }
    \label{fig:coupling-cases}
\end{figure}

The remaining part of this section is devoted to the proof of Theorem~\ref{thm:wasserstein-mixing-time}. 
We start with the coupling argument. 
The update rule in Equation~\eqref{eq:bargain-step-outcome} can be rewritten as a linear combination of the previous outcome and the median of the two bliss points with the previous outcome:
\begin{equation}
    \label{eq:transition}
    \begin{aligned}
        O_t &= \Median(\{\Util, \Vtil, O_{t-1}\})  \\
        &= \Median\bigl(\{(1-\lambda) U + \lambda O_{t-1},\; (1-\lambda) V + \lambda O_{t-1},\; O_{t-1}\}\bigr) \\
        &= (1 - \lambda) \cdot \Median\bigl(U, V, O_{t-1}\bigr) + \lambda \cdot O_{t-1}.
    \end{aligned}
\end{equation}

Couple two copies of the Markov chain, $(X_t,Y_t)$, initialized with $X_0\le Y_0$. The extremal initialization $X_0=0$ and $Y_0=1$ gives the largest possible initial gap. At each step, update both chains with the same pair of bliss points $U_t,V_t\stackrel{\mathrm{i.i.d.}}{\sim}G$; the order of $X_t$ and $Y_t$ is preserved.
At one step of the coupling, write $L=\min\{U,V\}$ and $R=\max\{U,V\}$. Then

\begin{equation}
    \label{eq:general-contractionrule}
    \begin{aligned}
Y_{t+1} - X_{t+1} &= (1 - \lambda) \cdot (\Median\bigl(U, V, Y_{t}\bigr) - \Median\bigl(U, V, X_{t}\bigr) ) + \lambda \cdot (Y_{t} - X_{t}) \\
        & = (1-\lambda) \abs{[L, R] \cap [X_{t}, Y_{t}]} + \lambda (Y_{t} - X_{t}),
    \end{aligned}
\end{equation}
where $\abs{\cdot}$ denotes interval length. Figure~\ref{fig:coupling-cases} illustrates the four cases behind this identity; the uniform case is treated in more detail in Appendix~\ref{sec:uniform-distribution}.
The distance between the coupled chains never increases, because $\abs{[L,R]\cap[X_t,Y_t]}\le Y_t-X_t$. This also preserves the order of the coupled chains.
These geometric properties make the coupling well suited to a $1$-Wasserstein convergence analysis.

The crucial step is to express the overlap length $\abs{[L, R] \cap [X_{t}, Y_{t}]}$ as an integral:
\begin{equation}
    \label{eq:integral-range}
    \abs{[L, R] \cap [X_{t}, Y_{t}]} = \int_{X_{t}}^{Y_{t}} \ind_{L \leq r \leq R} \, dr.
\end{equation}
Taking conditional expectations in Equation~\eqref{eq:general-contractionrule} therefore reduces the computation of the overlap length to the integral of $\Pr(L\le r\le R)$ over $[X_t,Y_t]$.

\begin{lemma}
\label{lem:nonuniform-coupling}
Let $G$ be any population distribution on $[0,1]$.
Let $(X_t,Y_t)$ be the coupling obtained by using the same pair $U_t,V_t\stackrel{\mathrm{i.i.d.}}{\sim}G$ in both updates.
Let $Z_t\coloneqq \abs{X_t-Y_t}$.
If $X_0\le Y_0$, then $X_t\le Y_t$ for every $t$, and
\[
\Exp[Z_{t+1} \mid X_t, Y_t]
=
\lambda Z_t
+
(1-\lambda)\int_{X_t}^{Y_t} 2G(r)\bigl(1-G(r)\bigr)\,dr.
\]
Consequently, $\Exp[Z_{t+1}\mid X_t,Y_t]\le \frac{1+\lambda}{2}Z_t$.
\end{lemma}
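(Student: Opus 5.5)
We plan to prove the lemma by first establishing the pathwise identity in Equation~\eqref{eq:general-contractionrule} together with order preservation, and then taking conditional expectations through the integral representation in Equation~\eqref{eq:integral-range}.

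\textbf{Step 1: order preservation and the pathwise identity.} Fix a realization of $U,V$, write $L=\min\{U,V\}$ and $R=\max\{U,V\}$, and note that $\med(U,V,x)=\min\{\max\{x,L\},R\}$ is the projection of $x$ onto $[L,R]$. This map is nondecreasing and $1$-Lipschitz. For $X_t\le Y_t$ we therefore get
\[
\med(U,V,Y_t)-\med(U,V,X_t)=\abs{[L,R]\cap[X_t,Y_t]}\ge 0 .
\]
Checking this equality is a short case split: both points on the same side outside $[L,R]$, one point inside, both inside, or opposite sides. These are exactly the four cases of Figure~\ref{fig:coupling-cases}. Combined with the rewriting~\eqref{eq:transition} of the update, this gives
\[
Y_{t+1}-X_{t+1}=(1-\lambda)\abs{[L,R]\cap[X_t,Y_t]}+\lambda(Y_t-X_t)\ge 0 .
\]
Order preservation $X_t\le Y_t$ for all $t$ then follows by induction, so $Z_t=Y_t-X_t$ throughout.

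\textbf{Step 2: conditional expectation.} We write the overlap as $\int_{X_t}^{Y_t}\ind_{L\le r\le R}\,dr$ and condition on $(X_t,Y_t)$. Since the fresh pair $(U_t,V_t)$ is independent of $(X_t,Y_t)$, Tonelli lets us exchange expectation and integral. This reduces the problem to computing $\Pr(L\le r\le R)$.

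\textbf{Step 3: the probability $\Pr(L\le r\le R)$ — the main obstacle.} The only delicate point is atoms of $G$, which matter here because the tightness construction of Lemma~\ref{lem:wasserstein-rate-tight} uses point masses at the endpoints. The complement of $\{L\le r\le R\}$ is the disjoint union of $\{R<r\}$ and $\{L>r\}$, so
\[
\Pr(L\le r\le R)=1-G(r^-)^2-(1-G(r))^2 .
\]
Since $G$ has at most countably many atoms, $G(r^-)=G(r)$ for Lebesgue-almost every $r$. On that set the expression equals
\[
1-G(r)^2-(1-G(r))^2=2G(r)\bigl(1-G(r)\bigr),
\]
and the exceptional null set does not affect the integral. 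This yields the stated formula for $\Exp[Z_{t+1}\mid X_t,Y_t]$.

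\textbf{Step 4: the contraction bound.} Because $2p(1-p)\le\tfrac12$ for $p\in[0,1]$, the integral is at most $Z_t/2$. Hence
\[
\Exp[Z_{t+1}\mid X_t,Y_t]\le\lambda Z_t+\frac{1-\lambda}{2}Z_t=\frac{1+\lambda}{2}Z_t .
\]
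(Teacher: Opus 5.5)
Your proposal is correct and follows essentially the same route as the paper. You use the pathwise overlap identity from the rewritten transition, Fubini/Tonelli to reduce to $\Pr(L\le r\le R)=1-G(r^-)^2-(1-G(r))^2$, the almost-everywhere equality with $2G(r)(1-G(r))$ because $G$ has only countably many jumps, and then $2p(1-p)\le\frac12$. Viewing $\med(U,V,x)$ as the projection of $x$ onto $[L,R]$ is a clean way to justify the overlap identity and order preservation, but it does not change the argument.
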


The bound $\Exp[Z_{t+1} \mid X_t, Y_t] \leq \frac{1+\lambda}{2}Z_t$ follows immediately from $2G(r)(1-G(r))\le \frac{1}{2}$ for all $r\in[0,1]$. Taking expectations gives
$
\Exp[Z_{t+1}]
\le
\lambda \Exp[Z_t]+\frac{1-\lambda}{2}\Exp[Z_t]
=
\frac{1+\lambda}{2}\Exp[Z_t].
$
This coupling controls the expected distance between two copies of the chain, and hence gives a $1$-Wasserstein bound. It does not directly imply a total-variation bound, because the coupled chains become closer but their distances never become zero purely through this specific coupling analysis.

\begin{corollary}
\label{cor:wasserstein-mixing}
For every $0\le \lambda<1$ and every pair of probability measures $\mu,\nu$ on $[0,1]$, $W_1(\mu K,\nu K) \le \frac{1+\lambda}{2}\,W_1(\mu,\nu).$
The Markov chain admits a unique stationary distribution $\pi_{\lambda,G}$.
\end{corollary}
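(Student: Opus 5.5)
The proof has two parts: the contraction inequality, obtained by lifting Lemma~\ref{lem:nonuniform-coupling} from point masses to arbitrary initial laws, and then existence and uniqueness of the stationary distribution via completeness of the Wasserstein space.

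\textbf{Contraction.} Given $\mu,\nu$ on $[0,1]$, take an optimal coupling $(X_0,Y_0)$ of $\mu$ and $\nu$ for $W_1$. On the real line the monotone (quantile) coupling is optimal, so we may assume $X_0\le Y_0$ almost surely and $W_1(\mu,\nu)=\Exp[Y_0-X_0]$. Run one step of the coupling of Lemma~\ref{lem:nonuniform-coupling}, using the same fresh pair $U,V\stackrel{\mathrm{i.i.d.}}{\sim}G$ in both copies, independent of $(X_0,Y_0)$. Then $X_1\sim\mu K$ and $Y_1\sim\nu K$. Lemma~\ref{lem:nonuniform-coupling} gives the conditional bound $\Exp[Z_1\mid X_0,Y_0]\le\frac{1+\lambda}{2}Z_0$. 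Taking expectations,
\[
W_1(\mu K,\nu K)\le \Exp|X_1-Y_1|\le \tfrac{1+\lambda}{2}\,\Exp[Y_0-X_0]=\tfrac{1+\lambda}{2}\,W_1(\mu,\nu).
\]
The only measure-theoretic point is that the update map $(x,u,v)\mapsto(1-\lambda)\med(u,v,x)+\lambda x$ is continuous. Hence $(X_1,Y_1)$ is a measurable function of $(X_0,Y_0,U,V)$, and its marginals are $\mu K$ and $\nu K$.

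\textbf{Stationary distribution.} Since $(1+\lambda)/2<1$, the map $\mu\mapsto\mu K$ is a strict contraction on the space $\mathcal P([0,1])$ equipped with $W_1$. This space is complete, indeed compact, because $[0,1]$ is a compact metric space and $W_1$ metrizes weak convergence there. By the Banach fixed-point theorem, $K$ has a unique fixed point $\pi_{\lambda,G}$.

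Existence alone could also be obtained by Krylov--Bogoliubov, using that $K$ is Feller by continuity of the update map. Uniqueness, however, follows directly from the contraction: if $\pi$ and $\pi'$ are both stationary, then
\[
W_1(\pi,\pi')=W_1(\pi K,\pi' K)\le\tfrac{1+\lambda}{2}\,W_1(\pi,\pi'),
\]
which forces $W_1(\pi,\pi')=0$.

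The main obstacle is just justifying that a monotone optimal coupling exists so that the ordering hypothesis $X_0\le Y_0$ of Lemma~\ref{lem:nonuniform-coupling} applies. If that is inconvenient, there is a fallback. For an arbitrary optimal coupling, apply the bound pointwise to the ordered pair $(\min\{X_0,Y_0\},\max\{X_0,Y_0\})$. The update with shared $U,V$ is symmetric in the roles of the two copies, so $|X_1-Y_1|$ depends only on the unordered pair $\{X_0,Y_0\}$, and the same contraction estimate holds.
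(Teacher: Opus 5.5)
Your argument is essentially the paper's: a synchronous coupling with a shared fresh pair $U,V$, the one-step bound from Lemma~\ref{lem:nonuniform-coupling}, passage to $W_1$, and Banach's fixed-point theorem on the complete space $(\mathcal{P}([0,1]),W_1)$.

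One step in your main line is false, however. The quantile coupling is optimal for $W_1$ on the line, but it gives $X_0\le Y_0$ almost surely only when $\mu$ is stochastically dominated by $\nu$. For example, $\mu=\frac12\delta_0+\frac12\delta_1$ and $\nu=\delta_{1/2}$ admit no ordered coupling at all. So "we may assume $X_0\le Y_0$ a.s." cannot be justified that way.

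Your fallback is the correct argument, and it is exactly what the paper does. After relabeling, the lemma gives $\Exp|X_1-Y_1|\le\frac{1+\lambda}{2}|x-y|$ for every deterministic starting pair $x,y$. You can therefore start from an arbitrary coupling of $\mu$ and $\nu$ and take the infimum over couplings. Neither optimality nor ordering of the initial coupling is needed.
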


Iterating this contraction gives the upper bound in Theorem~\ref{thm:wasserstein-mixing-time}. The lower bound uses the same monotone coupling from the extremal initial states $0$ and $1$: the gap satisfies $Y_t-X_t\ge \lambda^t$, so the Kantorovich--Rubinstein duality theorem (Theorem~\ref{thm:kantorovich-rubinstein}) implies
$
W_1(\delta_0K^t,\delta_1K^t)
\ge
\abs{\Exp[Y_t]-\Exp[X_t]}
\ge
\lambda^t.
$
By applying the triangle inequality, we get the lower bound in Theorem~\ref{thm:wasserstein-mixing-time}.

\section{Stationary distortion bound}
\label{sec:worst-case}
We now compare the stationary distortions of the anchored processes under different anchoring parameters. 
For any population distribution, the stationary social cost weakly decreases as the anchoring parameter $\lambda$ increases.
Nevertheless, the same worst-case distortion bound $\frac{1+\sqrt{2}}{2}$ remains tight for every fixed $\lambda<1$.

\begin{theorem}
    \label{thm:anchored-worst-case}
    Let $G$ be any population distribution on $[0,1]$. For every $0\le \lambda_2\le \lambda_1<1$, 
let $X_{i,\infty}\sim\pi_{\lambda_i,G}$ be a stationary outcome with anchoring parameter $\lambda_i$. Then
$\mathbb{E}\!\left[\mathsf{SC}(X_{1, \infty})\right] \le \mathbb{E}\!\left[\mathsf{SC}(X_{2, \infty})\right]$.
\end{theorem}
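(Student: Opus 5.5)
The plan is a Poisson-equation argument for the $\lambda_2$-chain, combined with the fact that a $\lambda_1$-step is a convex combination of the current state and a $\lambda_2$-step. Write $K_i\coloneqq K_{\lambda_i,G}$ and $c\coloneqq \Exp[\SCost(X_{2,\infty})]$. Define
\[
\varphi(x)\coloneqq \sum_{t\ge 0}\bigl(K_2^t\SCost(x)-c\bigr).
\]
The series converges absolutely and uniformly for the following reason. $\SCost$ is $1$-Lipschitz, so by Corollary~\ref{cor:wasserstein-mixing} and Kantorovich--Rubinstein duality,
\[
\abs{K_2^t\SCost(x)-c}\le W_1(\delta_xK_2^t,\pi_{\lambda_2,G})\le \left(\tfrac{1+\lambda_2}{2}\right)^t .
\]
Telescoping then gives the Poisson equation $\varphi-K_2\varphi=\SCost-c$.

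Next I integrate the Poisson equation against $\pi_{\lambda_1,G}$. Let $X_{1,\infty}\sim\pi_{\lambda_1,G}$ and $U,V\stackrel{\mathrm{i.i.d.}}{\sim}G$ be independent of it, and set
\[
M_1\coloneqq(1-\lambda_2)\Median(X_{1,\infty},U,V)+\lambda_2X_{1,\infty}.
\]
Then $\Exp[\SCost(X_{1,\infty})]-c=\Exp[\varphi(X_{1,\infty})]-\Exp[\varphi(M_1)]$, which is exactly the identity announced in the introduction.

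To show the right-hand side is $\le 0$, use Equation~\eqref{eq:simplified-update}. One $\lambda_1$-step from $X_{1,\infty}$ with the same $U,V$ is
\[
M'=\alpha X_{1,\infty}+(1-\alpha)M_1,\qquad \alpha=\tfrac{\lambda_1-\lambda_2}{1-\lambda_2}\in[0,1).
\]
By stationarity of $\pi_{\lambda_1,G}$ under $K_1$, $\Exp\varphi(X_{1,\infty})=\Exp\varphi(M')$. If $\varphi$ is convex, then
\[
\Exp\varphi(M')\le\alpha\Exp\varphi(X_{1,\infty})+(1-\alpha)\Exp\varphi(M_1).
\]
Rearranging and dividing by $1-\alpha>0$ gives $\Exp\varphi(X_{1,\infty})\le\Exp\varphi(M_1)$, which is the theorem.

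The main obstacle is proving that $\varphi$ is convex. It suffices to show each $K_2^t\SCost$ is convex, since uniform limits of convex functions are convex. However, $x\mapsto\Median(x,u,v)$ is a clamp and not convex, so $K_2$ does not preserve arbitrary convex functions. I would instead work with derivatives along the monotone coupling of Lemma~\ref{lem:nonuniform-coupling}. Write $X_t^x$ for the chain started at $x$ and driven by common bliss points. Then $x\mapsto X_t^x$ is nondecreasing and piecewise linear, with slope
\[
\partial_xX_t^x=\prod_{s<t}\bigl(\lambda_2+(1-\lambda_2)\ind\{L_s\le X_s^x\le R_s\}\bigr).
\]
Hence
\[
(K_2^t\SCost)'(x)=\Exp\bigl[(2G(X_t^x)-1)\,\partial_xX_t^x\bigr].
\]
I would try to show this is nondecreasing in $x$ by comparing $x<y$ pathwise. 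The idea is to pair each bliss-point sample with its reflected counterpart (swap which side of the current states the draws fall), so that whenever the $y$-path has the larger slope factor it also sits where $2G-1$ is larger.

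If that pathwise comparison fails, my fallback is an invariant class. I would show that $K_2$ maps the class $\{f \text{ convex}: f' \text{ is a nondecreasing function dominated appropriately by } 2G-1\}$ into itself, checking this one step at a time via the explicit four-case formula for $\Median$ (Figure~\ref{fig:coupling-cases}). Then $\SCost$, whose derivative is $2G-1$, lies in the class, and so does every $K_2^t\SCost$.
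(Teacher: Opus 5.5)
Your architecture matches the paper's: the identity $\Exp[\SCost(X_{1,\infty})]-\Exp[\SCost(X_{2,\infty})]=\Exp\varphi(X_{1,\infty})-\Exp\varphi(M_1)$, the decomposition of a $\lambda_1$-step as $\alpha X_{1,\infty}+(1-\alpha)M_1$, stationarity, and Jensen. The paper uses truncated sums $\varphi_n^{\lambda_2}=(1-\lambda_2)\sum_{i<n}K_2^i\SCost$ and a limit instead of your centered Poisson solution, but that difference is only cosmetic. The gap is at the step you flag. Your reduction there is to a false statement: the individual terms $K_2^t\SCost$ are \emph{not} convex in general. Take $\lambda_2=0$ and $G$ uniform. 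Then $(K_2\SCost)'(x)=\SCost'(x)\Pr(L\le x\le R)=(2x-1)\cdot 2x(1-x)$. This equals $0$ at $x=0$ and $-0.144$ at $x=0.1$, so $K_2\SCost$ is concave near $0$. The failure is not special to $\lambda_2=0$. By Lemma~\ref{lem:centered-second-moment}, for uniform $G$ one has $K_{\lambda,G}\SCost(x)=\frac14+\frac{(1-\lambda)^2}{48}+\frac{1+\lambda^2}{2}y^2-\frac{(1-\lambda)(3+\lambda)}{3}y^4$ with $y=\abs{x-\frac12}$. Its second derivative at $y=\frac12$ is $2(\lambda^2+\lambda-1)<0$ for every $\lambda<\frac{\sqrt5-1}{2}$. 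Hence your pathwise comparison cannot work: already at $t=1$, the slope factor $\ind\{L\le x\le R\}$ is more likely to equal $1$ at larger $x$ exactly where $2G-1<0$. The fallback fails for the same reason. Since $\SCost$ is convex and $K_2\SCost$ is not, no class of convex functions containing $\SCost$ is invariant under $K_2$ alone.

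The missing idea is that convexity holds only for the cumulative sum. It is produced by the fresh $\SCost$ term in the recursion $\varphi=\SCost-c+K_2\varphi$, or $\varphi_{n+1}^{\lambda_2}=(1-\lambda_2)\SCost+K_2\varphi_n^{\lambda_2}$ for partial sums. Write $\SCost(x)=\Exp[\frac12(\abs{x-U}+\abs{x-V})]$, using the same pair that drives $K_2$. Fix $L\le R$ and let $f$ be convex with one-sided slopes in $[-\frac{1}{1-\lambda_2},\frac{1}{1-\lambda_2}]$. Then $x\mapsto\frac12(\abs{x-L}+\abs{x-R})+f\bigl(\lambda_2x+(1-\lambda_2)\med(x,L,R)\bigr)$ is convex. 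Its slope is nondecreasing on each of the three pieces. At $L$ the slope jumps by at least $1+(1-\lambda_2)f'(L^-)\ge0$, and at $R$ by at least $1-(1-\lambda_2)f'(R^+)\ge0$. So the kinks of $\abs{x-L}$ and $\abs{x-R}$ absorb the concave kinks of the clamp. The same slope computation shows the Lipschitz bound reproduces itself. Induction on partial sums then gives convexity of every $\sum_{i<n}K_2^i\SCost-nc$, and hence of your $\varphi$ as their uniform limit. (The paper rescales by $1-\lambda_2$ so the slopes stay in $[-1,1]$.) In short, the invariant class must be defined for the affine map $f\mapsto\SCost+K_2f$, with a Lipschitz side condition rather than domination by $2G-1$. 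With that replacement, the rest of your argument goes through.
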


The next lemma restates the unanchored stationary distortion bound on the line. It is consistent with Theorem~1 of Fain et al.~\cite{fain2017sequential}.

\begin{lemma}
\label{lem:unanchored-line-benchmark}
Let $X_{0, \infty}\sim\pi_{0,G}$ be the stationary outcome of the unanchored process. Then
\[
\Exp[\SCost(X_{0, \infty})]
\le
\frac{1 + \sqrt{2}}{2} \min_a \SCost(a).
\]
\end{lemma}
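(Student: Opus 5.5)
The plan is a direct computation on the line: find the stationary law of the unanchored chain in closed form through its CDF, write both the stationary social cost and the optimal social cost as integrals over thresholds $r\in[0,1]$, and compare the two integrands pointwise.

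\textbf{Stationary CDF.} Let $X\sim\pi_{0,G}$, which is unique by Corollary~\ref{cor:wasserstein-mixing} with $\lambda=0$. Let $U,V\sim G$ be independent of $X$. Fix $r\in[0,1]$ and write $g=G(r)$ and $p=\Pr(X\le r)$. The median of three points is at most $r$ exactly when at least two of the points are at most $r$. By independence,
\[
\Pr\bigl(\med(X,U,V)\le r\bigr)=p\bigl(1-(1-g)^2\bigr)+(1-p)g^2 .
\]
Stationarity says this equals $p$, so $p\bigl(1-2g+2g^2\bigr)=g^2$. The denominator $g^2+(1-g)^2$ is at least $1/2$, so we can solve:
\[
\Pr(X\le r)=\frac{g^2}{g^2+(1-g)^2}.
\]
This argument uses only right-continuous CDFs, so atoms in $G$ cause no difficulty.

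\textbf{Integral formulas.} Let $W\sim G$ be independent of $X$. Use the layer-cake identity $|a-b|=\int_0^1 \bigl(\ind_{a\le r<b}+\ind_{b\le r<a}\bigr)\,dr$. This gives
\[
\Exp[\SCost(X)]=\int_0^1 \bigl[p(1-g)+(1-p)g\bigr]\,dr=\int_0^1 \frac{g(1-g)}{g^2+(1-g)^2}\,dr .
\]
Here the simplification uses $p(1-g)+(1-p)g=\frac{g^2(1-g)+(1-g)^2g}{g^2+(1-g)^2}$. For a median $a^*$ of $G$, the same identity gives $\min_a\SCost(a)=\SCost(a^*)=\int_0^1\min\{g,1-g\}\,dr$. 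Indeed, for $r<a^*$ the integrand is $\Pr(W>r)=1-g\ge\dots$, and more precisely the integrand is $g$ below the median and $1-g$ above it, which equals $\min\{g,1-g\}$ up to the boundary convention at $r=a^*$, a measure-zero set.

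\textbf{Pointwise comparison.} It then suffices to show, for every $g\in[0,1]$,
\[
\frac{g(1-g)}{g^2+(1-g)^2}\le\frac{1+\sqrt2}{2}\min\{g,1-g\}.
\]
The left side is invariant under $g\mapsto 1-g$, so we may assume $g\le 1/2$ and set $s=1-g\in[1/2,1]$. For $g=0$ the inequality is trivial. Otherwise we must bound $\phi(s)=s/(2s^2-2s+1)$. Its derivative has the sign of $1-2s^2$, so $\phi$ is maximized at $s=1/\sqrt2$. There $\phi(1/\sqrt2)=\frac{1/\sqrt2}{2-\sqrt2}=\frac{1}{2\sqrt2-2}=\frac{1+\sqrt2}{2}$. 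Integrating the pointwise bound over $r$ gives the lemma.

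\textbf{Main obstacle.} The step needing the most care is the stationary-CDF identity. It needs the independence of $X$ from the fresh pair $(U,V)$, which holds by construction of the chain, and it needs the "at least two of three" characterization of the median to remain valid when there are ties or atoms. I would check that characterization directly for non-strict inequalities. Everything after that is a one-variable calculus bound.
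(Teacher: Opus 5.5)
Your proposal is correct and follows essentially the same route as the paper: it derives the stationary CDF $G^2/(G^2+(1-G)^2)$ from the stationarity equation (the paper's lemma on the unanchored stationary law), uses the threshold representation of social cost, and integrates the pointwise bound $\frac{g(1-g)}{g^2+(1-g)^2}\le\frac{1+\sqrt2}{2}\min\{g,1-g\}$ over $r$. The one loose spot is your garbled sentence about the median, but the direction you actually need is immediate: $\SCost(a)\ge\int_0^1\min\{G(r),1-G(r)\}\,dr$ for every $a$, because for deterministic $a$ the integrand equals either $G(r)$ or $1-G(r)$ at each $r$.
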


This bound is tight in the worst case. In particular, for every fixed anchoring parameter, there are instances whose distortion approaches $\frac{1+\sqrt{2}}{2}$.
The construction uses a two-point population with mass $p$ at $0$ and mass $q=1-p$ at $1$; when $p = \frac{\sqrt{2}}{2}$ and $q = 1 - \frac{\sqrt{2}}{2}$, the distortion is exactly $\frac{1+\sqrt{2}}{2}$ for every $\lambda<1$.

\begin{lemma}
\label{lemma:worst-case-tightness}
There exists a population distribution $G$ such that for every $0\le\lambda<1$, the distortion is exactly
$\frac{1+\sqrt{2}}{2}$. 
\end{lemma}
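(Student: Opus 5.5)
The plan is to use the two-point population already suggested in the text: $G$ puts mass $p=\frac{\sqrt{2}}{2}$ at $0$ and mass $q=1-p$ at $1$. The key observation is that for this $G$ the social cost is an affine function of the outcome. This means the stationary expected social cost depends only on the stationary mean, and that mean turns out not to depend on $\lambda$.

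\textbf{Step 1: one-step dynamics.} Using the simplified update \eqref{eq:transition}, $O_t=(1-\lambda)\med(U,V,O_{t-1})+\lambda O_{t-1}$, I split on the sampled pair.
\begin{itemize}
\item If $U=V=0$ (probability $p^2$), the median is $0$, so $O_t=\lambda O_{t-1}$.
\item If $U=V=1$ (probability $q^2$), then $O_t=(1-\lambda)+\lambda O_{t-1}$.
\item If $U$ and $V$ differ (probability $2pq$), the median is $O_{t-1}$ itself, so $O_t=O_{t-1}$.
\end{itemize}

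\textbf{Step 2: the stationary mean.} By Corollary~\ref{cor:wasserstein-mixing} the stationary law $\pi_{\lambda,G}$ exists and is unique, and it lives on $[0,1]$, so its mean $m$ is finite. Taking expectations in stationarity gives
\[
m=p^2\lambda m+q^2\bigl((1-\lambda)+\lambda m\bigr)+2pq\,m .
\]
Since $1-2pq=p^2+q^2$, this rearranges to $m(p^2+q^2)(1-\lambda)=q^2(1-\lambda)$. Because $\lambda<1$ we may divide by $1-\lambda$, which yields $m=\frac{q^2}{p^2+q^2}$. This is independent of $\lambda$.

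\textbf{Step 3: distortion.} For $O\in[0,1]$ we have $\SCost(O)=pO+q(1-O)$, which is affine in $O$. Hence
\[
\Exp[\SCost(O_\infty)]=pm+q(1-m)=\frac{pq(p+q)}{p^2+q^2}=\frac{pq}{p^2+q^2}.
\]
Since $p>q$, the optimum is $O^*=0$ with cost $q$, so the distortion equals $\frac{p}{p^2+q^2}$. Now substitute $p^2=\frac12$ and $q^2=\frac32-\sqrt2$, so that $p^2+q^2=2-\sqrt2$. Then
\[
\frac{\sqrt2/2}{2-\sqrt2}=\frac{\sqrt2(2+\sqrt2)}{4}=\frac{1+\sqrt2}{2}.
\]
This holds for every $0\le\lambda<1$.

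There is no real obstacle here. The only point needing care is justifying the expectation identity in Step 2, and this is immediate because the state space is bounded and the stationary law is unique by Corollary~\ref{cor:wasserstein-mixing}.
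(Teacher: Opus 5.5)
Your proposal is correct and follows the paper's proof essentially step for step. It uses the same two-point population, the same three-case one-step dynamics, the same $\lambda$-independent stationary mean $q^2/(p^2+q^2)$, and the same affine social cost yielding distortion $p/(p^2+q^2)$; the only cosmetic difference is that you evaluate directly at $p=\sqrt{2}/2$, whereas the paper maximizes over $p\in[1/2,1]$ to arrive at that value.
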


We outline the proof of Lemma~\ref{lem:unanchored-line-benchmark} and devote the remainder of the section to the proof of Theorem~\ref{thm:anchored-worst-case}.
We first apply Lemma~\ref{lem:social-cost-alternative} that rewrites social cost as an integral over thresholds. It uses the identity
$|O-X|=\int_0^1\bigl(\ind_{X\le r<O}+\ind_{O\le r<X}\bigr)\,dr$
and the fact that $G(r)\le 1-G(r)$ to the left of a population median, while $G(r)\ge 1-G(r)$ to its right. This also identifies population medians as social-cost minimizers.
Next, we observe that Lemma~\ref{lem:fain-general-law} gives the stationary CDF of the unanchored chain. If $F_t(r)=\Pr(X_t\le r)$, stationarity and the unanchored transition imply
$F_t(r)=G(r)^2+2G(r)(1-G(r))F_{t-1}(r)$.
Thus, Lemma~\ref{lem:unanchored-line-benchmark} follows by plugging the stationary CDF from Lemma~\ref{lem:fain-general-law} into the threshold representation of social cost in Lemma~\ref{lem:social-cost-alternative} and using
$
\max_{0\le p\le 1/2}\frac{1-p}{p^2+(1-p)^2}\le \frac{1+\sqrt2}{2}.
$
This leads to the same upper bound on the distortion as Fain et al.~\cite{fain2017sequential}.
The role of a threshold $r$ here is analogous to a coordinate cut in the median-graph argument of Fain et al.; the proof bounds the contribution of each threshold cut and then integrates over $r$.

\begin{lemma}
    \label{lem:social-cost-alternative}
Let $G$ be any population distribution on $[0,1]$, and let $O$ be any random outcome.
Then
\begin{equation}
    \label{eq:social-cost-alternative}
\Exp[\SCost(O)]
=
\int_{0}^{1}  \big( G(r) \Pr(O > r) + (1 - G(r)) \Pr(O \leq r) \big) d\,r.
\end{equation}
Moreover, for any deterministic $a\in[0,1]$,
$\SCost(a)=\int_0^a G(r)\,dr+\int_a^1(1-G(r))\,dr.$
If $m_G$ is any population median, then
$
\min_{a} \SCost(a) = \SCost(m_{G}) = \int_{0}^{1} \min \{G(r), 1 - G(r)\} d\,r.
$
\end{lemma}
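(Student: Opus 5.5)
The plan is to start from the pointwise layer-cake identity already quoted in the outline: for any $O,x\in[0,1]$,
\[
|O-x|=\int_0^1\bigl(\ind_{x\le r<O}+\ind_{O\le r<x}\bigr)\,dr .
\]
This holds because exactly the thresholds $r$ lying between $\min\{O,x\}$ and $\max\{O,x\}$ (with the appropriate half-open convention) contribute $1$. I will integrate against $dG(x)$ and take expectation over $O$. The integrand is nonnegative and jointly measurable in $(r,x,O)$, so Tonelli lets me swap the order freely. Integrating over $x$ first gives
\[
\int\ind_{x\le r<O}\,dG(x)=G(r)\ind_{r<O},\qquad \int\ind_{O\le r<x}\,dG(x)=(1-G(r))\ind_{O\le r}.
\]
Here I use that $G$ is the right-continuous CDF, so $\Pr(X\le r)=G(r)$ and $\Pr(X>r)=1-G(r)$. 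Taking expectation over $O$ then turns the indicators into $\Pr(O>r)$ and $\Pr(O\le r)$, which yields Equation~\eqref{eq:social-cost-alternative}. The only care needed is to check that the strict versus non-strict inequalities in the indicators line up with the CDF convention; the choice above does.

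For deterministic $a$, I set $O\equiv a$. Then $\Pr(O>r)=\ind_{r<a}$ and $\Pr(O\le r)=\ind_{r\ge a}$, so the formula collapses to
\[
\SCost(a)=\int_0^a G(r)\,dr+\int_a^1(1-G(r))\,dr .
\]

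For the minimization, I first note the pointwise lower bound: for every $r$, the integrand $G(r)\ind_{r<a}+(1-G(r))\ind_{r\ge a}$ is at least $\min\{G(r),1-G(r)\}$. Integrating gives $\SCost(a)\ge\int_0^1\min\{G,1-G\}\,dr$ for every $a$.

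To see that a median attains this bound, take a population median $m_G$, meaning $G(m_G^-)\le 1/2\le G(m_G)$. For $r<m_G$, monotonicity gives $G(r)\le G(m_G^-)\le 1/2$, so $G(r)=\min\{G(r),1-G(r)\}$. For $r\ge m_G$, we have $G(r)\ge G(m_G)\ge 1/2$, so $1-G(r)=\min\{G(r),1-G(r)\}$. Hence the integrand at $a=m_G$ equals the minimum pointwise, and $\SCost(m_G)$ attains the lower bound. This proves $\min_a\SCost(a)=\SCost(m_G)=\int_0^1\min\{G,1-G\}\,dr$.

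The main obstacle is the bookkeeping at atoms of $G$ and at $r=m_G$. With a left-limit definition of the median, everything is consistent up to a single point $r$, and a single point has Lebesgue measure zero, so it does not affect the integral.
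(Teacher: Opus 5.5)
Your proof is correct and follows the paper's argument: the same threshold identity for $|O-x|$ plus Tonelli gives the formula for $\Exp[\SCost(O)]$, and setting $O\equiv a$ gives the deterministic formula. For the minimization, you use the pointwise bound $G(r)\ind_{r<a}+(1-G(r))\ind_{r\ge a}\ge\min\{G(r),1-G(r)\}$ with equality at $a=m_G$, whereas the paper writes $\SCost(a)-\SCost(m_G)=\int_{m_G}^a(2G(r)-1)\,dr\ge 0$ (and its mirror image for $a\le m_G$) and then evaluates $\SCost(m_G)$ separately; these rest on the same sign property of $G-\frac{1}{2}$ around the median, and with your definition $G(m_G^-)\le\frac{1}{2}\le G(m_G)$ equality holds at every $r$, so your closing caveat about an exceptional point is not needed.
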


\begin{lemma}
\label{lem:fain-general-law}
Let $G$ be any cumulative distribution function on $[0,1]$, and $U_t,V_t\stackrel{\mathrm{i.i.d.}}{\sim}G$.
Consider the unanchored chain when $\lambda = 0$,
$X_{t+1}=\Median\{X_t,U_t,V_t\}$.
The stationary CDF is $\Pi_G(z)=\frac{G(z)^2}{G(z)^2+\bigl(1-G(z)\bigr)^2},$ for every $z \in [0, 1]$.
\end{lemma}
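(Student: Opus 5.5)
We prove Lemma~\ref{lem:fain-general-law} by tracking the CDF of the chain at a fixed threshold $z$, since the unanchored median update acts on the event $\{X_t\le z\}$ in a simple way. Fix $z\in[0,1]$ and write $F_t(z)=\Pr(X_t\le z)$. The median of three reals is at most $z$ if and only if at least two of them are at most $z$. Since $U_t,V_t$ are independent of $X_t$, and each is at most $z$ with probability $G(z)$, we condition on how many of $U_t,V_t$ lie at or below $z$.
\begin{itemize}
\item If both do (probability $G(z)^2$), the median is at most $z$ regardless of $X_t$.
\item If neither does (probability $(1-G(z))^2$), the median exceeds $z$.
\item If exactly one does (probability $2G(z)(1-G(z))$), the median is at most $z$ exactly when $X_t\le z$.
\end{itemize}
This gives the exact one-step recursion
\[
F_{t+1}(z)=G(z)^2+2G(z)\bigl(1-G(z)\bigr)F_t(z),
\]
which is the recursion quoted before Lemma~\ref{lem:social-cost-alternative}.

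Next we solve for the stationary point of this recursion. By Corollary~\ref{cor:wasserstein-mixing} with $\lambda=0$, the chain has a unique stationary distribution $\pi_{0,G}$. If $X_t\sim\pi_{0,G}$ then $X_{t+1}\sim\pi_{0,G}$, so its CDF $\Pi_G$ satisfies $\Pi_G(z)=G(z)^2+2G(z)(1-G(z))\Pi_G(z)$ for every $z$. Writing $g=G(z)$, we have $1-2g(1-g)=g^2+(1-g)^2\ge \tfrac12>0$. Hence this linear equation in $\Pi_G(z)$ has a unique solution, namely $\Pi_G(z)=g^2/(g^2+(1-g)^2)$, as claimed.

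The main point needing care is whether a function defined pointwise by this formula is really the CDF of $\pi_{0,G}$, rather than just a formal solution. Uniqueness resolves this. The CDF of $\pi_{0,G}$ must satisfy the fixed-point equation at every $z$, and that equation has exactly one solution at each $z$. So the CDF of $\pi_{0,G}$ equals the stated formula everywhere.

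As a consistency check, the map $g\mapsto g^2/(g^2+(1-g)^2)$ is nondecreasing and continuous. It sends $0\mapsto0$ and $1\mapsto1$. Composing it with the right-continuous nondecreasing function $G$ therefore yields a valid CDF on $[0,1]$.

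One could also argue directly without Corollary~\ref{cor:wasserstein-mixing}. The recursion is a contraction with factor $2g(1-g)\le\tfrac12$ at each $z$, so $F_t(z)$ converges to the fixed point from any start. Either route completes the proof.
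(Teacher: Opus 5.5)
Your proof is correct and follows essentially the paper's argument. The paper derives the same recursion $F_{t+1}(z)=G(z)^2+2G(z)(1-G(z))F_t(z)$ by conditioning on whether $X_t\le z$ via the transition CDF, rather than on how many of $U_t,V_t$ fall below $z$, and then solves the same linear fixed-point equation; your appeal to Corollary~\ref{cor:wasserstein-mixing} for existence and uniqueness, and your check that the formula defines a genuine CDF, supply details the paper leaves implicit.
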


\sbpara{Proof of Theorem~\ref{thm:anchored-worst-case}.}
\begin{proof}
Let $X_{1, \infty} \sim \pi_{\lambda_1, G}$ and $X_{2, \infty} \sim \pi_{\lambda_2, G}$, and let $U,V\stackrel{\mathrm{i.i.d.}}{\sim}G$.
We define the function
$$\varphi_n^{\lambda}(x) = (1 - \lambda)\mathbb E[\sum_{i = 0}^{n-1} \mathsf{SC}(Y_i) \mid Y_0 = x],$$
where $\{Y_i\}_{i = 0}^{\infty}$ is the outcome sequence with an anchoring parameter $\lambda$.
Let $M_1 = (1 - \lambda_2) \cdot \text{Med}(X_{1, \infty}, U, V) + \lambda_2 X_{1, \infty}$.
Thus, we can write $\Exp[\mathsf{SC}(X_{1, \infty})] - \Exp[\mathsf{SC}(X_{2, \infty})]$ as a telescoping identity, i.e., 
$\Exp[\mathsf{SC}(X_{1, \infty})] - \Exp[\mathsf{SC}(X_{2,\infty})] = \frac{1}{1 - \lambda_2}\lim_{n \rightarrow \infty} \{\Exp[\varphi_{n}^{\lambda_{2}}(X_{1, \infty})] - \Exp[\varphi_{n}^{\lambda_2}(M_1)] \}.$
This follows from the telescoping property and convergence to the stationary distribution of the Markov chain as $n \to \infty$. In particular,
\begin{equation}
    \begin{aligned}
    \lim_{n \to \infty} \{\Exp[\varphi^{\lambda_2}_n(X_{1, \infty})] - \Exp[\varphi^{\lambda_2}_n(M_1)] \} &\stackrel{(a)}{=} \lim_{n \to \infty} \{(1 - \lambda_2) \sum_{i = 0}^{n-1} \Exp[\Exp[\SCost(Y_i) \mid Y_0 = X_{1, \infty}]] \\
    &\qquad- (1 - \lambda_2) \sum_{i = 1}^{n} \Exp[\Exp[\SCost(Y_i) \mid Y_0 = X_{1, \infty}]] \} \\
    &\stackrel{(b)}{=} (1 - \lambda_{2}) ( \Exp[\SCost(X_{1, \infty})] - \Exp[\SCost(X_{2, \infty})] ),
    \end{aligned}
\end{equation}
where $(a)$ holds by the definition of $M_1$ and $(b)$ holds as $\lim_{n \to \infty} \Exp[\SCost(Y_n) \mid Y_0 = X_{1, \infty}] = \Exp[\SCost(X_{2, \infty})]$ for any setting of $Y_0$.

Next, we exhibit the convexity of $\varphi_{n}^{\lambda}(x)$ and that it is bounded. 

\begin{lemma}[convexity of $\varphi_{n}^{\lambda}(x)$]
For every $n\ge0$ and $\lambda \in [0, 1)$, the function $\varphi_{n}^{\lambda}$ is convex and $1$-Lipschitz on $[0,1]$.
\end{lemma}

\begin{proof}
    We use induction on $n$. The base function with $n = 1$ is  
    $\varphi_1^{\lambda}(x) = (1-\lambda) \Exp[\SCost(Y_0)\mid Y_0=x]
    = (1 - \lambda) \SCost(x).$
    Since $\SCost(x)$ is convex and $1$-Lipschitz and $0< 1- \lambda \le 1$, it follows directly
    that $\varphi^{\lambda}_1$ is convex and $(1-\lambda)$-Lipschitz, and therefore $1$-Lipschitz.
    
    We now prove the induction step. Suppose that, for some $n\ge 1$,
    $\varphi^{\lambda}_n$ is convex and $1$-Lipschitz. Conditioning on the sampled pair $U$ and $V$
    gives
    $$\varphi^{\lambda}_{n+1}(x)
    =(1 - \lambda) \SCost(x) +
    \Exp\left[ \varphi^{\lambda}_n\!\left(\lambda x+ (1 - \lambda) \med(x,U,V)\right)
    \right].
    $$

    Since $U \sim G$ and $V \sim G$, it holds by the definition of social cost that 
    $\SCost(x) = \Exp[\abs{x - U}]$ and that $\SCost(x) = \Exp[\abs{x - V}]$; 
    putting these together, we get
    $(1 - \lambda) \SCost(x)= \Exp\!\left[
    \frac{1-\lambda}{2}\bigl(|x-U|+|x-V|\bigr)
    \right].$
    
    We condition on a fixed pair $U,V$. Let
    $L=\min\{U,V\}$ and $R=\max\{U,V\}$. 
    For this pair, we consider
    
    $$
    B_{L,R}(x)
    =
    \frac{1 - \lambda}{2}\bigl(|x-L|+|x-R|\bigr)
    +
    \varphi^{\lambda}_n\!\left(\lambda x+(1 - \lambda) \med(x,L,R)\right).
    $$

    If $x<L$, then $\med(x,L,R)=L$, and the slope of $B_{L,R}$ has the form
    $-(1 - \lambda)+\lambda (\varphi^{\lambda}_n)'.$
    
    If $L<x<R$, then $\med(x,L,R)=x$, and the slope has the form
    $(\varphi^{\lambda}_n)'.$
    
    If $x>R$, then $\med(x,L,R)=R$, and the slope has the form
    $(1 - \lambda)+\lambda (\varphi^{\lambda}_n)'.$
    
    Since $\varphi^{\lambda}_n$ is convex and $1$-Lipschitz, its slopes within these three regimes are non-decreasing and lie in $[-1,1]$. Therefore, all three slopes lie in $[-1,1]$.
    
    It remains to check that the slope does not decrease when $x$ crosses $L$ or $R$. At $L$, the increase from the left slope to the right slope is not smaller than 
    $(1 - \lambda)+ (1 - \lambda) (\varphi^{\lambda}_{n, -} (L))' \geq 0$;
    at $R$, the increase is not smaller than 
    $(1 - \lambda) - (1-\lambda) (\varphi^{\lambda}_{n, +} (R))' \geq 0$. 
    Both inequalities follow because $\varphi^{\lambda}_n$ is convex and its slopes lie in $[-1,1]$. 
    Here, $(\varphi^{\lambda}_{n, -})'$ means taking the derivative from the left side and $(\varphi^{\lambda}_{n, +})'$ means taking the derivative from the right side. 
    
    Thus $B_{L,R}$ is convex and $1$-Lipschitz for every fixed pair $L,R$.
    Taking expectations over $U,V$, we can show that
    $\varphi^{\lambda}_{n+1}$ is convex and $1$-Lipschitz, and we are done. 
\end{proof}

Using $U,V$ again, we next define $\widetilde{X}_1 = (1 - \lambda_1) \cdot \text{Med}(X_{1, \infty}, U, V) + \lambda_1 \cdot X_{1, \infty}$.
By stationarity, $\widetilde{X}_1 \sim \pi_{\lambda_1, G}$;
hence $\Exp[\varphi_{n}^{\lambda_{2}}(X_{1, \infty})] = \Exp[\varphi_{n}^{\lambda_{2}}(\widetilde{X}_{1})]$.
Relating $\widetilde{X}_1$, $X_{1, \infty}$, and $M_1$, we have
$$
\widetilde{X}_1 = \left(1 - \frac{1 - \lambda_1}{1 - \lambda_2}\right) X_{1, \infty} + \frac{1 - \lambda_1}{1 - \lambda_2} M_1,
$$
which implies $\Exp[\varphi_{n}^{\lambda_{2}}(X_{1, \infty})] = \Exp[\varphi_{n}^{\lambda_{2}}(\left(1 - \frac{1 - \lambda_1}{1 - \lambda_2}\right) X_{1, \infty} + \frac{1 - \lambda_1}{1 - \lambda_2} M_1)]$.
By the convexity of $\varphi_n^{\lambda_2}$ and the linearity of expectation, we obtain
$$\Exp[\varphi_{n}^{\lambda_{2}}(X_{1, \infty})] - \Exp[\varphi_{n}^{\lambda_2}(M_1)] \leq 0.$$
Combining this inequality with the formula above completes the proof.
\end{proof}

\section{The outcome distribution}
\label{sec:general-distortion}
We now study how the stationary outcome distribution changes with the anchoring parameter. 
We show that, when $\lambda$ approaches $1$, the stationary distribution concentrates around a fixed point at which the expected one-step movement is zero. We call this point the \emph{deliberative fixed point}. 
In particular, for symmetric population distributions, this fixed point coincides with the population median; in the uniform case this yields an explicit stationary distortion bound.

\sbpara{When $\lambda$ approaches $1$.}
We show that, as $\lambda$ approaches $1$, the stationary distribution concentrates around a fixed point. For a current disagreement alternative $x$, define the expected unanchored movement
\[
b_G(x)\coloneqq \Exp[\med(x,U,V)]-x,
\]
where $U,V\stackrel{\mathrm{i.i.d.}}{\sim}G$. The deliberative fixed point $\theta_G$ is the unique point satisfying $b_G(\theta_G)=0$. Starting from $\theta_G$, the median step has no expected movement to the left or to the right.

Uniqueness follows from two facts: $b_G$ is continuous and strictly decreasing, while $b_G(0)=\Exp[\med(0,U,V)]\ge0$ and $b_G(1)=\Exp[\med(1,U,V)]-1\le0$.

\begin{lemma}
\label{lem:deliberative-fixed-point}
Let $b_G(x) \coloneqq \Exp[\med(x, U, V)] - x$, where $U, V \stackrel{\mathrm{i.i.d.}}{\sim} G$. Then there exists a unique point $\theta_G\in[0,1]$ such that $b_G(\theta_G) = 0$.
\end{lemma}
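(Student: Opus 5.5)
We will prove existence and uniqueness of a zero of $b_G$ by showing that $b_G$ is continuous and strictly decreasing, and that it has the right signs at the endpoints. The first step is an integral formula for $\Exp[\med(x,U,V)]$ in the threshold style of Lemma~\ref{lem:social-cost-alternative}. Since the median takes values in $[0,1]$, we have $\Exp[\med(x,U,V)]=\int_0^1\Pr(\med(x,U,V)>r)\,dr$. For a fixed threshold $r$, the median exceeds $r$ exactly when at least two of $x,U,V$ exceed $r$. This gives
\[
\Pr(\med(x,U,V)>r)=(1-G(r))^2+2G(r)(1-G(r))\,\ind_{x>r}.
\]
Writing $x=\int_0^1\ind_{x>r}\,dr$, we then obtain
\[
b_G(x)=\int_0^1 (1-G(r))^2\,dr-\int_0^x\bigl(1-2G(r)(1-G(r))\bigr)\,dr
=\int_0^1 (1-G(r))^2\,dr-\int_0^x\bigl(G(r)^2+(1-G(r))^2\bigr)\,dr .
\]

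From this formula, continuity is immediate: $b_G$ is a constant minus the integral of a bounded function, so it is in fact Lipschitz. For strict monotonicity, the integrand satisfies $G(r)^2+(1-G(r))^2\ge \tfrac12>0$ for every $r$. Hence for $x<y$,
\[
b_G(y)-b_G(x)=-\int_x^y\bigl(G(r)^2+(1-G(r))^2\bigr)\,dr\le -\tfrac12 (y-x)<0 .
\]
So $b_G$ is strictly decreasing, with slope at most $-\tfrac12$. This slope bound is the one the paper later uses for the concentration result.

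For the endpoint signs, $b_G(0)=\Exp[\med(0,U,V)]\ge 0$ because the median lies in $[0,1]$. Likewise $b_G(1)=\Exp[\med(1,U,V)]-1\le 0$. By the intermediate value theorem, $b_G$ has a zero $\theta_G\in[0,1]$, and by strict monotonicity that zero is unique.

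The only delicate step is the pointwise identity for $\Pr(\med(x,U,V)>r)$, where we must handle ties and atoms of $G$ carefully. I would verify it by casework on $x>r$ versus $x\le r$, using $\Pr(U>r)=1-G(r)$, which holds because $G$ is right-continuous. The identity concerns the event $\{\med>r\}$ only, so it holds at every $r$, not just almost every $r$. Atoms of $G$ therefore cause no trouble.
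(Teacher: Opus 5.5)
Your proof is correct and follows essentially the same route as the paper: you derive the threshold-integral formula (your expression equals the paper's $b_G(x)=\int_x^1(1-G(r))^2\,dr-\int_0^x G(r)^2\,dr$ after rearranging), deduce $b_G(y)-b_G(x)\le -\tfrac12(y-x)$, and conclude from the endpoint signs, the intermediate value theorem, and strict monotonicity. The only cosmetic difference is how the formula is obtained: you use the tail identity for $\Pr(\med(x,U,V)>r)$, which holds at every $r$, whereas the paper splits $\med(x,U,V)-x$ into positive and negative parts and handles atoms of $G$ by an almost-everywhere remark.
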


Equivalently,
\[
b_G(x)=\int_x^1 (1-G(r))^2\,d r-\int_0^x G(r)^2\,d r.
\]
This follows by writing
$\med(x,U,V)-x=\int_x^1\ind_{\med(x,U,V)\ge r}\,dr-\int_0^x\ind_{\med(x,U,V)\le r}\,dr$
and then taking expectations; for a CDF with atoms, the choice of strict or weak threshold inequalities affects only countably many values of $r$ and therefore does not change the integrals.
Thus, for $0\le x<y\le1$,
\begin{equation}
    \label{eq:monotone}
b_G(y)-b_G(x)
=-\int_x^y\left(G(r)^2+(1-G(r))^2\right)\,dr
\le -\frac{1}{2}(y-x).
\end{equation}
This monotonicity is a crucial component of the concentration result below.
The next theorem shows that concentration around the deliberative fixed point holds for every population distribution.

\begin{theorem}
    \label{thm:general-stationary-concentration}
    Let $0\le\lambda<1$ and $O\sim\pi_{\lambda,G}$. Then
    $\Exp[(O-\theta_G)^2] \le 1-\lambda$.
\end{theorem}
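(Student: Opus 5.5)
We plan to work directly with the stationary second moment and use the one-step update in Equation~\eqref{eq:simplified-update}. Let $O\sim\pi_{\lambda,G}$ and $U,V\stackrel{\mathrm{i.i.d.}}{\sim}G$ be independent of $O$. Set $M\coloneqq\med(O,U,V)$ and $O'\coloneqq(1-\lambda)M+\lambda O$. By stationarity, $O'\sim\pi_{\lambda,G}$. Write $D\coloneqq O-\theta_G$ and $\Delta\coloneqq M-O$, so that $O'-\theta_G=D+(1-\lambda)\Delta$. Expanding the square and using $\Exp[(O'-\theta_G)^2]=\Exp[D^2]$ gives
\[
0=2(1-\lambda)\Exp[D\Delta]+(1-\lambda)^2\Exp[\Delta^2].
\]
Dividing by $1-\lambda>0$ yields $-2\Exp[D\Delta]=(1-\lambda)\Exp[\Delta^2]$.

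The left side is handled with the drift. Conditioning on $O$ gives $\Exp[\Delta\mid O]=b_G(O)$, so $\Exp[D\Delta]=\Exp[(O-\theta_G)\,b_G(O)]$. Since $b_G(\theta_G)=0$, Equation~\eqref{eq:monotone} gives $(x-\theta_G)b_G(x)\le-\frac12(x-\theta_G)^2$ for every $x$; this holds in both orders of $x$ and $\theta_G$ because $b_G(y)-b_G(x)\le-\frac12(y-x)$ for $x<y$. Therefore
\[
-2\Exp[D\Delta]\ge\Exp[D^2].
\]

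For the right side, both $M$ and $O$ lie in $[0,1]$, so $|\Delta|\le 1$ and $\Exp[\Delta^2]\le 1$. Combining the two bounds gives $\Exp[D^2]\le(1-\lambda)\Exp[\Delta^2]\le 1-\lambda$, which is the claim.

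The only delicate step is the drift inequality. It needs the sign-correct application of Equation~\eqref{eq:monotone} on each side of $\theta_G$, and Lemma~\ref{lem:deliberative-fixed-point} to guarantee that $\theta_G$ exists with $b_G(\theta_G)=0$. Everything else is bounded quantities on $[0,1]$, so all expectations are finite.
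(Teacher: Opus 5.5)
Your proposal is correct and follows essentially the same argument as the paper. Both write the update as $O'-\theta_G=(O-\theta_G)+(1-\lambda)(M-O)$, control the cross term with the drift bound $(x-\theta_G)b_G(x)\le-\frac12(x-\theta_G)^2$, bound the squared step by $1$, and close with stationarity. The only cosmetic difference is the order: you use stationarity before bounding, which gives the marginally sharper intermediate $\Exp[D^2]\le(1-\lambda)\Exp[\Delta^2]$. The paper instead first bounds the conditional second moment by $\lambda(x-\theta_G)^2+(1-\lambda)^2$ and then takes stationary expectations.
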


The proof expands
$(O_{t+1}-\theta_G)^2$ as
$((O_t-\theta_G)+(1-\lambda)(M_t-O_t))^2$
using Equation~\eqref{eq:transition}, where $M_t=\med(O_t,U_t,V_t)$. The cross term is controlled by the drift identity $\Exp[M_t-O_t\mid O_t]=b_G(O_t)$. Since $b_G(\theta_G)=0$, Equation~\eqref{eq:monotone} implies
$(x-\theta_G)b_G(x)\le -\frac{1}{2}(x-\theta_G)^2$ for every $x\in[0,1]$.

\vspace{1em}
The stationary distribution concentrates around the deliberative fixed point as $\lambda$ increases; however, this does not necessarily translate to smaller distortion, because the deliberative fixed point need not be the population median.
This concentration is most useful for reducing distortion when the two points coincide.
We therefore turn to symmetric population distributions.
We say that $G$ is centered if
$G\!\left(\frac{1}{2}+s\right)=1-G\!\left(\frac{1}{2}-s\right)$ for $0\le s\le\frac{1}{2}$.

\begin{lemma}
\label{lem:centered-fixed-point}
If $G$ is centered, then the deliberative fixed point is $\theta_G=\frac{1}{2}$.
\end{lemma}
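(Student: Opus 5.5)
Since Lemma~\ref{lem:deliberative-fixed-point} already guarantees that $b_G$ has a unique zero, it suffices to verify that $b_G(\tfrac12)=0$ when $G$ is centered. I will use the integral representation stated above,
\[
b_G(x)=\int_x^1 (1-G(r))^2\,dr-\int_0^x G(r)^2\,dr,
\]
evaluated at $x=\tfrac12$, and show that the two integrals coincide.

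Substituting $r=\tfrac12+s$ in the first integral and $r=\tfrac12-s$ in the second, each with $s\in[0,\tfrac12]$, gives
\[
b_G\!\left(\tfrac12\right)=\int_0^{1/2}\Bigl(1-G\!\left(\tfrac12+s\right)\Bigr)^2\,ds-\int_0^{1/2}G\!\left(\tfrac12-s\right)^2\,ds .
\]
The centering condition $G(\tfrac12+s)=1-G(\tfrac12-s)$ makes the first integrand equal to $G(\tfrac12-s)^2$. The two integrals therefore cancel, so $b_G(\tfrac12)=0$, and uniqueness gives $\theta_G=\tfrac12$.

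The only delicate point concerns atoms. If $G$ has an atom at $\tfrac12$, or at a pair of points $\tfrac12\pm s$, the centering identity may fail at those $s$ because the CDF is right-continuous. This happens for at most countably many $s$, which is a Lebesgue-null set, so the integrals are unaffected. The paper already made the same remark when deriving the formula for $b_G$.

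An alternative route avoids the integral formula. If $U\sim G$, then $1-U\sim G$ up to these null-set boundary effects, so $\med(\tfrac12,U,V)$ is symmetric about $\tfrac12$ in distribution and its expectation is $\tfrac12$. I would use the integral computation as the main proof, since it relies only on results already established in the paper.
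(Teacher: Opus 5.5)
Your proof is correct and is essentially the paper's argument. You evaluate the integral formula for $b_G$ at $\tfrac12$, reflect via $r=\tfrac12\pm s$, use the centering identity to cancel the two integrals, and invoke uniqueness from Lemma~\ref{lem:deliberative-fixed-point}; your null-set remark about atoms is an extra the paper does not need, since its definition of centered asserts the identity for every $s$, but it harmlessly makes the argument robust to right-continuity at atoms.
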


Thus, for centered populations, anchoring concentrates the stationary distribution around the social optimum. In this case, anchoring reduces the stationary second moment around the median, which gives sharper distortion bounds.

\subsection{Uniform distribution}
\label{sec:distortion-uniform}
In the uniform case, define
$\Qcal_t \coloneqq \abs{O_t-\frac{1}{2}}$, the absolute distance from the current outcome to the population median. We show that the stationary distortion has an upper bound of $1+\frac{1-\lambda}{6(1+\lambda)}$ and a lower bound of $1+\frac{1-\lambda}{9+7\lambda}$.

\begin{theorem}
    \label{thm:distortion-uniform}
    When $G$ is uniform and $0\le\lambda<1$,
    if $O_\infty\sim\pi_{\lambda,G}$, then
    \[
    1 + \frac{1-\lambda}{9+7\lambda}
    \leq
    \frac{\Exp[\SCost(O_\infty)]}{\SCost(1/2)}
    \le
    1 + \frac{1-\lambda}{6(1+\lambda)};
    \]
    when $\lambda=0$, the distortion is exactly $\pi-2$.
\end{theorem}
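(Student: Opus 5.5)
For the uniform population, social cost is an explicit quadratic in the outcome. The plan is to reduce the distortion to the stationary second moment around $\frac12$, get an exact one-step stationarity identity for that moment, and read both bounds off it.

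\textbf{Reducing distortion to a second moment.} For uniform $G$, Lemma~\ref{lem:social-cost-alternative} gives
\[
\SCost(a)=\int_0^a r\,dr+\int_a^1(1-r)\,dr=\tfrac14+\left(a-\tfrac12\right)^2 .
\]
In particular $\SCost(1/2)=\tfrac14$. Hence the distortion equals $1+4\Exp[Y^2]$, where $Y\coloneqq O_\infty-\frac12$, so $Y^2=\Qcal_\infty^2$. The theorem therefore reduces to
\[
\frac{1-\lambda}{4(9+7\lambda)}\le \Exp[Y^2]\le \frac{1-\lambda}{24(1+\lambda)} .
\]

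\textbf{One-step moment identity.} I would use Equation~\eqref{eq:transition} in the form
\[
O_{t+1}-\tfrac12=(O_t-\tfrac12)+(1-\lambda)(M_t-O_t),\qquad M_t=\med(O_t,U_t,V_t).
\]
Squaring and conditioning on $O_t=x$, with $y=x-\frac12$, needs two conditional moments of $M_t-x$.
\begin{itemize}
\item The drift. From the integral formula after Lemma~\ref{lem:deliberative-fixed-point}, $b_G(x)=\frac{(1-x)^3-x^3}{3}=-\frac{y}{2}-\frac{2y^3}{3}$.
\item The second moment. For $s>0$, $\Pr(M-x>s)=(1-x-s)^2$, and symmetrically on the left. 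This gives $\Exp[(M-x)^2\mid x]=\frac{(1-x)^4+x^4}{6}=\frac1{48}+\frac{y^2}{2}+\frac{y^4}{3}$.
\end{itemize}
All quantities are bounded, so taking expectations under the stationary law $\pi_{\lambda,G}$ (which exists by Corollary~\ref{cor:wasserstein-mixing}) and cancelling $\Exp[Y^2]$ on both sides gives
\[
0=2(1-\lambda)\,\Exp\!\left[-\tfrac{Y^2}{2}-\tfrac{2Y^4}{3}\right]+(1-\lambda)^2\,\Exp\!\left[\tfrac1{48}+\tfrac{Y^2}{2}+\tfrac{Y^4}{3}\right].
\]
Dividing by $1-\lambda>0$ yields the exact identity
\[
\frac{1+\lambda}{2}\Exp[Y^2]+\frac{3+\lambda}{3}\Exp[Y^4]=\frac{1-\lambda}{48}.
\]

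\textbf{Deriving the bounds.}
\begin{itemize}
\item Upper bound: dropping $\Exp[Y^4]\ge0$ gives $\Exp[Y^2]\le\frac{1-\lambda}{24(1+\lambda)}$, hence distortion at most $1+\frac{1-\lambda}{6(1+\lambda)}$.
\item Lower bound: since $|Y|\le\frac12$, we have $\Exp[Y^4]\le\frac14\Exp[Y^2]$. Substituting, the coefficient of $\Exp[Y^2]$ becomes at most $\frac{1+\lambda}{2}+\frac{3+\lambda}{12}=\frac{9+7\lambda}{12}$. This gives $\Exp[Y^2]\ge\frac{1-\lambda}{4(9+7\lambda)}$, hence distortion at least $1+\frac{1-\lambda}{9+7\lambda}$.
\end{itemize}

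\textbf{The case $\lambda=0$.} Here I would use the exact stationary CDF from Lemma~\ref{lem:fain-general-law}, $\Pi(r)=\frac{r^2}{r^2+(1-r)^2}$. Substituting it into Equation~\eqref{eq:social-cost-alternative} gives
\[
\Exp[\SCost]=\int_0^1\big(r(1-\Pi(r))+(1-r)\Pi(r)\big)\,dr=\int_0^1\frac{r(1-r)}{r^2+(1-r)^2}\,dr .
\]
Substituting $u=2r-1$ reduces this to an arctangent integral, whose value should be $\frac{\pi-2}{4}$; dividing by $\SCost(1/2)=\frac14$ gives $\pi-2$.

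The main obstacle is the moment bookkeeping: the exact cubic drift, the exact conditional second moment, and confirming that the cross terms produce the $\Exp[Y^4]$ coefficient $\frac{3+\lambda}{3}$. Both stated constants come out of that identity, so I would double-check it first.
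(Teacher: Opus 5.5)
Your proposal is correct, and the constants check out. With $y=x-\frac12$, your drift $-\frac{y}{2}-\frac{2y^3}{3}$ and second moment $\frac1{48}+\frac{y^2}{2}+\frac{y^4}{3}$ combine to $\frac{(1-\lambda)^2}{48}+\frac{1+\lambda^2}{2}y^2-\frac{(1-\lambda)(3+\lambda)}{3}y^4$. This is exactly the one-step identity of Lemma~\ref{lem:centered-second-moment}.

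The overall architecture matches the paper: distortion $=1+4\Exp[\Qcal^2]$, drop the quartic term for the upper bound, and use $\Qcal^4\le\frac14\Qcal^2$ for the lower bound. The differences lie in how that identity is derived and how it is used.

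\emph{Derivation.} The paper integrates $(z-\frac12)^2$ against the anchored uniform transition density: the atom $2x(1-x)$ plus two triangular pieces with $\lambda$-dependent endpoints. This needs a change of variables, the quartic expansions in the appendix, and a reflection-symmetry step to condition on $\Qcal_t$. You instead write the anchored step as $x+(1-\lambda)(M-x)$. Then only the first two moments of the \emph{unanchored} median step are needed, $\lambda$ enters only through the factors $2(1-\lambda)$ and $(1-\lambda)^2$, and the result is visibly even in $y$. This is the same device the paper uses for Theorem~\ref{thm:general-stationary-concentration}. There it is paired with the crude bounds $(x-\theta_G)b_G(x)\le-\frac12(x-\theta_G)^2$ and $(M-x)^2\le 1$; here you carry it out exactly.

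\emph{Use.} The paper iterates the recursion in $t$ (Corollary~\ref{cor:upperQ}). This yields finite-time bounds on $\Exp[\Qcal_t^2]$ as a by-product, but then relies on letting $t\to\infty$. You substitute stationarity directly and obtain the exact relation $\frac{1+\lambda}{2}\Exp[Y^2]+\frac{3+\lambda}{3}\Exp[Y^4]=\frac{1-\lambda}{48}$. That requires only existence of $\pi_{\lambda,G}$ and boundedness, and each bound is then one line. Nothing is lost, since your conditional formula can be iterated just as well if finite-$t$ bounds are wanted.

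The $\lambda=0$ computation coincides with the paper's. The integral is indeed $\frac14\int_{-1}^{1}\frac{1-s^2}{1+s^2}\,ds=\frac{\pi-2}{4}$, so you can drop the ``should be.''
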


The proof reduces distortion to a second-moment bound. Lemma~\ref{lem:social-cost-uniform} expresses uniform distortion exactly in terms of $\Exp[\Qcal_t^2]$. Lemma~\ref{lem:centered-second-moment} then computes the one-step conditional second moment, and Corollary~\ref{cor:upperQ} iterates the resulting upper bound.

\begin{lemma}
    \label{lem:social-cost-uniform}
    When $G$ is uniform, the distortion after $t$ steps is
    $\frac{\Exp[\SCost(O_t)]}{\SCost(1/2)} = 1+4\Exp[\Qcal_t^2].$
\end{lemma}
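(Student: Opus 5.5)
The identity is deterministic in $O_t$, so I will first prove $\SCost(o)=\SCost(1/2)\bigl(1+4(o-\tfrac12)^2\bigr)$ for every fixed $o\in[0,1]$, and then average over the law of $O_t$.

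\textbf{Step 1: evaluate both sides for uniform $G$.} I will use the deterministic formula from Lemma~\ref{lem:social-cost-alternative},
\[
\SCost(a)=\int_0^a G(r)\,dr+\int_a^1\bigl(1-G(r)\bigr)\,dr .
\]
With $G(r)=r$ this becomes
\[
\SCost(a)=\frac{a^2}{2}+\frac{(1-a)^2}{2}=\frac14+\Bigl(a-\frac12\Bigr)^2 .
\]
One can also get this directly from $\Exp|a-U|$ with $U$ uniform. Plugging in $a=\tfrac12$ gives $\SCost(1/2)=\tfrac14$, which is the optimal social cost because $\tfrac12$ is the population median.

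\textbf{Step 2: take expectations.} Since $O_t\in[0,1]$, I can substitute $a=O_t$ into the pointwise formula and use linearity of expectation:
\[
\Exp[\SCost(O_t)]=\frac14+\Exp[\Qcal_t^2].
\]
Dividing by $\SCost(1/2)=\tfrac14$ yields $1+4\Exp[\Qcal_t^2]$, which is the claimed distortion.

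\textbf{Remarks on possible obstacles.} None of the steps is genuinely hard. The only point needing care is that the paper defines distortion as $\Exp[\SCost(O)]/\SCost(O^*)$, where the expectation runs over the randomness of the outcome. Because the identity in Step 1 holds pointwise in $a$, the result is valid for any $t$ and any initial law, including the stationary law. Measurability is not an issue, since $\SCost$ is continuous.
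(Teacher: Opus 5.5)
Your proof is correct and follows essentially the same route as the paper: derive the pointwise identity $\SCost(a)=\frac14+\bigl(a-\frac12\bigr)^2$ for uniform $G$, then take expectations and divide by $\SCost(1/2)=\frac14$. The only cosmetic difference is that you obtain the pointwise formula from the threshold representation in Lemma~\ref{lem:social-cost-alternative}, whereas the paper integrates $|O_t-x|$ directly.
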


For the uniform distribution, we compute $\Exp[\Qcal_{t+1}^2\mid O_t=x]$ by applying the transition density in
Equation~\eqref{eq:density}, the uniform special case of Proposition~\ref{prop:general-transition-density}. This gives
    \begin{equation}
    \begin{aligned}
    \Exp[\Qcal_{t+1}^2\mid O_t=x]
    &=
    2x(1-x)\left(x-\frac{1}{2}\right)^2 \\
    &\quad+
    \frac{2}{(1-\lambda)^2}\int_{\lambda x}^{x}\left(z-\frac{1}{2}\right)^2(z-\lambda x)\,dz \\
    &\quad+
    \frac{2}{(1-\lambda)^2}\int_{x}^{1-\lambda+\lambda x}\left(z-\frac{1}{2}\right)^2(1-\lambda+\lambda x-z)\,dz.
    \end{aligned}
    \end{equation}

Evaluating the integrals and collecting terms yields the following identity.
\begin{lemma}
\label{lem:centered-second-moment}
Let $\Qcal_t\coloneqq \abs{O_t-\frac{1}{2}}$. Then
\begin{equation}
\Exp[\Qcal_{t+1}^2 \mid \Qcal_t] = \frac{(1-\lambda)^2}{48} + \frac{1+\lambda^2}{2}\Qcal_t^2 -
\frac{(1-\lambda)(3+\lambda)}{3}\Qcal_t^4.
\end{equation}
\end{lemma}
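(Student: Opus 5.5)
The identity comes from evaluating the three-term expression for $\Exp[\Qcal_{t+1}^2\mid O_t=x]$ displayed just before the lemma, after centering at the median. I would set $x=\tfrac12+s$, so that $\Qcal_t=|s|$, and use $w=z-\tfrac12$ as the integration variable in the two integrals. The integration limits become
\[
a\coloneqq \lambda x-\tfrac12=\lambda s-h,\qquad c\coloneqq 1-\lambda+\lambda x-\tfrac12=\lambda s+h,\qquad h\coloneqq \tfrac{1-\lambda}{2}.
\]
The prefactor $\frac{2}{(1-\lambda)^2}$ becomes $\frac{1}{2h^2}$. The atom term is
\[
2x(1-x)s^2=2\bigl(\tfrac14-s^2\bigr)s^2=\tfrac12 s^2-2s^4 .
\]

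\textbf{Evaluating the two tent integrals.} The integrands are polynomials, so the antiderivatives are immediate:
\[
\int_a^s w^2(w-a)\,dw=\tfrac{s^4}{4}-\tfrac{a s^3}{3}+\tfrac{a^4}{12},\qquad
\int_s^c w^2(c-w)\,dw=\tfrac{s^4}{4}-\tfrac{c s^3}{3}+\tfrac{c^4}{12}.
\]
Their sum is $\tfrac{s^4}{2}-\tfrac{(a+c)s^3}{3}+\tfrac{a^4+c^4}{12}$. Two simplifications follow.

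First, $a+c=2\lambda s$, so the cubic term becomes a multiple of $s^4$.

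Second, $a^4+c^4=(\lambda s-h)^4+(\lambda s+h)^4=2\bigl(\lambda^4s^4+6\lambda^2h^2s^2+h^4\bigr)$. All odd powers of $s$ cancel here, which is the symmetry reason the answer depends only on $\Qcal_t=|s|$.

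\textbf{Collecting terms.} After multiplying by $\frac{1}{2h^2}$ and adding the atom term, I would collect coefficients of $s^0$, $s^2$ and $s^4$:
\begin{itemize}
\item The constant is $\frac{h^4}{6\cdot 2h^2}=\frac{h^2}{12}=\frac{(1-\lambda)^2}{48}$, which matches the lemma.
\item The $s^2$ coefficient is $\tfrac12+\frac{\lambda^2}{2}=\frac{1+\lambda^2}{2}$, which also matches.
\item The $s^4$ coefficient is $-2+\frac{3-4\lambda+\lambda^4}{6(1-\lambda)^2}$. Using the factorization $3-4\lambda+\lambda^4=(1-\lambda)^2(\lambda^2+2\lambda+3)$, this reduces to a rational function of $\lambda$ alone.
\end{itemize}
Finally, since the conditional law depends on $x$ only through $s^2$, conditioning on $\Qcal_t$ instead of on $O_t$ gives the same expression.

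\textbf{Main obstacle.} The delicate step is the quartic coefficient. My quick reduction gives $\frac{\lambda^2+2\lambda-9}{6}$, whereas the lemma states $-\frac{(1-\lambda)(3+\lambda)}{3}=\frac{2\lambda^2+4\lambda-6}{6}$. So either I have a bookkeeping slip, or the transition density in Equation~\eqref{eq:density} carries different weights on the tent pieces or the atom than I have assumed.

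I would resolve this in two ways. First, recheck the density: the weights must integrate to one, and the atom mass at $x$ should be $2x(1-x)$. Second, test the final formula at $\lambda=0$ against the exact unanchored transition. There $\Exp[(\med(x,U,V)-\tfrac12)^2]$ can be computed directly from the order statistics of two uniforms, which pins down the $s^4$ coefficient independently. I would then recompute the $\lambda$-dependence once the density is confirmed.
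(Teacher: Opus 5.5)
\textbf{Verdict: gap.} Your approach is the same as the paper's: center at $\tfrac12$, integrate the two tent pieces of the uniform transition density, add the atom term, and collect powers of $s$. Everything up to the final collection step is correct. That includes the limits $a=\lambda s-h$ and $c=\lambda s+h$, both antiderivatives, the sum $\bigl(\tfrac12-\tfrac{2\lambda}{3}+\tfrac{\lambda^4}{6}\bigr)s^4+\lambda^2h^2s^2+\tfrac{h^4}{6}$, and the constant and $s^2$ coefficients.

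As written, however, the proposal does not prove the lemma. It ends with a quartic coefficient $\tfrac{\lambda^2+2\lambda-9}{6}$ that contradicts the statement, leaves the discrepancy unresolved, and casts doubt on the density. The density and the atom mass $2x(1-x)$ are correct. The error is a dropped factor of $2$ in your own bookkeeping. Multiplying $\tfrac{3-4\lambda+\lambda^4}{6}s^4$ by the prefactor $\tfrac{1}{2h^2}=\tfrac{2}{(1-\lambda)^2}$ gives $\tfrac{3-4\lambda+\lambda^4}{3(1-\lambda)^2}$, not $\tfrac{3-4\lambda+\lambda^4}{6(1-\lambda)^2}$. You applied the full prefactor to the constant and $s^2$ coefficients but only half of it here. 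With your own factorization,
\[
-2+\frac{(1-\lambda)^2(\lambda^2+2\lambda+3)}{3(1-\lambda)^2}=\frac{\lambda^2+2\lambda-3}{3}=-\frac{(1-\lambda)(3+\lambda)}{3},
\]
which is exactly the lemma.

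Your proposed $\lambda=0$ test would have settled this. For $x=\tfrac12+s$, a direct order-statistics computation gives $\Exp\bigl[(\med(x,U,V)-\tfrac12)^2\bigr]=\tfrac1{48}+\tfrac12 s^2-s^4$; for example, at $s=\tfrac12$ both sides equal $\tfrac1{12}$. So the quartic coefficient is $-1$, as the lemma predicts, not $-\tfrac32$.

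One smaller point of precision concerns the conditioning step. The conditional law of $O_{t+1}$ does not depend on $x$ only through $s^2$; the laws from $\tfrac12\pm s$ are mirror images of each other. What you need, and what your computation actually shows, is that $\Exp[\Qcal_{t+1}^2\mid O_t]$ is an even polynomial in $s$, hence a function of $\Qcal_t$ alone. The tower property then gives $\Exp[\Qcal_{t+1}^2\mid\Qcal_t]$ equal to the same expression.
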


Dropping the negative fourth-moment term in Lemma~\ref{lem:centered-second-moment} gives a linear recursion for the second moment. Iterating that recursion gives the following upper bound.
The lower bound follows by observing that $\Qcal_t^4 \leq \frac{1}{4}\Qcal_t^2$.

\begin{corollary}
    \label{cor:upperQ}
    For any $t \geq 1$,
    \[
    \frac{(1-\lambda)^2}{48}
    +\frac{7\lambda^2+2\lambda+3}{12}\Exp[\Qcal_t^2]
    \leq \Exp[\Qcal_{t+1}^2]
    \leq
    \frac{(1 -\lambda)^2}{48}
    +\frac{1 + \lambda^2}{2}\Exp[\Qcal_{t}^2].
    \]
    As $t\rightarrow \infty$, $\frac{1-\lambda}{4(9+7\lambda)} \leq \Exp[\Qcal_\infty^2] \leq \frac{1 - \lambda}{24(1+\lambda)}$.
\end{corollary}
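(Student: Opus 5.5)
Taking expectations in Lemma~\ref{lem:centered-second-moment} (tower property over $\Qcal_t$) gives
\[
\Exp[\Qcal_{t+1}^2]=\frac{(1-\lambda)^2}{48}+\frac{1+\lambda^2}{2}\Exp[\Qcal_t^2]-\frac{(1-\lambda)(3+\lambda)}{3}\Exp[\Qcal_t^4].
\]
Both one-step inequalities then come from bounding the single term $\Exp[\Qcal_t^4]$. For the upper bound we simply drop it, since $\lambda<1$ makes its coefficient $\frac{(1-\lambda)(3+\lambda)}{3}$ nonnegative. For the lower bound we use that $O_t\in[0,1]$, so $\Qcal_t\le\frac12$ and hence $\Qcal_t^4\le\frac14\Qcal_t^2$ pointwise. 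This gives
\[
\frac{1+\lambda^2}{2}-\frac{(1-\lambda)(3+\lambda)}{12}=\frac{6+6\lambda^2-(3-2\lambda-\lambda^2)}{12}=\frac{7\lambda^2+2\lambda+3}{12},
\]
which is exactly the stated lower coefficient.

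For the limit, we pass to stationarity instead of iterating. By Theorem~\ref{thm:wasserstein-mixing-time}, $O_t$ converges in $W_1$ to $\pi_{\lambda,G}$ on the compact interval $[0,1]$. Since $x\mapsto(x-\frac12)^2$ is bounded and Lipschitz on $[0,1]$, $\Exp[\Qcal_t^2]\to\Exp[\Qcal_\infty^2]$, and likewise for the fourth moment. We can also bypass limits entirely by starting the chain at $O_0\sim\pi_{\lambda,G}$: then $\Exp[\Qcal_{t+1}^2]=\Exp[\Qcal_t^2]=:s$ for all $t$, and both inequalities hold with $s$ on each side.

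Solving the two linear inequalities for $s$ finishes the proof. The upper bound gives
\[
s\Bigl(1-\frac{1+\lambda^2}{2}\Bigr)\le\frac{(1-\lambda)^2}{48},\qquad 1-\frac{1+\lambda^2}{2}=\frac{(1-\lambda)(1+\lambda)}{2},
\]
so $s\le\frac{1-\lambda}{24(1+\lambda)}$. The lower bound gives
\[
s\cdot\frac{12-7\lambda^2-2\lambda-3}{12}\ge\frac{(1-\lambda)^2}{48},\qquad 9-2\lambda-7\lambda^2=(1-\lambda)(9+7\lambda),
\]
so $s\ge\frac{1-\lambda}{4(9+7\lambda)}$. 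Both divisions are legitimate because $\lambda<1$ keeps the factors positive. The only subtle point is justifying that the limit exists and equals the stationary moment; the stationary-start argument above avoids that issue.
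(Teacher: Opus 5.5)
Your proposal is correct and matches the paper's proof for the one-step inequalities: take expectations in Lemma~\ref{lem:centered-second-moment}, drop the nonpositive quartic term for the upper bound, and use $\Qcal_t^4\le\frac14\Qcal_t^2$ for the lower bound. The only difference is the limiting step: the paper iterates the two linear recursions and lets $t\to\infty$, which also gives explicit finite-time bounds with geometric rates $(\frac{1+\lambda^2}{2})^t$ and $(\frac{7\lambda^2+2\lambda+3}{12})^t$, whereas you start the chain from $\pi_{\lambda,G}$ and solve the fixed-point inequalities directly, which is cleaner and avoids the identification of $\lim_t\Exp[\Qcal_t^2]$ with the stationary moment that the paper leaves implicit.
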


The limiting upper bound $\frac{1-\lambda}{24(1+\lambda)}$ decreases monotonically in $\lambda$. Together with Lemma~\ref{lem:social-cost-uniform}, this proves Theorem~\ref{thm:distortion-uniform}.

\subsection{Symmetric distributions}
\label{sec:distortion-symmetric}
We now extend the second-moment analysis to population distributions that are symmetric around $1/2$.
Throughout this subsection, write again $\Qcal_t=\abs{O_t-\frac{1}{2}}$.
Assume further that $G$ is absolutely continuous on $(0,1)$ with density $g$ satisfying
$g\!\left(\frac{1}{2}+s\right)=g\!\left(\frac{1}{2}-s\right)$ for $s\in[0,1/2)$.
Under these assumptions, the main symmetric-distribution bound is the following.

\begin{theorem}
    \label{thm:centered-second-moment-general-g}
    Assume that $G$ is centered and satisfies the absolute-continuity and symmetric-density conditions stated above.
    Let $U\sim G$.
    For every $t\ge0$,
$\Exp[\Qcal_t^2]
\le
\frac{1}{4}\left(\frac{1+\lambda^2}{2}\right)^t+
\frac{2(1-\lambda)}{1+\lambda}
\Var(U).$
At stationarity,
$
\Exp[\Qcal_\infty^2]
\le
\frac{2(1-\lambda)}{1+\lambda} \Var(U).
$
\end{theorem}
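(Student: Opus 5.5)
The plan is to derive a one-step linear recursion for $\Exp[\Qcal_t^2]$ of the form
\[
\Exp[\Qcal_{t+1}^2\mid O_t]\le \frac{1+\lambda^2}{2}\Qcal_t^2+(1-\lambda)^2\Var(U),
\]
and then iterate it. Once this holds, taking expectations and unrolling from $\Exp[\Qcal_0^2]\le\frac14$ gives
\[
\Exp[\Qcal_t^2]\le \frac14\Bigl(\frac{1+\lambda^2}{2}\Bigr)^t+(1-\lambda)^2\Var(U)\sum_{k\ge0}\Bigl(\frac{1+\lambda^2}{2}\Bigr)^k .
\]
The geometric sum equals $\frac{2}{1-\lambda^2}$, so the second term is exactly $\frac{2(1-\lambda)}{1+\lambda}\Var(U)$. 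The stationary statement follows by letting $t\to\infty$, or directly by plugging stationarity into the recursion.

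To get the recursion, fix $O_t=x$ and set $y=x-\frac12$. By the symmetry of $G$ about $\frac12$ I may assume $y\ge0$. Let $M=\med(x,U,V)$ and $W=M-\frac12$. By Equation~\eqref{eq:transition}, $O_{t+1}-\frac12=\lambda y+(1-\lambda)W$, hence
\[
\Exp[\Qcal_{t+1}^2\mid x]=\lambda^2y^2+2\lambda(1-\lambda)\,y\,\Exp[W]+(1-\lambda)^2\Exp[W^2].
\]
I will bound the two terms separately.

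\emph{Cross term.} Write $\Exp[W]=y+b_G(x)$. By Lemma~\ref{lem:centered-fixed-point}, $\theta_G=\frac12$, so Equation~\eqref{eq:monotone} gives $y\,b_G(x)\le-\frac12y^2$. Therefore $y\Exp[W]\le\frac12y^2$.

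\emph{Second moment.} I want $\Exp[W^2]\le\Var(U)+\frac12y^2$ (I expect in fact $\Exp[W^2]\le\Var(U)$). With these two bounds the right-hand side is at most
\[
\lambda^2y^2+\lambda(1-\lambda)y^2+\frac{(1-\lambda)^2}{2}y^2+(1-\lambda)^2\Var(U)=\frac{1+\lambda^2}{2}y^2+(1-\lambda)^2\Var(U),
\]
which is exactly the recursion.

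The main step is the second-moment bound, and I would prove it with the threshold representation $\Exp[W^2]=\int_0^{1/2}2s\bigl(\Pr(W>s)+\Pr(W<-s)\bigr)\,ds$. Let $H$ be the CDF of $A=U-\frac12$, which is symmetric about $0$.

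For $0\le s<y$, we have $M>\frac12+s$ unless both bliss points are $\le\frac12+s$. So $\Pr(W>s)=1-H(s)^2$.

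For $s\ge y$, both bliss points must exceed $\frac12+s$, so $\Pr(W>s)=(1-H(s))^2$.

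Since $x\ge\frac12>\frac12-s$, we get $\Pr(W<-s)=H(-s)^2=(1-H(s))^2$ for all $s\ge0$, using symmetry and absolute continuity.

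Adding these, the integrand is $2s\cdot2(1-H(s))$ on $[0,y)$ and $2s\cdot2(1-H(s))^2\le2s\cdot2(1-H(s))$ on $[y,\frac12]$. Meanwhile $\Var(U)=\Exp[A^2]=\int_0^{1/2}2s\cdot2(1-H(s))\,ds$. Hence $\Exp[W^2]\le\Var(U)$, which is even stronger than needed.

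The care points are the symmetry reduction to $y\ge0$ and the absence of atoms, which makes strict versus weak inequalities irrelevant; both are guaranteed by the stated assumptions.
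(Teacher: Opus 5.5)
Your proof is correct. It reaches the one-step recursion by a genuinely different route from the paper.

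\textbf{The paper's route.} The paper proves Lemma~\ref{lem:centered-second-moment-general-g} by direct computation. It integrates $(z-\frac12)^2$ against the transition density of Proposition~\ref{prop:general-transition-density}, changes variables, and uses the symmetry of both $G$ and $g$ several times. It then expands into four groups, bounds two of them via $s\le q$, and observes that all the $(G(\frac12+q)-\frac12)^2$ terms cancel. What remains is the constant $4(1-\lambda)^2\int_0^{1/2}(\frac12-u)^2G(u)g(u)\,du$, which is only afterwards relaxed to $(1-\lambda)^2\Var(U)$.

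\textbf{Your route.} You split $O_{t+1}-\frac12=\lambda y+(1-\lambda)(M-\frac12)$. The cross term is handled by the drift inequality together with $\theta_G=\frac12$. This is the mechanism of Theorem~\ref{thm:general-stationary-concentration}, except that you center $M$ at $\frac12$ rather than at $x$. That choice is what lets the square term be compared with $\Var(U)$ instead of crudely bounded by $1$. You then control $\Exp[(M-\frac12)^2]$ by a clean tail comparison with $U-\frac12$.

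\textbf{What each buys.} Your argument is shorter, reuses existing lemmas, and never touches the density. It needs only that $U$ and $1-U$ have the same law; even atoms are harmless if you write the lower tail as $\Pr(U<\frac12-s)=\Pr(U>\frac12+s)$. The paper's computation buys a sharper intermediate constant, which in the uniform case reproduces the $(1-\lambda)^2/48$ of Lemma~\ref{lem:centered-second-moment}.

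\textbf{A gain you discarded.} Use the bound you actually proved, $\Exp[W^2]\le\Var(U)$, rather than the weaker $\Var(U)+\frac12y^2$. The recursion then becomes $\Exp[\Qcal_{t+1}^2\mid O_t]\le\lambda\Qcal_t^2+(1-\lambda)^2\Var(U)$. This gives $\Exp[\Qcal_t^2]\le\frac14\lambda^t+(1-\lambda)\Var(U)$ and $\Exp[\Qcal_\infty^2]\le(1-\lambda)\Var(U)$. That improves the stated bound by the factor $\frac{1+\lambda}{2}$, and it is consistent with the exact uniform recursion of Lemma~\ref{lem:centered-second-moment}.
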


The proof uses the following one-step analogue of Lemma~\ref{lem:centered-second-moment}.

\begin{lemma}
\label{lem:centered-second-moment-general-g}
Assume that $G$ is centered and satisfies the absolute-continuity and symmetric-density conditions stated above.
\begin{equation}
\begin{aligned}
\Exp[\Qcal_{t+1}^2] &\le (1-\lambda)^2 4\int_0^{1/2}\left(\frac{1}{2}-u\right)^2G(u)g(u)\,du + \frac{1+\lambda^2}{2} \Exp[\Qcal_t^2].
\end{aligned}
\end{equation}
\end{lemma}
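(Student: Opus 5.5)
The plan is to prove the inequality pointwise in the current state and then take expectations. Fix $O_t=x$ and write $y\coloneqq x-\frac12$, $M\coloneqq\med(x,U,V)$ and $W\coloneqq M-\frac12$. By Equation~\eqref{eq:transition}, $O_{t+1}-\frac12=\lambda y+(1-\lambda)W$, so
\[
\Exp\bigl[\Qcal_{t+1}^2\mid O_t=x\bigr]=\lambda^2y^2+2\lambda(1-\lambda)\,y\,\Exp[W\mid x]+(1-\lambda)^2\,\Exp[W^2\mid x].
\]
Set $C\coloneqq 4\int_0^{1/2}(\frac12-u)^2G(u)g(u)\,du$. It then suffices to prove two one-step facts about the unanchored median step:
\[
\text{(i)}\ \ y\,\Exp[W\mid x]\le \tfrac12 y^2,
\qquad\qquad
\text{(ii)}\ \ \Exp[W^2\mid x]\le C+\tfrac12 y^2 .
\]
Given (i) and (ii), the $y^2$-coefficient is $\lambda^2+\lambda(1-\lambda)+\frac{(1-\lambda)^2}{2}=\frac{1+\lambda^2}{2}$. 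The constant term is $(1-\lambda)^2C$. Averaging over $O_t$ then yields the lemma.

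\emph{Step (i).} We have $\Exp[W\mid x]=y+b_G(x)$. By Lemma~\ref{lem:centered-fixed-point}, $\theta_G=\frac12$, so Equation~\eqref{eq:monotone} gives $y\,b_G(x)\le-\frac12y^2$. Hence $y\,\Exp[W\mid x]\le y^2-\frac12y^2=\frac12y^2$.

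\emph{Step (ii), the main obstacle.} By symmetry of $g$ it suffices to take $x\ge\frac12$. Let $L=\min\{U,V\}$ and $R=\max\{U,V\}$; their densities are $2(1-G)g$ and $2Gg$. Split according to the position of $x$ relative to $L$ and $R$:
\[
\Exp[W^2\mid x]=2G(x)(1-G(x))\,y^2+\Exp\bigl[(R-\tfrac12)^2;R<x\bigr]+\Exp\bigl[(L-\tfrac12)^2;L>x\bigr].
\]
By symmetry, $\Exp[(R-\frac12)^2;R<\frac12]=\Exp[(L-\frac12)^2;L>\frac12]=\frac C2$. Comparing with the reference point $x=\frac12$, the two tail terms therefore sum to $C$ plus the correction
\[
\Delta\coloneqq\int_{1/2}^x(u-\tfrac12)^2\,2\bigl(2G(u)-1\bigr)g(u)\,du .
\]
This uses $G\ge\frac12$ on $[\frac12,x]$, which holds because $G$ is centered. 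Since $\frac{d}{du}(2G-1)^2=4(2G-1)g$, integration by parts gives
\[
\Delta=\tfrac12y^2\bigl(2G(x)-1\bigr)^2-\int_{1/2}^x(u-\tfrac12)\bigl(2G(u)-1\bigr)^2du\le\tfrac12y^2\bigl(2G(x)-1\bigr)^2 .
\]
The boundary term at $\frac12$ vanishes because $G(\frac12)=\frac12$. Therefore
\[
\Exp[W^2\mid x]\le C+y^2\Bigl(2G(1-G)+\tfrac12(2G-1)^2\Bigr)=C+\tfrac12y^2,
\]
since the bracket is identically $\frac12$.

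The delicate points in this step are the following. The bookkeeping of strict versus weak inequalities is harmless by absolute continuity. The sign of $2G-1$ is needed to drop the integral term. The case $x<\frac12$ is handled by reflecting $u\mapsto1-u$. With (i) and (ii) in hand, the combination in the first paragraph completes the proof.
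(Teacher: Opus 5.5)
Your proof is correct, and it reaches the paper's bound by a different, more structural route. The paper works directly with the anchored kernel of Proposition~\ref{prop:general-transition-density}. It substitutes $u=(z-\lambda x_q)/(1-\lambda)$ and folds the integrals using the symmetry of $G$ and $g$. It then groups the expanded square into the atom term and the $\lambda^2q^2$, $(1-\lambda)^2s^2$ and $\lambda(1-\lambda)qs$ groups, and bounds the last two with $s\le q$. Finally, the leftover $(G(\frac12+q)-\frac12)^2$ terms cancel because $-2+2\lambda^2+2(1-\lambda)^2+4\lambda(1-\lambda)=0$.

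You instead pass through the unanchored median, with the atom absorbed into $W$. This leaves two quantities to control. The conditional mean you get for free from Equation~\eqref{eq:monotone} and Lemma~\ref{lem:centered-fixed-point}. The conditional second moment needs one integration by parts.

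The two arguments discard exactly the same slack. By integration by parts, the paper's $s\le q$ step in the cross group throws away $4\lambda(1-\lambda)q\int_0^q(G(\frac12+s)-\frac12)^2\,ds$, which is precisely what your drift inequality drops. Similarly, its $s\le q$ step in the $(1-\lambda)^2$ group throws away $(1-\lambda)^2\int_{1/2}^x(u-\frac12)(2G(u)-1)^2\,du$, which is your remainder. The paper's final cancellation is your identity $2G(1-G)+\frac12(2G-1)^2=\frac12$.

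Your route buys three things. It is shorter. It explains the coefficient as $\frac{1+\lambda^2}{2}=\lambda^2+\lambda(1-\lambda)+\frac{(1-\lambda)^2}{2}$. It also shows where symmetry enters: step (i) holds for every $G$ with $\frac12$ replaced by $\theta_G$, and only step (ii) uses centering. The paper's computation, in turn, is self-contained and mirrors the exact uniform calculation of Lemma~\ref{lem:centered-second-moment}.

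Two of your side remarks are unnecessary, though harmless. The boundary term at $\frac12$ vanishes because of the factor $(u-\frac12)^2$, whatever $G(\frac12)$ is. Dropping $\int_{1/2}^x(u-\frac12)(2G(u)-1)^2\,du$ needs only $u\ge\frac12$, since $2G-1$ appears squared. Likewise, the formula for $\Delta$ does not use $G\ge\frac12$.
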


The proof follows the same strategy as the proof of Lemma~\ref{lem:centered-second-moment}: write $\Exp[\Qcal_{t+1}^2\mid \Qcal_t]$ using the transition density from Proposition~\ref{prop:general-transition-density}, integrate the squared distance to the median, and then use symmetry to simplify the expressions in the computation.
The computation itself is more involved because it requires multiple applications of symmetries.

Because $G(u)\le\frac{1}{2}$ for $u\le\frac{1}{2}$ and $g$ is symmetric,
\[
4\int_0^{1/2}\left(\frac{1}{2}-u\right)^2G(u)g(u)\,du
\le
2\int_0^{1/2}\left(\frac{1}{2}-u\right)^2g(u)\,du
\le
\Var(U).
\]
Therefore
\begin{equation}
    \begin{aligned}
    \Exp[\Qcal_{t+1}^2]
    &\leq (1-\lambda)^2 \Var(U) + \frac{1+\lambda^2}{2} \Exp[\Qcal_t^2],
    \end{aligned}
\end{equation}
and Theorem~\ref{thm:centered-second-moment-general-g} follows by iterating this recursion.

\section{Simulation results}
\label{sec:simulations}
We approximate the anchored deliberation Markov chain by discretizing $[0,1]$ into $5000$ equal bins and constructing the transition matrix from the conditional CDF of the one-step update.
For each population distribution and each $\lambda\in\{0.00,0.01,\ldots,0.99\}$, we compute an approximation of the stationary distribution by iterating the transition matrix from the uniform distribution up to the $1$-Wasserstein mixing-time upper bound, using tolerance $\varepsilon_{\mathrm{mix}}=1/5000$.
The empirical mixing time is the first step at which the discretized $1$-Wasserstein distance to this stationary approximation is at most $\varepsilon_{\mathrm{mix}}$.
Further implementation details and additional Beta instances are reported in Appendix~\ref{app:more_simulations}.

The simulations illustrate the same trade-off as the theoretical results.
Figure~\ref{fig:beta-dynamics-main} gives a representative example for $\mathrm{Beta}(0.5,0.7)$.
As $\lambda$ increases, the empirical mixing becomes slower, the stationary median moves toward the deliberative fixed point, and the stationary distribution becomes more concentrated around that point.
Figure~\ref{fig:beta-distortion-simulations} shows that anchoring reduces stationary distortion: for the uniform distribution, the empirical curve lies between the theoretical upper and lower bounds, and the same decreasing pattern of stationary distortion appears across five Beta population laws.

\begin{figure}[t]
    \centering
    \begin{subfigure}[t]{0.32\textwidth}
        \centering
        \includegraphics[width=\linewidth]{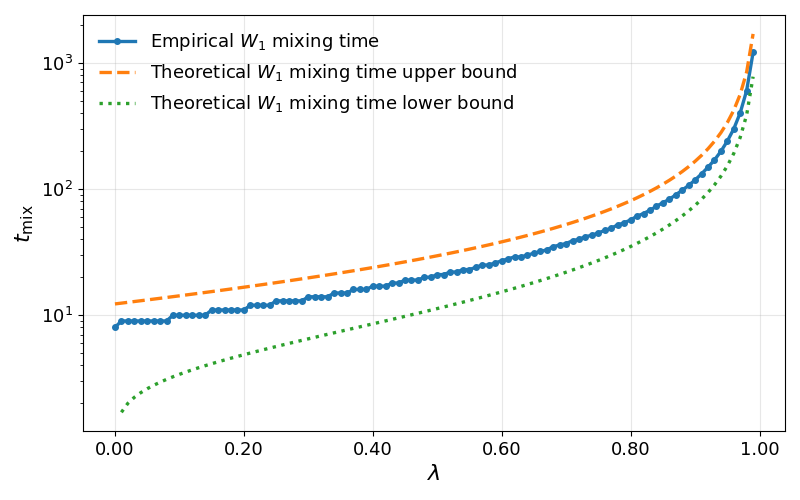}
        \caption{Mixing time}
    \end{subfigure}
    \begin{subfigure}[t]{0.32\textwidth}
        \centering
        \includegraphics[width=\linewidth]{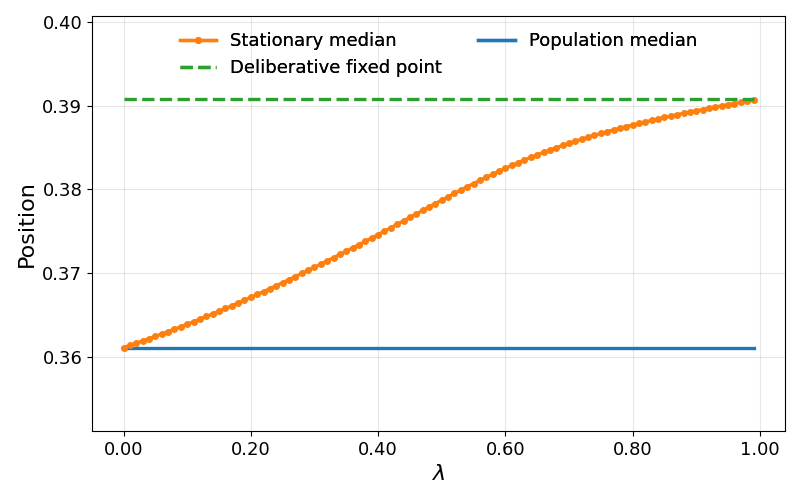}
        \caption{Medians and fixed point}
    \end{subfigure}
    \begin{subfigure}[t]{0.32\textwidth}
        \centering
        \includegraphics[width=\linewidth]{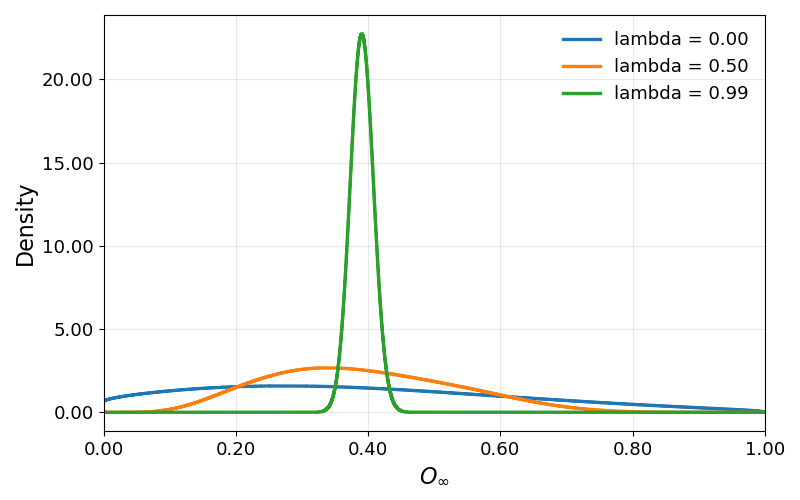}
        \caption{Stationary density}
    \end{subfigure}
    \caption{Representative simulation results for $\mathrm{Beta}(0.5,0.7)$. Empirical mixing time increases with $\lambda$ and is plotted on a logarithmic scale against the theoretical bounds. The stationary median moves toward the deliberative fixed point, and stronger anchoring concentrates the stationary distribution.}
    \label{fig:beta-dynamics-main}
\end{figure}

\begin{figure}[t]
    \centering
    \begin{subfigure}[t]{0.45\textwidth}
        \centering
        \includegraphics[width=\linewidth]{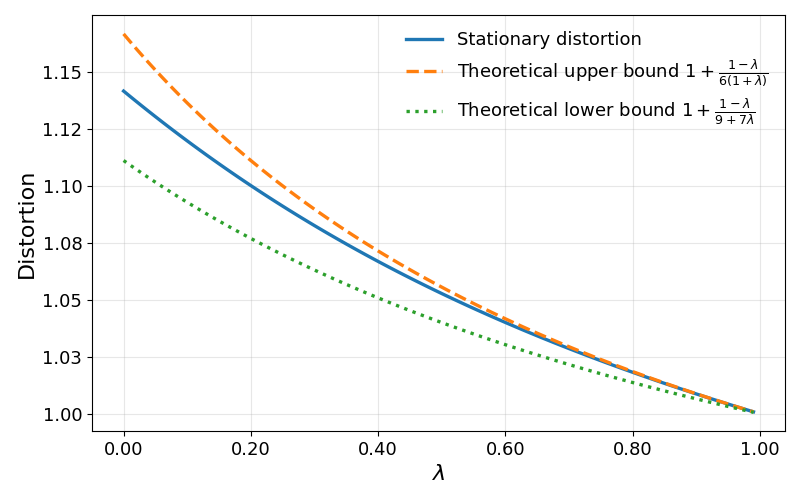}
        \caption{Uniform population}
    \end{subfigure}
    \hfill
    \begin{subfigure}[t]{0.45\textwidth}
        \centering
        \includegraphics[width=\linewidth]{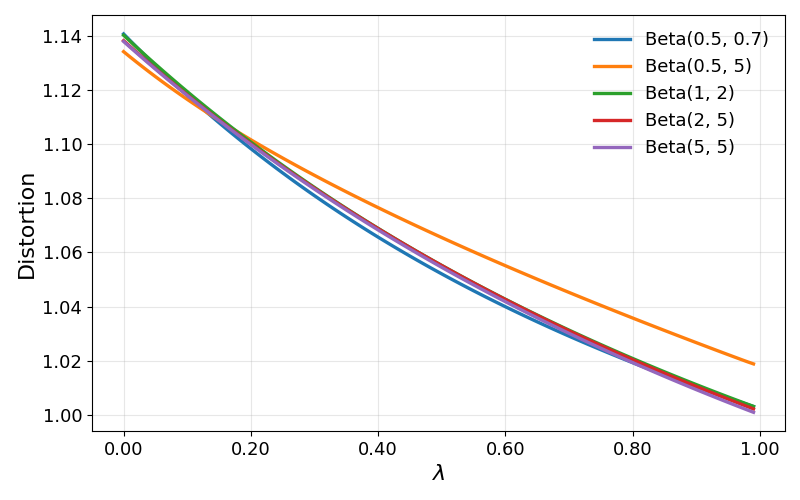}
        \caption{Selected Beta populations}
    \end{subfigure}
    \caption{Stationary distortion under anchored sequential deliberation. In the uniform case, the empirical distortion is compared with the theoretical upper and lower bounds. Across the selected Beta populations, stationary distortion decreases as anchoring becomes stronger.}
    \label{fig:beta-distortion-simulations}
\end{figure}

\section{Conclusion}
\label{sec:conclusion}
This paper studies how anchoring changes sequential deliberation.
Anchoring slows convergence 
but it can improve the long-run quality of the decision.
At stationarity, the anchored process has social cost no larger than that of the unanchored process for every fixed population distribution, and stronger anchoring concentrates outcomes around a unique deliberative fixed point.
Several extensions remain open.
First, we currently assume that $\lambda$ is the same constant for all individuals.
This assumption may not hold in practice, since anchoring can vary across people; future work could allow different individuals or groups to have different values of $\lambda$.
Second, the model could be extended from the line to median graphs, as in Fain et al.~\cite{fain2017sequential}.
The main technical difficulty is defining anchored stances on the median graph: unlike on the line, an interpolation between a bliss point and the current outcome may not be unique.
Third, future work could study the use of AI tools.
Recent work incorporates AI tools into group decision-making processes~\cite{tessler2024ai,fish2026generative,boehmer2025generative,de2026question}, and a natural question is whether such tools can be integrated into sequential deliberation to improve deliberative outcomes.

\section*{Acknowledgments}
We used ChatGPT with the GPT-5.6 Sol model to assist with simulation code and parts of the theoretical analysis. It helped generalize results from the uniform distribution to general continuous distributions and obtain the monotonicity property of the stationary distortion under the authors' supervision.
This research was funded by the Knut and Alice Wallenberg Foundation.

\bibliographystyle{alpha}
\bibliography{ref}

\clearpage

\appendix

\section{Omitted properties of the distance functions}
\label{app:omitted-definitions}

The convergence analysis uses both total variation distance and $1$-Wasserstein distance.
We use the following form of the $1$-Wasserstein distance, adapted from \cite[Definition 6.1]{villani2009optimal}.

\begin{definition}
    \label{def:wasserstein}
    Let $\mathcal{P}([0, 1])$ be the set of probability measures on the interval $[0, 1]$.
    The $1$-Wasserstein distance between $\mu$, $\nu \in \mathcal{P}([0, 1])$ is
    \begin{equation}
        \begin{aligned}
        W_1(\mu, \nu) &\coloneqq \inf_{\pi \in C(\mu, \nu)} \int_{[0, 1] \times [0, 1]} |x - y | d \,\pi(x, y) \\
        &= \inf  \Exp[\abs{X - Y}] , \text{where $(X, Y)$ is a coupling of $\mu$ and $\nu$.}
        \end{aligned}
    \end{equation}
    Here $C(\mu, \nu)$ is the set of all joint probability measures on $[0,1] \times [0, 1]$ whose marginals are $\mu$ and $\nu$.
\end{definition}

We also use the Kantorovich--Rubinstein duality theorem, adapted from \cite[Remark 6.5]{villani2009optimal}.

\begin{theorem}[Kantorovich--Rubinstein duality]
    \label{thm:kantorovich-rubinstein}
    Let $\mu,\nu\in\mathcal{P}([0,1])$.
    Then the $1$-Wasserstein distance satisfies the dual representation
    \[
    W_1(\mu,\nu)
    =
    \sup_{f\in \mathrm{Lip}_1([0,1])}
    \left\{
    \int_{[0,1]} f(x)\,d\mu(x)
    -
    \int_{[0,1]} f(x)\,d\nu(x)
    \right\}.
    \]
    Here $\mathrm{Lip}_1([0,1])$ is the set of all Lipschitz functions on $[0,1]$ with Lipschitz constant $1$.
    \end{theorem}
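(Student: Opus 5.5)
On $[0,1]$ I would prove the duality directly, rather than invoke the general theory. The plan is to show that both sides equal $\int_0^1 \abs{F_\mu(r)-F_\nu(r)}\,dr$, where $F_\mu,F_\nu$ are the CDFs. The argument has three parts: an easy inequality, an explicit near-optimal coupling, and an explicit optimal Lipschitz test function. It reuses the threshold identity $\abs{x-y}=\int_0^1 \bigl(\ind_{x\le r<y}+\ind_{y\le r<x}\bigr)\,dr$, which is already used in Lemma~\ref{lem:social-cost-alternative}.

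\textbf{Weak duality.} Fix any coupling $(X,Y)$ of $\mu,\nu$ and any $f\in\mathrm{Lip}_1([0,1])$. Then
\[
\int f\,d\mu-\int f\,d\nu=\Exp[f(X)-f(Y)]\le \Exp\abs{X-Y}.
\]
Taking the supremum over $f$ and the infimum over couplings gives $\sup\{\cdots\}\le W_1(\mu,\nu)$.

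\textbf{Upper bound on $W_1$ via the quantile coupling.} Let $W\sim\mathrm{Unif}(0,1)$, and set $X=F_\mu^{-1}(W)$ and $Y=F_\nu^{-1}(W)$ using generalized inverses $F^{-1}(w)=\inf\{x:F(x)\ge w\}$. The standard fact $\{F^{-1}(W)\le r\}=\{W\le F(r)\}$ shows that $X\sim\mu$ and $Y\sim\nu$. It also shows that the events $\{X\le r\}$ and $\{Y\le r\}$ are nested for every $r$. Hence
\[
\Pr(X\le r<Y)+\Pr(Y\le r<X)=\abs{\Pr(W\le F_\mu(r))-\Pr(W\le F_\nu(r))}=\abs{F_\mu(r)-F_\nu(r)}.
\]
Integrating the threshold identity over $r$ and applying Fubini then gives $W_1(\mu,\nu)\le\Exp\abs{X-Y}=\int_0^1\abs{F_\mu-F_\nu}\,dr$.

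\textbf{Matching lower bound via an explicit test function.} Define
\[
f(x)=\int_0^x s(r)\,dr,\qquad s(r)=\mathrm{sign}\bigl(F_\nu(r)-F_\mu(r)\bigr).
\]
Since $\abs{s}\le1$, $f$ is $1$-Lipschitz. Write $f(x)=\int_0^1 s(r)\ind_{r<x}\,dr$ and apply Fubini:
\[
\int f\,d\mu-\int f\,d\nu=\int_0^1 s(r)\bigl((1-F_\mu(r))-(1-F_\nu(r))\bigr)dr=\int_0^1 s(r)\bigl(F_\nu(r)-F_\mu(r)\bigr)dr=\int_0^1\abs{F_\mu-F_\nu}\,dr.
\]
Combining this with the coupling bound and weak duality closes the chain of inequalities. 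The supremum is therefore attained and equals $W_1(\mu,\nu)$.

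\textbf{Main obstacle.} The delicate step is the quantile coupling when $\mu$ or $\nu$ has atoms, since both the paper's two-cluster examples and Lemma~\ref{lemma:worst-case-tightness} have atoms. I would verify carefully that $F^{-1}(w)\le r\iff w\le F(r)$, using right-continuity of $F$. I would also check that boundary conventions (strict versus weak inequalities at the thresholds) change the integrand only on a countable set of $r$, which does not affect the integrals. The same remark handles measurability of $s$, which is a Borel function because it is a sign of a difference of monotone functions.
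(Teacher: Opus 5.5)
Your proof is correct, and it takes a genuinely different route from the paper. The paper does not prove this statement at all; it imports it from Villani's general Kantorovich--Rubinstein theorem, which holds on arbitrary Polish spaces and whose hard direction ($W_1\le\sup$) goes through the full Kantorovich duality machinery. You instead exploit the order structure of $[0,1]$. The quantile coupling and the test function $f(x)=\int_0^x \mathrm{sign}\bigl(F_\nu(r)-F_\mu(r)\bigr)\,dr$ are both explicitly optimal, and they sandwich weak duality between two copies of $\int_0^1\abs{F_\mu(r)-F_\nu(r)}\,dr$. Your route gives an elementary, self-contained argument with explicit optimizers. It also yields the closed form $W_1(\mu,\nu)=\int_0^1\abs{F_\mu-F_\nu}\,dr$ as a by-product, which is exactly the identity behind the discretized $W_1^{(n)}$ used in the simulation appendix, so your argument also justifies that computation. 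The citation route buys generality the paper never needs. The atom issue you flag is fully handled by $F^{-1}(w)\le r\iff w\le F(r)$ for $w\in(0,1)$, using right-continuity together with $F(x)=0$ for $x<0$ and $F(1)=1$. Moreover, your threshold events $\{X\le r\}$ and $\{X>r\}$ match the CDF exactly, so no boundary-convention correction is actually required. Finally, every use of this theorem in the paper relies only on the easy inequality $\sup\le W_1$. This covers Corollary~\ref{cor:wasserstein-dominates-mean-difference} with $f(x)=\pm x$, the lower bound in Theorem~\ref{thm:wasserstein-mixing-time}, and Lemma~\ref{lem:wasserstein-rate-tight}. So your weak-duality step alone already suffices for the paper's purposes.
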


Plugging either $f(x)=x$ or $f(x)=-x$ into Theorem~\ref{thm:kantorovich-rubinstein} gives the following corollary.
\begin{corollary}
    \label{cor:wasserstein-dominates-mean-difference}
    For any $\mu,\nu\in\mathcal{P}([0,1])$, if $X\sim\mu$ and $Y\sim\nu$, then
    \[
    W_1(\mu,\nu)\ge
    \abs{\Exp[X]-\Exp[Y]}.
    \]
    \end{corollary}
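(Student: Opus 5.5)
Take $f(x)=x$ and $f(x)=-x$ in Theorem~\ref{thm:kantorovich-rubinstein}. Both functions lie in $\mathrm{Lip}_1([0,1])$, because $\abs{f(x)-f(y)}=\abs{x-y}$ for all $x,y\in[0,1]$. The duality formula expresses $W_1(\mu,\nu)$ as a supremum over this class, so $W_1(\mu,\nu)$ is at least the value of the functional at each admissible test function.

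With $f(x)=x$ this gives
\[
W_1(\mu,\nu)\ge \int x\,d\mu(x)-\int x\,d\nu(x)=\Exp[X]-\Exp[Y].
\]
With $f(x)=-x$ it gives $W_1(\mu,\nu)\ge \Exp[Y]-\Exp[X]$. The larger of the two right-hand sides is $\abs{\Exp[X]-\Exp[Y]}$, which is the claim. The expectations are finite because the measures are supported on the bounded interval $[0,1]$, so every integral above is well defined.

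As an independent check that avoids duality, I could argue from Definition~\ref{def:wasserstein} directly. Fix any coupling $(X',Y')$ of $\mu$ and $\nu$. By Jensen's inequality (or the triangle inequality for integrals),
\[
\Exp\abs{X'-Y'}\ge \abs{\Exp[X'-Y']}=\abs{\Exp[X]-\Exp[Y]},
\]
where the equality uses only the marginals. Taking the infimum over couplings gives the same bound. Neither route has a real obstacle; the only point to verify is that the test functions are $1$-Lipschitz, which is immediate.
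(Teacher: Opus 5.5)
Your proof is correct and matches the paper's argument, which plugs $f(x)=x$ and $f(x)=-x$ into the Kantorovich--Rubinstein duality. Your alternative check via an arbitrary coupling and $\Exp\abs{X'-Y'}\ge\abs{\Exp[X'-Y']}$ is also valid, and it avoids duality entirely.
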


The total-variation distance is adapted from \cite[Proposition 4.7]{levin2026markov}.

\begin{definition}
    \label{def:total-variation}
    Let $\mathcal{P}([0, 1])$ be the set of probability measures on the interval $[0, 1]$.
    The total-variation distance between $\mu$, $\nu \in \mathcal{P}([0, 1])$ is
    \begin{equation}
        \begin{aligned}
        \|\mu - \nu\|_{\mathrm{TV}} &\coloneqq \sup_{A \subseteq [0, 1]} \abs{\mu(A) - \nu(A)} \\
        &= \inf \Pr(X \neq Y), \text{where $(X, Y)$ is a coupling of $\mu$ and $\nu$.}
        \end{aligned}
    \end{equation}
\end{definition}

For any stationary distribution $\pi_{\lambda,G}$, define the $1$-Wasserstein mixing time by
\[
t_{\mathrm{mix}}^{(W_1)}(\varepsilon)
\coloneqq
\min\Bigl\{t:\sup_{x\in[0,1]}W_1(\delta_xK^t, \pi_{\lambda,G})\le \varepsilon\Bigr\}.
\]

The total-variation mixing time is defined as follows.
\[
t_{\mathrm{mix}}^{(\mathrm{TV})}(\varepsilon)
\coloneqq
\min\Bigl\{t:\sup_{x\in[0,1]}\|\delta_xK^t - \pi_{\lambda,G}\|_{\mathrm{TV}}\le \varepsilon\Bigr\}.
\]

\section{One-step transition law}
\label{app:one-step-transition-law}
We first compute the conditional CDF of an anchored stance.
The same formula applies to $\Utilde_x$ and $\Vtilde_x$.
\begin{proposition}
\label{prop:general-stance-law}
For every $x,z \in [0,1]$, we denote the conditional CDF of $\Utilde_x$ (as well as $\Vtilde_x$) as $F_x(z)$.
\[
\begin{aligned}
F_x(z)
&\coloneqq
\Pr(\Utilde_x \le z)
=
\begin{cases}
0, & z < \lambda x,\\[2mm]
G\!\left(\dfrac{z-\lambda x}{1-\lambda}\right), & \lambda x \le z \le \lambda x+(1-\lambda),\\[3mm]
1, & z > \lambda x+(1-\lambda).
\end{cases}
\end{aligned}
\]
\end{proposition}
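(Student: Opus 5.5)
The plan is to compute the CDF directly from the definition $\Utilde_x=(1-\lambda)U+\lambda x$, with $x$ fixed and $U\sim G$. The map $u\mapsto (1-\lambda)u+\lambda x$ is strictly increasing and affine, because $1-\lambda>0$ when $\lambda<1$. It therefore sends $[0,1]$ onto the interval $[\lambda x,\lambda x+(1-\lambda)]$, and the proof splits into three cases according to where $z$ lies relative to that interval.

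For $\lambda x\le z\le \lambda x+(1-\lambda)$, I will invert the affine map inside the event. This gives
\[
\{\Utilde_x\le z\}=\{(1-\lambda)U\le z-\lambda x\}=\Bigl\{U\le \tfrac{z-\lambda x}{1-\lambda}\Bigr\},
\]
where dividing by $1-\lambda>0$ preserves the inequality. Hence $F_x(z)=G\bigl(\frac{z-\lambda x}{1-\lambda}\bigr)$, and the argument of $G$ lies in $[0,1]$.

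For $z<\lambda x$, note that $U\ge 0$ almost surely, so $\Utilde_x\ge\lambda x>z$ almost surely and $F_x(z)=0$. For $z>\lambda x+(1-\lambda)$, we have $U\le1$, so $\Utilde_x\le \lambda x+(1-\lambda)<z$ and $F_x(z)=1$.

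The only delicate point is the boundary $z=\lambda x$ when $G$ has an atom at $0$. There the middle formula gives $G(0)=\Pr(U\le 0)=\Pr(U=0)$, which is the correct value. So the case split is consistent, and the weak inequalities at the endpoints cause no problem.

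The same argument applies verbatim to $\Vtilde_x$, since $V\sim G$ as well. No step is a real obstacle; the only care needed is the sign of $1-\lambda$ and the atom at the endpoints.
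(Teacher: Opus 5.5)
Your proposal is correct and follows the same route as the paper: a three-case split on $z$ relative to the support $[\lambda x,\lambda x+(1-\lambda)]$, with the middle case obtained by inverting the increasing affine map, which is valid because $1-\lambda>0$. Your extra remark about a possible atom of $G$ at $0$ when $z=\lambda x$ is a correct consistency check that the paper leaves implicit.
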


We next compute the conditional CDF of the updated outcome $O_t$. The median structure gives the two cases. If $z<x$, then $\med\{\Utilde_x,\Vtilde_x,x\}\le z$ only when both anchored stances are at most $z$. If $z\ge x$, then the complementary event $O_t>z$ occurs only when both anchored stances exceed $z$. The atom at $x$ is the jump of this CDF.

\begin{proposition}
\label{prop:general-transition-cdf}
For every $x,z \in [0,1]$,
\[
\Pr(O_t \le z \mid O_{t-1}=x)
=
\begin{cases}
F_x(z)^2, & z < x,\\[2mm]
1-\bigl(1-F_x(z)\bigr)^2, & z \ge x.
\end{cases}
\]
If $G$ is continuous at $x$,
\[
\Pr(O_t=x \mid O_{t-1}=x)=2G(x)\bigl(1-G(x)\bigr).
\]
\end{proposition}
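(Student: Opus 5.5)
The statement to prove is Proposition~\ref{prop:general-transition-cdf}. It follows from a case analysis on the median, using the stance law of Proposition~\ref{prop:general-stance-law}.

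\emph{Reduction to i.i.d.\ stances.} Condition on $O_{t-1}=x$. Then $\Utilde_x=(1-\lambda)U+\lambda x$ and $\Vtilde_x=(1-\lambda)V+\lambda x$ are i.i.d., because $U,V$ are i.i.d. Each has CDF $F_x$ from Proposition~\ref{prop:general-stance-law}, and
\[
O_t=\med\{\Utilde_x,\Vtilde_x,x\}.
\]

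\emph{Case $z<x$.} The median of three numbers is at most $z$ exactly when at least two of them are at most $z$. The third point $x$ exceeds $z$, so $O_t\le z$ happens exactly when both $\Utilde_x\le z$ and $\Vtilde_x\le z$. Independence gives the probability $F_x(z)^2$.

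\emph{Case $z\ge x$.} The median exceeds $z$ exactly when at least two of the three points exceed $z$. Since $x\le z$, this happens exactly when both stances exceed $z$, which has probability $(1-F_x(z))^2$. Taking the complement gives $1-(1-F_x(z))^2$.

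\emph{The atom at $x$.} The mass at $x$ is the jump of the conditional CDF at $z=x$:
\[
\Pr(O_t=x\mid O_{t-1}=x)=\bigl[1-(1-F_x(x))^2\bigr]-\lim_{z\uparrow x}F_x(z)^2.
\]
Evaluate the stance CDF at $z=x$. We have $\frac{x-\lambda x}{1-\lambda}=x$, and $x$ lies in $[\lambda x,\lambda x+1-\lambda]$. Hence $F_x(x)=G(x)$.

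For the left limit, $z\mapsto G\!\left(\frac{z-\lambda x}{1-\lambda}\right)$ is left-continuous at $z=x$ when $G$ is continuous at $x$, so $F_x(z)\to G(x)$ as $z\uparrow x$. The boundary case $x=0$ is vacuous, since there is no $z<0$ and then $G(0)=0$ by continuity. Alternatively, note that $G$ continuous at $x$ means $\Pr(U=x)=0$, and the left limit equals $\Pr(\Utilde_x<x)^2=\Pr(U<x)^2=G(x)^2$.

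The jump is therefore
\[
1-(1-G)^2-G^2=2G(1-G),\qquad G=G(x).
\]
This can also be read directly: $O_t=x$ iff the two stances lie on weakly opposite sides of $x$, and ties have probability zero.

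\emph{Main obstacle.} The only delicate point is the boundary and atom bookkeeping: $F_x$ at the edges of its support, and whether $G$ has atoms. Continuity of $G$ at $x$ removes ties at $x$, and the piecewise definition in Proposition~\ref{prop:general-stance-law} covers the out-of-range values of $z$ automatically.
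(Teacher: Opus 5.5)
Your proposal is correct and follows the paper's proof essentially step for step. You condition on $O_{t-1}=x$, use that the median is at most $z$ (resp.\ exceeds $z$) exactly when both i.i.d.\ stances are, and read off the atom as the jump of the conditional CDF at $z=x$; your explicit treatment of the left limit $\lim_{z\uparrow x}F_x(z)=\Pr(U<x)=G(x)$ and of the boundary case $x=0$ simply makes precise the step the paper compresses into ``$F_x$ is continuous at $x$.''
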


\begin{proposition}
    \label{prop:general-transition-density}
    Assume that $G$ is absolutely continuous with density $g$.
    Conditioned on $O_{t-1}=x$, the transition kernel has an atom of mass
    $2G(x)(1-G(x))$ at $x$.
    Away from this atom, its density is
    \[
    k_x(z)=
    \begin{cases}
    \dfrac{2}{1-\lambda}
    F_x(z)\,
    g\!\left(\dfrac{z-\lambda x}{1-\lambda}\right),
    & \lambda x<z<x,\\[3mm]
    \dfrac{2}{1-\lambda}
    \bigl(1-F_x(z)\bigr)\,
    g\!\left(\dfrac{z-\lambda x}{1-\lambda}\right),
    & x<z<\lambda x+(1-\lambda),\\[3mm]
    0, & \text{otherwise}.
    \end{cases}
    \]
\end{proposition}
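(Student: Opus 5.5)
Throughout, fix $x$ and condition on $O_{t-1}=x$. Write $a=\lambda x$ and $c=\lambda x+(1-\lambda)$, so that both anchored stances lie in $[a,c]$. Note that $a\le x\le c$.

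The plan is to differentiate the conditional CDF from Proposition~\ref{prop:general-transition-cdf}, computing the atom and the density separately. From that proposition, the left limit of the CDF at $x$ is $F_x(x^-)^2$ and its value at $x$ is $1-(1-F_x(x))^2$.

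\emph{The atom.} Assume $G$ is absolutely continuous, hence continuous. By Proposition~\ref{prop:general-stance-law}, $F_x$ is then continuous, and
\[
F_x(x)=G\!\left(\frac{x-\lambda x}{1-\lambda}\right)=G(x).
\]
The jump of the CDF at $x$ is therefore
\[
1-(1-G(x))^2-G(x)^2=2G(x)(1-G(x)),
\]
which matches the continuous-case statement already given in Proposition~\ref{prop:general-transition-cdf}.

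\emph{Density on $(\lambda x, x)$.} For $z$ in this interval the CDF equals $F_x(z)^2$. On $[a,c]$ we have $F_x(z)=G\bigl((z-\lambda x)/(1-\lambda)\bigr)$, and the chain rule gives
\[
\frac{d}{dz}F_x(z)=\frac{1}{1-\lambda}\,g\!\left(\frac{z-\lambda x}{1-\lambda}\right)
\]
for almost every $z$. Hence the derivative of the CDF is
\[
\frac{2}{1-\lambda}\,F_x(z)\,g\!\left(\frac{z-\lambda x}{1-\lambda}\right).
\]

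\emph{Density on $(x, c)$.} Here the CDF equals $1-(1-F_x(z))^2$, and differentiating gives
\[
\frac{2}{1-\lambda}\,\bigl(1-F_x(z)\bigr)\,g\!\left(\frac{z-\lambda x}{1-\lambda}\right).
\]

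\emph{Outside $[a,c]$.} For $z<a$ we have $F_x(z)=0$, so the CDF is $0$. For $z>c$ we have $F_x(z)=1$, so the CDF is $1$. The density is therefore zero outside $[a,c]$.

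\emph{Main obstacle: rigor of differentiation.} We need the CDF to be absolutely continuous on each of the two pieces, so that it equals the integral of its a.e.\ derivative. This holds because $G$ is absolutely continuous, the composition of $G$ with an affine map is absolutely continuous, and products of bounded absolutely continuous functions are absolutely continuous. The fundamental theorem of calculus for absolutely continuous functions then shows that the law equals the atom plus $k_x(z)\,dz$.

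As a sanity check, the total mass is
\[
F_x(x)^2+2G(x)(1-G(x))+\bigl(1-F_x(x)\bigr)^2=1,
\]
so there is no missing mass.
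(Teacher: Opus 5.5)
Your proposal is correct and follows essentially the same route as the paper: take the atom as the jump of the CDF at $x$ from Proposition~\ref{prop:general-transition-cdf}, then differentiate the branches $F_x(z)^2$ and $1-\bigl(1-F_x(z)\bigr)^2$ using the chain rule $F_x'(z)=\frac{1}{1-\lambda}g\!\left(\frac{z-\lambda x}{1-\lambda}\right)$. Your absolute-continuity argument, which shows that the law really equals the atom plus $k_x(z)\,dz$, and your total-mass check supply rigor that the paper leaves implicit.
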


\sbpara{Proof of Proposition~\ref{prop:general-stance-law}}
\begin{proof}
    Condition on $O_{t-1}=x$. If $z<\lambda x$, then
    $(1-\lambda)U+\lambda x\le z$ is impossible, so $F_x(z)=0$.
    If $z>\lambda x+(1-\lambda)$, then the event is certain, so $F_x(z)=1$.
    For $\lambda x\le z\le \lambda x+(1-\lambda)$, the event
    $(1-\lambda)U+\lambda x\le z$ is equivalent to
    $U\le (z-\lambda x)/(1-\lambda)$.
    Hence
    \[
        F_x(z)=G\!\left(\frac{z-\lambda x}{1-\lambda}\right),
    \]
    which proves the stated formula.
\end{proof}

\sbpara{Proof of Proposition~\ref{prop:general-transition-cdf}}
\begin{proof}
    Condition on $O_{t-1}=x$ and write
    $O_t=\med(\Utilde_x,\Vtilde_x,x)$, where $\Utilde_x$ and $\Vtilde_x$ are independent with CDF $F_x$.
    If $z<x$, then $O_t\le z$ if and only if both anchored stances are at most $z$.
    Hence
    \[
    \Pr(O_t\le z\mid O_{t-1}=x)=F_x(z)^2.
    \]
    If $z\ge x$, then $O_t>z$ if and only if both anchored stances are larger than $z$.
    Therefore
    \[
    \Pr(O_t\le z\mid O_{t-1}=x)
    =
    1-\Pr(\Utilde_x>z,\Vtilde_x>z)
    =
    1-\bigl(1-F_x(z)\bigr)^2.
    \]
    These two identities prove the stated CDF.

    If $G$ is continuous at $x$, then $F_x$ is continuous at $x$ and $F_x(x)=G(x)$.
    The atom at $x$ is the jump of the CDF:
    \[
    \begin{aligned}
    \Pr(O_t=x\mid O_{t-1}=x)
    &=
    \Pr(O_t\le x\mid O_{t-1}=x)-\Pr(O_t<x\mid O_{t-1}=x)\\
    &=
    \bigl[1-\bigl(1-F_x(x)\bigr)^2\bigr]-F_x(x)^2\\
    &=
    2F_x(x)\bigl(1-F_x(x)\bigr)
    =
    2G(x)\bigl(1-G(x)\bigr).
    \end{aligned}
    \]
\end{proof}

\sbpara{Proof of Proposition~\ref{prop:general-transition-density}}
\begin{proof}
    Differentiate the two continuous branches of Proposition~\ref{prop:general-transition-cdf}.
    By the chain rule,
    \[
    F_x'(z)=\frac{1}{1-\lambda}\,
    g\!\left(\frac{z-\lambda x}{1-\lambda}\right)
    \]
    on the interior of $[\lambda x,\lambda x+(1-\lambda)]$.
    For $z<x$,
    \[
        \frac{d}{dz}F_x(z)^2=2F_x(z)F_x'(z),
    \]
    and for $z>x$,
    \[
        \frac{d}{dz}\left(1-\bigl(1-F_x(z)\bigr)^2\right)
        =
        2\bigl(1-F_x(z)\bigr)F_x'(z).
    \]
    Substituting the expression for $F_x'(z)$ gives the stated density away from the atom at $x$.
\end{proof}

\section{Omitted proofs from Section~\ref{sec:transition-rule-and-mixing-time}}
\label{sec:omitted-proof-transition-rule-and-mixing-time}
This appendix contains the convergence details omitted from Section~\ref{sec:transition-rule-and-mixing-time}.
We put the proofs of Theorem~\ref{thm:wasserstein-mixing-time} after its supporting lemmas.
The material specific to the uniform distribution, including the one-step minorization argument, is presented in Appendix~\ref{sec:uniform-distribution}.

\subsection{Proof of Lemma~\ref{lem:wasserstein-rate-tight}}
\begin{proof}
    We consider a distribution with mass $\frac{1}{2}$ at $0$ and $\frac{1}{2}$ at $1$.
    Conditional on the current state $O_t=x$, the next state is
    \[
    O_{t+1}
    =
    \begin{cases}
    \lambda x, & \text{with probability }1/4,\\
    x, & \text{with probability }1/2,\\
    \lambda x+(1-\lambda), & \text{with probability }1/4.
    \end{cases}
    \]
    Hence
    \[
        \Exp[O_{t+1}\mid O_t=x]
        =
        \frac{1+\lambda}{2}x+\frac{1-\lambda}{4}.
    \]
    If $O\sim\pi_{\lambda,G}$ is stationary, then
    \[
        \Exp[O]=\frac{1+\lambda}{2}\Exp[O]+\frac{1-\lambda}{4},
    \]
    so $\Exp[O]=\frac{1}{2}$.
    Starting from $O_0=1$, the mean recursion gives
    \[
        \Exp[O_t]-\Exp[O]
        =
        \frac{1}{2}\left(\frac{1+\lambda}{2}\right)^t.
    \]
    By Corollary~\ref{cor:wasserstein-dominates-mean-difference},
    \[
        W_1(\delta_1K^t,\pi_{\lambda,G})
        \ge
        \abs{\Exp[O_t]-\Exp[O]}
        =
        \frac{1}{2}\left(\frac{1+\lambda}{2}\right)^t.
    \]
    The stated mixing-time lower bound follows by solving
    $\frac{1}{2}((1+\lambda)/2)^t\le \varepsilon$ for $t$.
\end{proof}

\subsection{Proof of Lemma~\ref{lem:nonuniform-coupling}}
\begin{proof}
    Write $L=\min\{U,V\}$ and $R=\max\{U,V\}$.
    Assume first that $X_t\le Y_t$; the case $X_t=Y_t$ is immediate.
    By the transition rule in Equation~\eqref{eq:transition},
    \[
    \begin{aligned}
    Y_{t+1}-X_{t+1}
    &=
    (1-\lambda)\bigl(\Median(U,V,Y_t)-\Median(U,V,X_t)\bigr)
    +\lambda(Y_t-X_t)\\
    &=
    (1-\lambda)\abs{[L,R]\cap[X_t,Y_t]}+\lambda(Y_t-X_t).
    \end{aligned}
    \]
    Since the overlap length is nonnegative, $Y_{t+1}\ge X_{t+1}$.
    Since the overlap length is at most $Y_t-X_t$, we also have
    $Y_{t+1}-X_{t+1}\le Y_t-X_t$.
    Thus the order is preserved and the gap is non-increasing.

    Conditional on $X_t,Y_t$, Fubini's theorem gives
    \[
    \begin{aligned}
    \Exp\!\left[\abs{[L,R]\cap[X_t,Y_t]}\mid X_t,Y_t\right]
    &=
    \Exp\!\left[\int_{X_t}^{Y_t}\ind_{L\le r\le R}\,dr \,\middle|\, X_t,Y_t\right]\\
    &=
    \int_{X_t}^{Y_t}\Pr(L\le r\le R)\,dr.
    \end{aligned}
    \]
    For an arbitrary CDF $G$,
    \[
        \Pr(L\le r\le R)=1-(1-G(r))^2-G(r^-)^2.
    \]
    This equals $2G(r)(1-G(r))$ at every continuity point of $G$.
    A CDF has at most countably many discontinuities, so the equality holds for Lebesgue-a.e.\ $r$ and hence after integration.
    Therefore
    \[
    \Exp[Z_{t+1}\mid X_t,Y_t]
    =
    \lambda Z_t
    +
    (1-\lambda)\int_{X_t}^{Y_t}2G(r)(1-G(r))\,dr.
    \]
    The same argument applies when $X_t>Y_t$ after swapping the labels.
    Since $2G(r)(1-G(r))\le 1/2$, the final inequality follows.
\end{proof}

\subsection{Proof of Corollary~\ref{cor:wasserstein-mixing}}
\begin{proof}
    Let $c=(1+\lambda)/2<1$.
    For deterministic $x,y\in[0,1]$, update two copies from $x$ and $y$ using the same pair $U,V\sim G$.
    Lemma~\ref{lem:nonuniform-coupling}, with the labels ordered if necessary, gives
    \[
        \Exp\!\left[\abs{T(x;U,V)-T(y;U,V)}\right]\le c\abs{x-y},
    \]
    where $T(x;U,V)=\lambda x+(1-\lambda)\med(x,U,V)$.
    Now take any coupling $(X_0,Y_0)$ of $\mu$ and $\nu$ and update both coordinates with the same $U,V$.
    Then
    \[
        \Exp[\abs{X_1-Y_1}]\le c\,\Exp[\abs{X_0-Y_0}].
    \]
    Taking the infimum over all couplings of $\mu$ and $\nu$ gives
    \[
        W_1(\mu K,\nu K)\le c\,W_1(\mu,\nu).
    \]
    Since $[0,1]$ is compact, $(\mathcal P([0,1]),W_1)$ is complete.
    Thus $\mu\mapsto \mu K$ is a contraction on a complete metric space, so Banach's fixed-point theorem~\cite[Theorem 1.1]{granas2003fixed} gives a unique fixed point $\pi_{\lambda,G}$, equivalently a unique stationary distribution.
\end{proof}

\subsection{Proof of Theorem~\ref{thm:wasserstein-mixing-time}}
\begin{proof}
    \sbpara{Upper bound.}
    Let $c=(1+\lambda)/2$.
    By Corollary~\ref{cor:wasserstein-mixing}, for every $x\in[0,1]$,
    \[
        W_1(\delta_xK^t,\pi_{\lambda,G})
        =
        W_1(\delta_xK^t,\pi_{\lambda,G}K^t)
        \le
        c^tW_1(\delta_x,\pi_{\lambda,G})
        \le
        c^t,
    \]
    because the diameter of $[0,1]$ is one.
    Therefore
    \[
        t_{\mathrm{mix}}^{(W_1)}(\varepsilon)
        \le
        \left\lceil
        \frac{\log(1/\varepsilon)}{\log(2/(1+\lambda))}
        \right\rceil.
    \]

    \sbpara{Lower bound.}
    Assume $0<\lambda<1$.
    Couple two copies of the chain by using the same sampled pair of bliss points at every time step.
    Let one copy start from $X_0=0$ and the other from $Y_0=1$.
    Since the coupling preserves order,
    \[
        Y_{t+1}-X_{t+1}\ge \lambda(Y_t-X_t).
    \]
    Iterating gives $Y_t-X_t\ge \lambda^t$.
    Hence
    \[
        \Exp[Y_t]-\Exp[X_t]
        =
        \Exp[Y_t-X_t]
        \ge
        \lambda^t.
    \]
    The laws of $X_t$ and $Y_t$ are respectively $\delta_0K^t$ and $\delta_1K^t$.
    By Corollary~\ref{cor:wasserstein-dominates-mean-difference},
    \[
        W_1(\delta_0K^t,\delta_1K^t)\ge \lambda^t.
    \]
    The triangle inequality gives
    \[
    W_1(\delta_0K^t,\delta_1K^t)
    \le
    W_1(\delta_0K^t,\pi_{\lambda,G})
    +
    W_1(\pi_{\lambda,G},\delta_1K^t)
    \le
    2\sup_{x\in[0,1]}W_1(\delta_xK^t,\pi_{\lambda,G}).
    \]
    Thus any $t$ with
    $\sup_x W_1(\delta_xK^t,\pi_{\lambda,G})\le\varepsilon$
    must satisfy $\lambda^t\le 2\varepsilon$.
    Solving this inequality gives
    \[
        t_{\mathrm{mix}}^{(W_1)}(\varepsilon)
        \ge
        \left\lceil
        \frac{\log(1/(2\varepsilon))}
        {\log(1/\lambda)}
        \right\rceil
    \]
    for every $0<\varepsilon<1/2$.
\end{proof}

\subsection{Uniform distribution}
\label{sec:uniform-distribution}
We now specialize the transition law to the case where the population distribution is uniform on $[0,1]$.
Thus the two sampled bliss points are independent draws $U,V\sim\mathrm{Unif}[0,1]$.
Conditionally on $O_{t-1}=x$, the anchored stances $\Util$ and $\Vtil$ are i.i.d.\ uniform on $[\lambda x,\lambda x+(1-\lambda)]$ with density $1/(1-\lambda)$.

For any real value $z$, the conditional CDF of either anchored stance is
\begin{equation}
    \begin{aligned}
    \Pr(\Util \leq z \mid O_{t-1} = x) = \Pr(\Vtil \leq z \mid O_{t-1} = x) =
    \begin{cases}
        0, & z < \lambda x, \\
        \dfrac{z - \lambda x}{1-\lambda}, & \lambda x \leq z < \lambda x + (1-\lambda), \\
        1, & z \geq \lambda x + (1-\lambda).
    \end{cases}
    \end{aligned}
\end{equation}
Applying Proposition~\ref{prop:general-transition-cdf} gives the transition CDF
\begin{equation}
    \label{eq:cdf}
\Pr(O_t \leq z \mid O_{t-1} = x) =
\begin{cases}
  0, & z < \lambda x, \\
  \left(\dfrac{z - \lambda x}{1-\lambda}\right)^2, & \lambda x \leq z < x, \\
  1 - \left(1 - \dfrac{z - \lambda x}{1-\lambda}\right)^2, & x \leq z < \lambda x + (1-\lambda), \\
  1, & z \geq \lambda x + (1-\lambda).
\end{cases}
\end{equation}
Differentiating away from the atom at $x$ gives the density
\begin{equation}
    \label{eq:density}
k_x(z)=
\begin{cases}
\dfrac{2({z-\lambda x})}{(1-\lambda)^2}, & \lambda x<z<x,\\[2mm]
\dfrac{2(1-\lambda+\lambda x-z)}{(1-\lambda)^2}, & x<z<\lambda x+(1-\lambda),\\[2mm]
0, & \text{otherwise},
\end{cases}
\end{equation}
plus an atom of mass $2x(1-x)$ at $x$.

We record two refinements for the uniform case. The first decomposes the one-step coupling by the position of the sampled bliss points relative to the current gap. The second is a one-step minorization that gives total-variation contraction when $\lambda<1/2$.

Let $(X_t,Y_t)$ be the coupling obtained by using the same pair $U_t,V_t\stackrel{\mathrm{i.i.d.}}{\sim}\mathrm{Unif}[0,1]$ in both updates.
Let $Z_t\coloneqq \abs{X_t-Y_t}$.

\begin{lemma}
    \label{lem:coupling-concentration}
    Condition on $X_t\le Y_t$ and $Z_t=z$.
    The order of the coupled chains is preserved and $Z_t$ is non-increasing.
    Moreover, the one-step gap satisfies the following four-case decomposition:
    with probability at least $\frac{(1-z)^2}{2}$, $Z_{t+1}=\lambda z$;
    with probability $2z(1-z)$, $\Exp[Z_{t+1}\mid Z_t=z,\text{Case 2}]=\frac{1+\lambda}{2}z$;
    with probability $z^2$, $\Exp[Z_{t+1}\mid Z_t=z,\text{Case 3}]=\frac{1+2\lambda}{3}z$;
    and with probability at most $\frac{(1-z)^2}{2}$, $Z_{t+1}=z$.
\end{lemma}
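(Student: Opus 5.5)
We plan to prove Lemma~\ref{lem:coupling-concentration} by conditioning on the current coupled states and splitting on the positions of the fresh bliss points. Write $X_t=x$ and $Y_t=y=x+z$, with $0\le x\le y\le 1$. We start from the identity in Equation~\eqref{eq:general-contractionrule}:
\[
Z_{t+1}=(1-\lambda)\abs{[L,R]\cap[x,y]}+\lambda z .
\]
This identity is already proved in Lemma~\ref{lem:nonuniform-coupling} for arbitrary $G$. Since the overlap lies in $[0,z]$, it gives order preservation and $Z_{t+1}\le z$ immediately. What remains is to compute the law of the overlap when $U,V\stackrel{\mathrm{i.i.d.}}{\sim}\mathrm{Unif}[0,1]$.

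We partition $[0,1]$ into $A=[0,x)$, $I=[x,y]$ and $B=(y,1]$, with lengths $a=x$, $z$ and $b=1-y$, so that $a+b=1-z$. Ties at the endpoints have probability zero, so they can be ignored. We then classify the pair $(U,V)$ into four cases.
\begin{itemize}
\item \emph{Case 1} (both points in $A$, or both in $B$). This has probability $a^2+b^2$ and overlap $0$, so $Z_{t+1}=\lambda z$. By convexity, $a^2+b^2\ge (a+b)^2/2=(1-z)^2/2$.
\item \emph{Case 4} (one point in $A$, one in $B$). Then $[L,R]\supseteq[x,y]$, so the overlap is $z$ and $Z_{t+1}=z$. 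The probability is $2ab\le (a+b)^2/2=(1-z)^2/2$ by AM--GM.
\item \emph{Case 2} (exactly one point in $I$). This has probability $2z(1-z)$. Conditionally, the inside point $W$ is uniform on $[x,y]$. The overlap is $y-W$ if the other point lies in $A$, and $W-x$ if it lies in $B$. In either subcase the conditional mean is $z/2$, so $\Exp[Z_{t+1}\mid\text{Case 2}]=\lambda z+(1-\lambda)z/2=\frac{1+\lambda}{2}z$.
\item \emph{Case 3} (both points in $I$). This has probability $z^2$. Conditionally, $L$ and $R$ are the minimum and maximum of two i.i.d.\ uniforms on an interval of length $z$, and the overlap is $R-L$ with mean $z/3$. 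Hence $\Exp[Z_{t+1}\mid\text{Case 3}]=\lambda z+(1-\lambda)z/3=\frac{1+2\lambda}{3}z$.
\end{itemize}

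The only step that needs care is the independence claim behind the conditional means in Case 2. When we condition on exactly one point lying in $I$, the location of that point must be uniform on $I$ and independent of which side the other point falls on. This holds because $U$ and $V$ are independent and uniform, and we will state it explicitly.

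As a final check, the four probabilities sum to $(a+b)^2+2z(1-z)+z^2=(1-z)^2+2z(1-z)+z^2=1$. This confirms that the decomposition is exhaustive.
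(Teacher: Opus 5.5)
Your proof is correct and follows essentially the same four-case argument as the paper. The paper bounds Case~4's probability via the complement of Case~1 and evaluates the Case~2 and Case~3 means by explicit integrals, which your uniform-point and expected-range facts replace. One small slip: in Case~2 the overlap is $W-x$ when the other point lies in $A$ and $y-W$ when it lies in $B$, not the reverse. Both have conditional mean $z/2$, so the conclusion is unaffected.
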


\begin{corollary}
    \label{cor:coupling-concentration-expectation}
    For every $z\in[0,1]$,
\[
    \Exp[Z_{t+1}\mid Z_t=z]\le \frac{1+\lambda}{2}z.
\]
Consequently,
    \[
    \Exp[Z_t]\le \left(\frac{1+\lambda}{2}\right)^t \Exp[Z_0].
    \]
\end{corollary}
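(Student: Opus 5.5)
The per-step bound follows from the four-case decomposition of Lemma~\ref{lem:coupling-concentration}, and the bound on $\Exp[Z_t]$ then follows by iterating it. Alternatively, the per-step bound is just the uniform instance of Lemma~\ref{lem:nonuniform-coupling}, since $2r(1-r)\le\frac12$. I will nonetheless route the argument through the four cases, because this shows exactly where slack enters.

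Fix $z\in[0,1]$ and condition on $X_t\le Y_t$ and $Z_t=z$; the case $X_t>Y_t$ follows by swapping labels. Let $p_1$ and $p_4$ denote the probabilities of Cases 1 and 4. The four cases are exhaustive, and Cases 2 and 3 have probabilities $2z(1-z)$ and $z^2$. Hence $p_1+p_4=1-2z(1-z)-z^2=(1-z)^2$. By the tower property over the cases,
\[
\Exp[Z_{t+1}\mid Z_t=z]=p_1\lambda z+p_4 z+2z(1-z)\cdot\frac{1+\lambda}{2}z+z^2\cdot\frac{1+2\lambda}{3}z .
\]
Since $\lambda z\le z$, the first two terms increase when mass is shifted from Case 1 to Case 4. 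Under the constraint $p_4\le\frac{(1-z)^2}{2}$, the worst case is therefore $p_1=p_4=\frac{(1-z)^2}{2}$. In that case the first two terms sum to $(1-z)^2\frac{1+\lambda}{2}z$.

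Next, $\frac{1+2\lambda}{3}\le\frac{1+\lambda}{2}$, which is equivalent to $\lambda\le1$. So the Case~3 coefficient may also be replaced by $\frac{1+\lambda}{2}$. Every term then carries the factor $\frac{1+\lambda}{2}z$, and the probabilities sum to $(1-z)^2+2z(1-z)+z^2=1$. This gives $\Exp[Z_{t+1}\mid Z_t=z]\le\frac{1+\lambda}{2}z$.

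For the second claim, take expectations to get $\Exp[Z_{t+1}]\le\frac{1+\lambda}{2}\Exp[Z_t]$, and induct on $t$. The bound on $\Exp[Z_t]$ follows.

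The only delicate step is the bookkeeping with the inequalities on $p_1$ and $p_4$. Only their sum is known exactly, so the argument must use the monotonicity in $p_4$ rather than the individual bounds separately. Everything else is routine.
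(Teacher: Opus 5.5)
Your proposal is correct and takes essentially the same route as the paper. The paper also bounds the Case 1 plus Case 4 contribution by taking $\Pr[\text{Case 1}]=\Pr[\text{Case 4}]=(1-z)^2/2$, adds the four case contributions to obtain $\frac{1+\lambda}{2}z+\frac{\lambda-1}{6}z^3$, and then drops the nonpositive cubic term, which is exactly your comparison $\frac{1+2\lambda}{3}\le\frac{1+\lambda}{2}$.
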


\sbpara{Proof of Lemma~\ref{lem:coupling-concentration}}

\begin{proof}
    Write $L=\min\{U,V\}$ and $R=\max\{U,V\}$.
    Condition on $X_t\le Y_t$ and $Z_t=Y_t-X_t=z$.
    Equation~\eqref{eq:transition} gives
    \begin{equation}
        \label{eq:contractionrule}
        \begin{aligned}
    Y_{t+1} - X_{t+1}
    &=
    (1-\lambda)\bigl(\Median(U,V,Y_t)-\Median(U,V,X_t)\bigr)+\lambda(Y_t-X_t)\\
    &=
    (1-\lambda)\abs{[L,R]\cap[X_t,Y_t]}+\lambda(Y_t-X_t).
        \end{aligned}
    \end{equation}
    The overlap length is between $0$ and $z$, so $0\le Y_{t+1}-X_{t+1}\le z$.
    Hence the order is preserved and the gap is non-increasing.

    \sbpara{Case 1.} Both sampled points are outside $[X_t,Y_t]$ and on the same side, i.e., $R<X_t$ or $L>Y_t$.
    Let $\alpha=X_t/(1-z)$ and $\beta=(1-Y_t)/(1-z)$ when $z<1$; then $\alpha,\beta\ge0$ and $\alpha+\beta=1$.
    The probability of this case is
    \begin{equation}
        \begin{aligned}
            \Pr(R < X_t \text{ or } L > Y_t)
            &\geq \min_{\alpha, \beta \geq 0, \alpha + \beta = 1} ((1 - z)\alpha)^2 + ((1 - z) \beta)^2 \\
            &= \min_{\alpha, \beta \geq 0, \alpha + \beta = 1} (1 - z)^2 (\alpha^2 + \beta^2)  \\
            &\geq (1 - z)^2 (\frac{\alpha + \beta}{2})^2 \cdot 2 \\
            &= (1 -z)^2 \frac{1}{2}.
        \end{aligned}
    \end{equation}
    When $z=1$, the same lower bound is $0$ and is immediate.
    In this case the overlap length is $0$, so $Z_{t+1}=\lambda z$.

    \sbpara{Case 2.} Exactly one sampled point lies in $[X_t,Y_t]$.
    This has probability $2z(1-z)$.
    Let $\gamma=\abs{[L,R]\cap[X_t,Y_t]}$.
    Then $Z_{t+1}=(1-\lambda)\gamma+\lambda z$, and the contribution to the expectation is
    \begin{equation}
        \begin{aligned}
            \Exp[Z_{t+1} \mid Z_t = z, \text{Case 2}] \cdot \Pr[\text{Case 2}]&= \int_{0}^{z} 2 \cdot (1 - z) \cdot ((1 - \lambda) \gamma + \lambda z) d \gamma  \\
            &=2\cdot (1 - z) \cdot \int_{0}^{z} ((1 - \lambda) \gamma + \lambda z) d \gamma \\
            &=2\cdot (1 - z) \cdot (\lambda z^2 + (1 - \lambda) \frac{1}{2} z^2)
        \end{aligned}
    \end{equation}
    Dividing by $\Pr[\text{Case 2}]=2z(1-z)$ gives
    \begin{equation}
        \begin{aligned}
            \Exp[Z_{t+1} \mid Z_t = z, \text{Case 2}] = \lambda z + \frac{1}{2} (1- \lambda)z = \frac{(1 + \lambda)}{2} \cdot z.
        \end{aligned}
    \end{equation}

    \sbpara{Case 3.} Both sampled points lie in $[X_t,Y_t]$.
    This has probability $z^2$.
    Let $\gamma_1=d(X_t,U)$ and $\gamma_2=d(Y_t,V)$.
    Then $d(U,V)=\abs{z-\gamma_1-\gamma_2}$ and
    $Z_{t+1}=(1-\lambda)d(U,V)+\lambda z$.
    The contribution of $d(U,V)$ is
    \begin{equation}
        \begin{aligned}
            \Exp[d(U, V) \mid Z_t = z, \text{Case 3}] \Pr[\text{Case 3}] &= \int_{0}^{z} \int_{0}^{z}  \abs{z - \gamma_1 - \gamma_2}  \, d\gamma_1 d\gamma_2 \\
            &= \int_{0}^z \left( \int_{0}^{z - \gamma_1} (z - \gamma_1 - \gamma_2) d \gamma_2 + \int_{z -\gamma_1}^{z} (\gamma_1 + \gamma_2 - z) d \gamma_2\right) d \gamma_1  \\
            &= \int_{0}^z \left(\frac{1}{2} (z - \gamma_1)^2 + \frac{1}{2} (\gamma_1)^2\right) d \gamma_1 \\
            &= \int_{0}^z  \left(\frac{1}{2} z^2 + \gamma_1^2 - z \gamma_1 \right) d \gamma_1 \\
            &= \frac{1}{2} z^3 + \frac{1}{3} z^3 - \frac{1}{2} z^3 = \frac{1}{3} z^3.
        \end{aligned}
    \end{equation}
    Since $\Pr[\text{Case 3}]=z^2$,
    \begin{equation}
        \begin{aligned}
            \Exp[d(U, V) \mid Z_t = z, \text{Case 3}]  = \frac{1}{3} z.
        \end{aligned}
    \end{equation}
    Applying Equation~\eqref{eq:contractionrule},
    \begin{equation}
        \begin{aligned}
            \Exp[Z_{t+1} \mid Z_t = z, \text{Case 3}] &= (1 - \lambda) \Exp[d(U, V) \mid Z_t = z, \text{Case 3}] + \lambda z  = \frac{1}{3} z + \frac{2}{3} \lambda z.
        \end{aligned}
    \end{equation}

    \sbpara{Case 4.} Both sampled points are outside $[X_t,Y_t]$ but on opposite sides.
    This has probability $(1-z)^2-\Pr[\text{Case 1}]\le (1-z)^2/2$.
    In this case the overlap length is $z$, so $Z_{t+1}=z$.
\end{proof}

\begin{figure}[t]
    \centering
    \begin{tikzpicture}[
        x=10cm,
        y=0.82cm,
        >=Stealth,
        axis/.style={->, line width=0.35pt},
        tick/.style={line width=0.3pt},
        xdens/.style={line width=0.9pt, color=blue!70!black},
        ydens/.style={line width=0.9pt, color=red!70!black},
        minor/.style={fill=green!45!black, fill opacity=0.32, draw=green!40!black, line width=0.5pt},
        lab/.style={font=\scriptsize},
        smalllab/.style={font=\tiny}
    ]
        \def\peak{3.333333}
        \def\midheight{0.555556}

        \fill[blue!14] (0,0) -- (0,\peak) -- (0.6,0) -- cycle;
        \fill[red!14] (0.4,0) -- (1,\peak) -- (1,0) -- cycle;
        \path[minor] (0.4,0) -- (0.5,\midheight) -- (0.6,0) -- cycle;

        \draw[axis] (-0.015,0) -- (1.055,0) node[lab, below] {$z$};
        \draw[axis] (0,0) -- (0,3.63) node[lab, left] {density};

        \draw[xdens] (0,\peak) -- (0.6,0);
        \draw[ydens] (0.4,0) -- (1,\peak);
        \draw[tick] (0,\peak) -- (-0.008,\peak) node[lab, left] {$\frac{10}{3}$};
        \draw[tick] (0,\midheight) -- (-0.008,\midheight) node[lab, left] {$\frac{5}{9}$};

        \foreach \xx/\name in {
            0/{$0$},
            0.4/{$0.4=\lambda$},
            0.5/{$1/2$},
            0.6/{$0.6=1-\lambda$},
            1/{$1$}
        }{
            \draw[tick] (\xx,0) -- (\xx,-0.09);
            \node[smalllab, below] at (\xx,-0.09) {\name};
        }

        \node[lab, blue!70!black, anchor=west] at (0.13,2.73) {$f_{X_1}(z)$ from $X_0=0$};
        \node[lab, red!70!black, anchor=west] at (0.58,2.40) {$f_{Y_1}(z)$ from $Y_0=1$};
        \node[lab, green!35!black, align=center] at (0.5,1.1)
            {overlap\\$m_{0.4}(z)$};

        \draw[<->, line width=0.35pt, color=green!35!black]
            (0.4,-0.5) -- node[lab, below] {$I_\lambda=[\lambda,1-\lambda]$} (0.6,-0.5);
    \end{tikzpicture}
    \caption{One-step minorization for the uniform chain with $\lambda=0.4$, starting from $X_0=0$ and $Y_0=1$.
    The two one-step densities are
    $f_{X_1}(z)=2(0.6-z)/0.6^2$ on $[0,0.6]$ and
    $f_{Y_1}(z)=2(z-0.4)/0.6^2$ on $[0.4,1]$.
    The shaded triangle is their common lower density on $I_\lambda=[0.4,0.6]$; its area is $\beta(0.4)=1/18$, which is the minorization mass in Lemma~\ref{lem:uniform-minorization}.}
    \label{fig:one-step-minorization}
\end{figure}

\sbpara{Contraction analysis based on one-step minorization.}
We now give a sharper but less general argument for total-variation distance. When $G$ is uniform and $\lambda<\frac{1}{2}$, every starting point has a common chance of moving into the interval $[\lambda,1-\lambda]$ in one step. Figure~\ref{fig:one-step-minorization} illustrates this common overlap. This overlap yields a one-step minorization and hence total-variation contraction.

\begin{lemma}
\label{lem:uniform-minorization}
Assume $G$ is uniform on $[0,1]$ and $0\le \lambda<\frac{1}{2}$.
Then there exists a probability measure $\nu_\lambda$ on $[0,1]$ such that
$K(x,\cdot)\ge \beta(\lambda)\nu_\lambda(\cdot)$ for every $x\in[0,1]$,
where $\beta(\lambda)=\frac{(1-2\lambda)^2}{2(1-\lambda)^2}$.
Consequently, for every pair of probability measures $\mu,\nu$ on $[0,1]$,
$\left\|\mu K - \nu K\right\|_{\mathrm{TV}} \le (1-\beta(\lambda))\|\mu-\nu\|_{\mathrm{TV}}.$
\end{lemma}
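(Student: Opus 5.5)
The plan is to minorize all one-step laws $K(x,\cdot)$ by a common density supported on $I_\lambda=[\lambda,1-\lambda]$. This interval is nonempty exactly when $\lambda<\tfrac12$. The total-variation contraction then follows from the standard Doeblin coupling argument.

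\textbf{Step 1 (lower envelope of the one-step densities).} From Equation~\eqref{eq:density}, for $z\in I_\lambda$ and $z\neq x$ the absolutely continuous part of $K(x,\cdot)$ has density
\[
k_x(z)=\frac{2(z-\lambda x)}{(1-\lambda)^2}\ \text{if } z<x,\qquad k_x(z)=\frac{2(1-\lambda+\lambda x-z)}{(1-\lambda)^2}\ \text{if } z>x.
\]
Both branches have support containing $I_\lambda$ for every $x$, since $\lambda x\le\lambda$ and $\lambda x+1-\lambda\ge 1-\lambda$. In the first branch, $z-\lambda x\ge z-\lambda$ because $x\le 1$. In the second branch, $1-\lambda+\lambda x-z\ge 1-\lambda-z$ because $x\ge0$. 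The atom at $x$ can only add mass, so we discard it. This gives $k_x(z)\ge m_\lambda(z)$ on $I_\lambda$, where
\[
m_\lambda(z)\coloneqq\frac{2\min\{z-\lambda,\,1-\lambda-z\}}{(1-\lambda)^2}.
\]
This is exactly the triangle in Figure~\ref{fig:one-step-minorization}, and it is attained by the extremal starting points $x=1$ and $x=0$. I will check both branches for all positions of $x$ relative to $I_\lambda$; the point $z=x$ is a single point and does not matter.

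\textbf{Step 2 (mass).} The triangle has base $1-2\lambda$ and height $\frac{1-2\lambda}{(1-\lambda)^2}$ at $z=\tfrac12$. Its area is therefore $\beta(\lambda)=\frac{(1-2\lambda)^2}{2(1-\lambda)^2}$. Setting $\nu_\lambda(dz)=m_\lambda(z)\,dz/\beta(\lambda)$ gives $K(x,\cdot)\ge\beta(\lambda)\nu_\lambda$ for all $x$. At $\lambda=0.4$ this gives $\beta=1/18$, consistent with the figure.

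\textbf{Step 3 (contraction).} Write $K(x,\cdot)=\beta\nu_\lambda+(1-\beta)R(x,\cdot)$, where $R$ is a Markov kernel because the minorization makes the residual nonnegative. Then $\mu K-\nu K=(1-\beta)(\mu R-\nu R)$, since the $\beta\nu_\lambda$ terms cancel for probability measures. Because $R$ is a Markov kernel, $\|\mu R-\nu R\|_{\mathrm{TV}}\le\|\mu-\nu\|_{\mathrm{TV}}$. To prove this, take an optimal coupling of $\mu$ and $\nu$ and move both coordinates identically when they agree, or use the supremum definition with a Jordan decomposition. This yields the claimed bound.

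The only real obstacle is Step 1: making sure the pointwise lower bound holds uniformly in $x$, including $x$ inside $I_\lambda$ where both branches are active on different sides of $x$. The monotonicity of each branch in $x$ handles this directly.
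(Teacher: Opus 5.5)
Your proposal is correct and follows the paper's proof essentially step for step. You use the same branchwise lower bounds on $k_x(z)$ for $z\in[\lambda,1-\lambda]$ ($x\le 1$ when $z<x$, and $x\ge 0$ when $z>x$), the same triangular minorant $m_\lambda$ with area $\beta(\lambda)=\frac{(1-2\lambda)^2}{2(1-\lambda)^2}$, and the same Doeblin splitting $K=\beta(\lambda)\nu_\lambda+(1-\beta(\lambda))\widetilde K$ to get the total-variation contraction; one small wording fix is that the union of the two branch domains (the support $[\lambda x,\lambda x+1-\lambda]$) contains $[\lambda,1-\lambda]$, not each branch separately.
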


\begin{corollary}
    \label{cor:uniform-tv-mixing}
    Assume that $G$ is uniform on $[0, 1]$, and let $0 \leq \lambda < \frac{1}{2}$.
    Then $t^{\mathrm{TV}}_{\mathrm{mix}}(\varepsilon) \le \left\lceil
    \frac{1}{\beta(\lambda)}\log\frac{1}{\varepsilon}
    \right\rceil.$
\end{corollary}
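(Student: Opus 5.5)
The plan is to apply the standard contraction argument for a minorized kernel, via Lemma~\ref{lem:uniform-minorization}. That lemma gives the one-step total-variation contraction $\|\mu K-\nu K\|_{\mathrm{TV}}\le(1-\beta(\lambda))\|\mu-\nu\|_{\mathrm{TV}}$ for all probability measures $\mu,\nu$ on $[0,1]$.

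\textbf{Iterating the contraction.} Fix $x\in[0,1]$ and let $\pi_{\lambda,G}$ be the unique stationary distribution from Corollary~\ref{cor:wasserstein-mixing}. Stationarity gives $\pi_{\lambda,G}K^t=\pi_{\lambda,G}$. Applying the one-step contraction $t$ times with $\mu=\delta_x$ and $\nu=\pi_{\lambda,G}$ yields
\[
\|\delta_xK^t-\pi_{\lambda,G}\|_{\mathrm{TV}}=\|\delta_xK^t-\pi_{\lambda,G}K^t\|_{\mathrm{TV}}\le(1-\beta(\lambda))^t\|\delta_x-\pi_{\lambda,G}\|_{\mathrm{TV}}\le(1-\beta(\lambda))^t.
\]
The last step uses that total variation distance is at most $1$. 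The bound is uniform in $x$, so it also bounds $\sup_x\|\delta_xK^t-\pi_{\lambda,G}\|_{\mathrm{TV}}$.

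\textbf{Converting to a mixing time.} Use $1-\beta\le e^{-\beta}$, which holds because $\beta(\lambda)\in(0,1/2]$ for $0\le\lambda<1/2$, so $\beta(\lambda)>0$. This gives $(1-\beta(\lambda))^t\le e^{-\beta(\lambda)t}$. The right-hand side is at most $\varepsilon$ as soon as $t\ge\frac{1}{\beta(\lambda)}\log\frac1\varepsilon$. Taking $t=\lceil\frac{1}{\beta(\lambda)}\log\frac{1}{\varepsilon}\rceil$ therefore meets the definition of $t^{\mathrm{TV}}_{\mathrm{mix}}(\varepsilon)$.

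There is no real obstacle; the only subtle point is the stationary measure. Existence and uniqueness of $\pi_{\lambda,G}$ already come from Corollary~\ref{cor:wasserstein-mixing}. Alternatively, the TV contraction alone gives a unique fixed point by Banach's theorem on the complete space $(\mathcal P([0,1]),\|\cdot\|_{\mathrm{TV}})$, so the argument is self-contained either way.
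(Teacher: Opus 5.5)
Your proposal is correct and follows essentially the same route as the paper. You iterate the one-step total-variation contraction from Lemma~\ref{lem:uniform-minorization} using stationarity, bound $\|\delta_x-\pi_{\lambda,G}\|_{\mathrm{TV}}$ by $1$, and convert $(1-\beta(\lambda))^t\le e^{-\beta(\lambda)t}$ into the stated ceiling; your additional remarks on where $\pi_{\lambda,G}$ comes from and on $\beta(\lambda)\in(0,1/2]$ are accurate but not a change of approach.
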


\sbpara{Proof of Corollary~\ref{cor:coupling-concentration-expectation}}
\begin{proof}
    Combine the four cases in Lemma~\ref{lem:coupling-concentration}.
    The combined contribution of Cases 1 and 4 is at most
    \begin{equation}
        \label{eq:condition1}
    \begin{aligned}
    &\Exp[Z_{t+1} \mid Z_t = z, \text{Case 1}] \Pr[\text{Case 1}] +  \Exp[Z_{t+1} \mid Z_t = z, \text{Case 4}] \Pr[\text{Case 4}] \\
    &\stackrel{(a)}{\leq} (1 - z)^2 \frac{1}{2} \lambda z + (1 - z)^2\frac{1}{2} z \\
    &= \frac{(1-z)^2}{2} \cdot (\lambda z + z)
    \end{aligned}
    \end{equation}
    because $Z_{t+1}=\lambda z$ in Case 1, $Z_{t+1}=z$ in Case 4,
    $\Pr[\text{Case 1}]+\Pr[\text{Case 4}]=(1-z)^2$, and
    $\Pr[\text{Case 1}]\ge (1-z)^2/2$.
    \begin{equation}
        \begin{aligned}
            &\Exp[Z_{t+1} \mid Z_t = z] = \sum_{i = 1}^4\Exp[Z_{t+1} \mid Z_t = z, \text{Case $i$}] \Pr[\text{Case $i$}]\\
            &\leq \frac{(1-z)^2}{2} \cdot (\lambda z + z) +  2\cdot (1 - z) \cdot (\lambda z^2 + (1 - \lambda) \frac{1}{2} z^2) +  \left(\frac{1+2\lambda}{3}z  \right) z^2\\
            &= \frac{1 + \lambda}{2} z + \frac{\lambda - 1}{6} z^3 \leq \frac{1 + \lambda}{2} z.
        \end{aligned}
    \end{equation}
\end{proof}

\sbpara{Proof of Lemma~\ref{lem:uniform-minorization}}

\begin{proof}
    We use the uniform transition density in Equation~\eqref{eq:density}.
    When $\lambda < \frac{1}{2}$, for every $z\in I_\lambda=[\lambda,1-\lambda]$, $z$ belongs to the support interval $[\lambda x,\lambda x+(1-\lambda)]$ for every $x\in[0,1]$.
    If $x>z$, then
    \[
        k_x(z)=\frac{2(z-\lambda x)}{(1-\lambda)^2}
        \geq \frac{2(z-\lambda)}{(1-\lambda)^2},
    \]
    and if $x<z$, then
    \[
        k_x(z)=\frac{2(1-\lambda+\lambda x-z)}{(1-\lambda)^2}
        \geq
        \frac{2(1-\lambda-z)}{(1-\lambda)^2}.
    \]
    Hence, for every $x$,
    \[
    k_x(z)\ge m_\lambda(z)
    \coloneqq
    \frac{2}{(1-\lambda)^2}\min\{z-\lambda,\ 1-\lambda-z\}\ind_{[\lambda,1-\lambda]}(z).
    \]
    The mass of this common lower density is
    \[
    \beta(\lambda)=\int_0^1m_\lambda(z)\,dz
    =
    \frac{(1-2\lambda)^2}{2(1-\lambda)^2}.
    \]
    Let $\nu_\lambda$ be the probability measure with density $m_\lambda(z)/\beta(\lambda)$.
    Then $K(x,\cdot)\ge \beta(\lambda)\nu_\lambda(\cdot)$ for every $x$.

    Write $K(x,\cdot)=\beta(\lambda)\nu_\lambda(\cdot)+(1-\beta(\lambda))\widetilde K(x,\cdot)$.
    Since $\widetilde K$ is a Markov kernel, for every probability measures $\mu,\nu$,
    \[
    \|\mu K-\nu K\|_{\mathrm{TV}}
    =
    (1-\beta(\lambda))\|\mu\widetilde K-\nu\widetilde K\|_{\mathrm{TV}}
    \le
    (1-\beta(\lambda))\|\mu-\nu\|_{\mathrm{TV}}.
    \]
    \end{proof}

\sbpara{Proof of Corollary~\ref{cor:uniform-tv-mixing}}

\begin{proof}
    Lemma~\ref{lem:uniform-minorization} gives one-step total-variation contraction by the factor $1-\beta(\lambda)$.
    Iterating it and using stationarity of $\pi_{\lambda,G}$ gives
    \[
    \sup_{x\in[0,1]}\|K^t(x,\cdot)-\pi_{\lambda,G}\|_{\mathrm{TV}}
    \le
    (1-\beta(\lambda))^t.
    \]
    Since $(1-\beta)^t\le e^{-\beta t}$ for $\beta\in(0,1)$, the right-hand side is at most $\varepsilon$ whenever
    $t\ge \beta(\lambda)^{-1}\log(1/\varepsilon)$.
    This proves the stated bound.
\end{proof}

\section{Omitted proofs from Section~\ref{sec:worst-case}}
This section first proves Lemma~\ref{lemma:worst-case-tightness}; next it provides the ingredients for 
Lemma~\ref{lem:unanchored-line-benchmark}. 
The proof of Lemma~\ref{lem:unanchored-line-benchmark} is put to the end.

\subsection{Proof of Lemma~\ref{lemma:worst-case-tightness}}

\begin{proof}
    Consider the population with mass $p$ at $0$ and mass $q=1-p$ at $1$.
    Let $M_t=\Median(O_t,U_t,V_t)$, so
    \[
    O_{t+1}=(1-\lambda)M_t+\lambda O_t .
    \]
    With probability $p^2$, both sampled bliss points are $0$ and $M_t=0$.
    With probability $q^2$, both are $1$ and $M_t=1$.
    With probability $2pq$, the two sampled bliss points are split and $M_t=O_t$.

    Taking expectations at stationarity gives
    \[
    \Exp[O_\infty]
    =
    p^2\lambda\Exp[O_\infty]
    +q^2\bigl((1-\lambda)+\lambda\Exp[O_\infty]\bigr)
    +2pq\Exp[O_\infty].
    \]
    Since $\lambda<1$, this simplifies to
    \[
    \Exp[O_\infty]=\frac{q^2}{p^2+q^2},
    \]
    which is independent of $\lambda$.
    The social cost is
    \begin{equation}
        \begin{aligned}
            \Exp[\SCost(O_{\infty})]
            &= \Exp[pO_{\infty}+q(1-O_{\infty})] \\
            &= q+(p-q)\Exp[O_{\infty}]
            = \frac{pq}{p^2+q^2}.
        \end{aligned}
    \end{equation}

    Without loss of generality, assume $q\le p$, so the optimal outcome is $0$ and its social cost is $q$.
    The stationary distortion is therefore
    \[
    \frac{\Exp[\SCost(O_{\infty})]}{\SCost(0)}
    =
    \frac{p}{p^2+q^2}.
    \]
    Maximizing this expression over $p\in[1/2,1]$ gives $p=\frac{\sqrt2}{2}$ and $q=1-\frac{\sqrt2}{2}$, where the value is $\frac{1+\sqrt2}{2}$.
    This value is independent of $\lambda$.
\end{proof}

\subsection{Proof of Lemma~\ref{lem:social-cost-alternative}}
\begin{proof}
    Let $Y\sim G$ be independent of $O$.
    Since
    \[
        \abs{O-Y}
        =
        \int_0^1\left(\ind_{Y\le r<O}+\ind_{O\le r<Y}\right)\,dr,
    \]
    Fubini's theorem gives
    \[
    \begin{aligned}
        \Exp[\SCost(O)]
        &=\Exp[\abs{O-Y}]\\
        &=\int_0^1
        \left(
        \Pr(Y\le r<O)+\Pr(O\le r<Y)
        \right)\,dr\\
        &=\int_0^1
        \left(
        G(r)\Pr(O>r)+(1-G(r))\Pr(O\le r)
        \right)\,dr.
    \end{aligned}
    \]
    For deterministic $a$, this formulation becomes
    \[
    \SCost(a)=\int_0^a G(r)\,dr+\int_a^1(1-G(r))\,dr.
    \]
    Let $m_G$ be a population median, so $G(r)\le 1/2$ for $r<m_G$ and $G(r)\ge 1/2$ for $r>m_G$, except possibly on a median interval where either choice gives the same value.
    For any $a\ge m_G$,
    \[
    \SCost(a)-\SCost(m_G)=\int_{m_G}^a(2G(r)-1)\,dr\ge0,
    \]
    and for any $a\le m_G$,
    \[
    \SCost(a)-\SCost(m_G)=\int_a^{m_G}(1-2G(r))\,dr\ge0.
    \]
    Thus the minimum deterministic social cost is attained at any population median.
    Finally,
    \[
    \SCost(m_G)
    =
    \int_0^{m_G}G(r)\,dr+\int_{m_G}^1(1-G(r))\,dr
    =
    \int_0^1\min\{G(r),1-G(r)\}\,dr.
    \]
\end{proof}

\subsection{Proof of Lemma~\ref{lem:fain-general-law}}

\begin{proof}
    The transition CDF is proved by the same argument as in Proposition~\ref{prop:general-transition-cdf}.
    For every $x,z\in[0,1]$,
    \[
    \Pr(X_{t+1}\le z\mid X_t=x)
    =
    \begin{cases}
    G(z)^2, & z<x,\\[2mm]
    1-\bigl(1-G(z)\bigr)^2, & z\ge x.
    \end{cases}
    \]
    For the stationary CDF, fix $z\in[0,1]$ and write $F_t(z)=\Pr(X_t\le z)$.
    Conditioning on whether $X_t\le z$ gives
    \[
    \begin{aligned}
    F_{t+1}(z)
    &=
    F_t(z)\bigl[1-\bigl(1-G(z)\bigr)^2\bigr]
    +(1-F_t(z))G(z)^2 \\
    &=
    G(z)^2+2G(z)(1-G(z))F_t(z).
    \end{aligned}
    \]
    At stationarity, $F_{t+1}(z)=F_t(z)=\Pi_G(z)$, which implies
    \[
    \Pi_G(z)
    =
    \frac{G(z)^2}{1-2G(z)(1-G(z))}
    =
    \frac{G(z)^2}{G(z)^2+(1-G(z))^2}.
    \]
\end{proof}

\subsection{Proof of Lemma~\ref{lem:unanchored-line-benchmark}}

\begin{proof}
    By Lemma~\ref{lem:social-cost-alternative}, the social cost of the unanchored stationary distribution can be written as
    \begin{equation}
        \begin{aligned}
        \Exp[\SCost(X_{\infty})] &= \int_0^1 \left[ G(r)\Pr(X_{\infty}>r)+ (1 - G(r))\Pr(X_{\infty}\le r)\right]d\,r \\
        &\stackrel{(a)}{=}\int_{0}^1 \frac{G(r) (1 - G(r))}{G(r)^2 + (1 - G(r))^2} d\,r \\
        &\stackrel{(b)}{\leq} \int_{0}^1 \frac{\sqrt{2} + 1}{2} \min \{G(r), 1 - G(r)\} d\,r \\
        &\stackrel{(c)}{=} \frac{\sqrt{2} + 1}{2} \SCost(m_G).
        \end{aligned}
    \end{equation}

    Step $(a)$ substitutes $\Pr(X_{\infty}\le r)=\Pi_G(r)$ and $\Pr(X_{\infty}>r)=1-\Pi_G(r)$ using Lemma~\ref{lem:fain-general-law}.
    Step $(b)$ follows from
    \[
    \frac{G(r)(1-G(r))}{G(r)^2+(1-G(r))^2}
    \le
    \frac{\sqrt2+1}{2}\min\{G(r),1-G(r)\},
    \]
    which is equivalent to
    $\max_{0\le p\le 1/2}(1-p)/(p^2+(1-p)^2)\le(\sqrt2+1)/2$.
    Step $(c)$ applies Lemma~\ref{lem:social-cost-alternative}, which gives
    $\SCost(m_G)=\int_0^1\min\{G(r),1-G(r)\}\,dr$ for any population median $m_G$.
    Since every population median minimizes deterministic social cost, the lemma follows.

        \end{proof}

\section{Omitted proofs from Section~\ref{sec:general-distortion}}
\label{sec:general-distortion-omitted}

\subsection{Proof of Lemma~\ref{lem:deliberative-fixed-point}}

\begin{proof}
        When $U$ and $V$ lie on different sides of $x$, $\med(x,U,V)=x$.
        Thus only pairs on the same side of $x$ contribute to the expected movement.
        We can write $b_G(x)$ as

        \begin{equation}
            \begin{aligned}
                b_G(x) &= \Exp[\med(x, U, V) - x] \\
                &\stackrel{(a)}{=} \Exp[(\med(x, U, V) - x)_{+}] - \Exp[(x -\med(x, U, V))_{+}] \\
                &= \Exp[\int_{x}^{1}\ind_{\med(x, U, V) \geq r} d\, r] - \Exp[\int_{0}^x \ind_{\med(x, U, V) \leq r} d\, r] \\
                &= \int_{x}^{1} \Pr(U > r, V > r) d\, r - \int_{0}^x \Pr(U < r, V < r) d\, r  \\
                &= \int_{x}^1 (1 - G(r))^2 d\,r - \int_{0}^x G(r)^2 d\, r.
            \end{aligned}
        \end{equation}
        Here $(\cdot)_+=\max\{\cdot,0\}$, and $(a)$ uses the identity $y=y_+-(-y)_+$.
        The last equality is unaffected by the choice of strict or weak threshold inequalities, because the left limits of $G$ differ from $G$ only at countably many points.
        For $0\le x<y\le1$,
        \[
        b_G(y)-b_G(x)
        =
        -\int_x^y\left(G(r)^2+(1-G(r))^2\right)\,dr
        \le -\frac{1}{2}(y-x).
        \]
        Thus $b_G$ is continuous and strictly decreasing.
        Moreover,
        \[
        b_G(0)=\int_0^1(1-G(r))^2\,dr\ge0,
        \qquad
        b_G(1)=-\int_0^1G(r)^2\,dr\le0.
        \]
        By the intermediate value theorem, there exists $\theta_G\in[0,1]$ with $b_G(\theta_G)=0$.
        Strict monotonicity gives uniqueness.
\end{proof}

\subsection{Proof of Theorem~\ref{thm:general-stationary-concentration}}

\begin{proof}
        Let $O_{x,\lambda}$ be the one-step outcome from $O_t=x$ with anchoring parameter $\lambda$.
        Then $O_{x,\lambda}=x+(1-\lambda)(O_{x,0}-x)$.
        We bound the conditional second moment around $\theta_G$.

        \begin{equation}
            \begin{aligned}
                \Exp[(O_{t+1}-\theta_G)^2\mid O_t=x]
                &=\Exp[\left((x - \theta_G) + (1 - \lambda) (O_{x, 0} - x)\right)^2] \\
                &\stackrel{(a)}{=}(x-\theta_G)^2+2(1-\lambda)(x-\theta_G)b_G(x)
                +(1-\lambda)^2\Exp[(O_{x, 0}-x)^2] \\
                &\stackrel{(b)}{\le} (x-\theta_G)^2-(1-\lambda)(x-\theta_G)^2+(1-\lambda)^2 \\
                &= \lambda (x-\theta_G)^2+(1-\lambda)^2.
            \end{aligned}
        \end{equation}
        where $(a)$ follows from linearity of expectation and $b_G(x)=\Exp[O_{x,0}-x]$.
        For $(b)$, Lemma~\ref{lem:deliberative-fixed-point} gives
        \[
        b_G(x)-b_G(\theta_G)
        =
        -\int_{\theta_G}^{x}
        \left(G(r)^2+(1-G(r))^2\right)\,dr.
        \]
        Since $b_G(\theta_G)=0$, this implies
        \[
        (x-\theta_G)b_G(x)
        \le
        -\frac{1}{2}(x-\theta_G)^2.
        \]
        We also use $(O_{x,0}-x)^2\le1$.
        Taking expectations under the stationary distribution gives
        \[
        \Exp[(O_{\lambda}-\theta_G)^2] \leq \lambda\Exp[(O_{\lambda}-\theta_G)^2]+(1-\lambda)^2.
        \]
        Solving yields $\Exp[(O_{\lambda}-\theta_G)^2]\leq 1-\lambda$, where $O_{\lambda} \sim \pi_{\lambda,G}$.
\end{proof}

\subsection{Proof of Lemma~\ref{lem:centered-fixed-point}}
\begin{proof}
    By the integral formula for $b_G$ from Lemma~\ref{lem:deliberative-fixed-point},
    \[
    b_G\!\left(\frac{1}{2}\right)
    =
    \int_{1/2}^1 (1-G(r))^2\,dr
    -
    \int_0^{1/2}G(r)^2\,dr.
    \]
    In the first integral, substitute $r=\frac{1}{2}+s$.
    Since $G$ is centered, $1-G(\frac{1}{2}+s)=G(\frac{1}{2}-s)$.
    Therefore
    \[
    \int_{1/2}^1 (1-G(r))^2\,dr
    =
    \int_0^{1/2}G\!\left(\frac{1}{2}-s\right)^2\,ds
    =
    \int_0^{1/2}G(u)^2\,du.
    \]
    Hence $b_G(1/2)=0$.
    Lemma~\ref{lem:deliberative-fixed-point} gives uniqueness, so $\theta_G=1/2$.
\end{proof}

\subsection{Proof of Theorem~\ref{thm:distortion-uniform}}
\begin{proof}
	        The upper and lower bounds follow directly from Corollary~\ref{cor:upperQ}.
	        We now compute the exact value when $\lambda=0$.
        By Lemma~\ref{lem:fain-general-law}, the stationary CDF for the uniform unanchored chain is
        \[
        \Pr(O_\infty\le r) =
        \frac{r^2}{r^2+(1-r)^2}.
        \]
        Applying Lemma~\ref{lem:social-cost-alternative} with $G(r)=r$ gives
        \[
        \begin{aligned}
        \Exp[\SCost(O_\infty)]
        &=
        \int_0^1
        \left(r\Pr(O_\infty>r)+(1-r)\Pr(O_\infty\le r)\right)\,dr\\
        &=
	        \int_0^1
	        \frac{r(1-r)}{r^2+(1-r)^2}\,dr \\
	        &=
	        \frac{1}{4}\int_{-1}^1 \frac{1-s^2}{1+s^2}\,ds
	        =
	        \frac{\pi-2}{4},
	        \end{aligned}
	        \]
	        where the penultimate equality uses the substitution $s=2r-1$.
	        Since $\SCost(1/2)=1/4$, we have the distortion $\pi-2$.
\end{proof}

\subsection{Proof of Lemma~\ref{lem:social-cost-uniform}}
\begin{proof}

    When the distribution is uniform, for any fixed outcome $O_t\in[0,1]$ we have
    \[
        \SCost(O_t)
        = \int_0^{O_t} (O_t-x)\,dx + \int_{O_t}^1 (x-O_t)\,dx
        = \frac{O_t^2}{2} + \frac{(1-O_t)^2}{2}
        = \frac{1}{4} + \left(O_t-\frac{1}{2}\right)^2.
    \]
    Since $\SCost(O^*)=\SCost(1/2)=1/4$, using $(O_t-1/2)^2=\Qcal_t^2$ in the above equation gives

        \begin{equation}
        \frac{\Exp[\SCost(O_t)]}{\SCost(O^*)}
        = 1 + 4\Exp[\Qcal_t^2].
        \end{equation}

\end{proof}

\subsection{Proof of Lemma~\ref{lem:centered-second-moment}}

\begin{proof}
    We first recall the transition density for the uniform distribution from Equation~\eqref{eq:density}:

    \begin{equation*}
    k_x(z)=
    \begin{cases}
    \dfrac{2({z-\lambda x})}{(1-\lambda)^2}, & \lambda x<z<x,\\[2mm]
    \dfrac{2(1-\lambda+\lambda x-z)}{(1-\lambda)^2}, & x<z<\lambda x+(1-\lambda),\\[2mm]
    0, & \text{otherwise},
    \end{cases}
    \end{equation*}
    plus the atom $2x(1-x)\delta_x$, where $\delta_x$ is the Dirac mass at $x$.

    Fix $x\in[0,1]$. By the transition density,
    \begin{equation}
    \begin{aligned}
    \Exp[\Qcal_{t+1}^2\mid O_t=x]
    &=
    2x(1-x)\left(x-\frac{1}{2}\right)^2 \\
    &\quad+
    \frac{2}{(1-\lambda)^2}\int_{\lambda x}^{x}\left(z-\frac{1}{2}\right)^2(z-\lambda x)\,dz \\
    &\quad+
    \frac{2}{(1-\lambda)^2}\int_{x}^{1-\lambda+\lambda x}\left(z-\frac{1}{2}\right)^2(1-\lambda+\lambda x-z)\,dz.
    \end{aligned}
    \end{equation}

    By reflection symmetry, the conditional second moment is the same at $O_t=\frac{1}{2}+y$ and $O_t=\frac{1}{2}-y$.
    Hence, for $0\le y\le\frac{1}{2}$, conditioning on $\Qcal_t=y$ gives this same value.
    Substituting $x = \frac{1}{2} +y$ gives
    \begin{equation}
    \begin{aligned}
    \Exp[\Qcal_{t+1}^2\mid \Qcal_t=y]
    &=
    \left(\frac{1}{2} y^2-2y^4\right) \\
    &\quad+
    \frac{2}{(1-\lambda)^2}
    \int_{\frac{\lambda}{2}+\lambda y}^{\frac{1}{2}+y}
    \left(z-\frac{1}{2}\right)^2
    \left(z-\frac{\lambda}{2}-\lambda y\right)\,dz \\
    &\quad+
    \frac{2}{(1-\lambda)^2}
    \int_{\frac{1}{2}+y}^{\,1-\frac{\lambda}{2}+\lambda y}
    \left(z-\frac{1}{2}\right)^2
    \left(1-\frac{\lambda}{2}+\lambda y-z\right)\,dz.
    \end{aligned}
    \end{equation}

    Now set $u=z-\frac{1}{2}$, so $z=u+\frac{1}{2}$.
    The integration limits become $u=y$ when $z=\frac{1}{2}+y$,
    $u=-\frac{1-\lambda}{2}+\lambda y$ when $z=\frac{\lambda}{2}+\lambda y$,
    and $u=\frac{1-\lambda}{2}+\lambda y$ when $z=1-\frac{\lambda}{2}+\lambda y$.

    \[
    \begin{aligned}
    \Exp[\Qcal_{t+1}^2\mid \Qcal_t=y]
    &=
    \frac{1}{2} y^2-2y^4 \\
    &\quad+
    \frac{2}{(1-\lambda)^2}
    \int_{-\frac{1-\lambda}{2}+\lambda y}^{y}
    u^2\left(u+\frac{1-\lambda}{2}-\lambda y\right)\,du \\
    &\quad+
    \frac{2}{(1-\lambda)^2}
    \int_{y}^{\frac{1-\lambda}{2}+\lambda y}
    u^2\left(\frac{1-\lambda}{2}+\lambda y-u\right)\,du.
    \end{aligned}
    \]

    We now compute the two integral terms.

    \begin{equation}
        \label{eq:subint1}
        \begin{aligned}
        &\int_{-\frac{1-\lambda}{2}+\lambda y}^{y} u^2\left(u+\frac{1-\lambda}{2}-\lambda y\right)\,du = \left(\frac{1}{4}u^4 + \frac{1}{3}(\frac{1-\lambda}{2} - \lambda y) u^3\right) \big|_{-\frac{1-\lambda}{2} + \lambda y}^y \\
        &\qquad\qquad= \left(\frac{1}{4}y^4 + \frac{1}{3}(\frac{1-\lambda}{2} - \lambda y) y^3\right) - \left(\frac{1}{4}(-\frac{1-\lambda}{2} + \lambda y)^4 + \frac{1}{3}(\frac{1-\lambda}{2} - \lambda y) (-\frac{1-\lambda}{2} + \lambda y)^3\right) \\
        &\qquad\qquad= \frac{1}{4}y^4 + \frac{1}{3}(\frac{1-\lambda}{2} - \lambda y) y^3 + \frac{1}{12}(-\frac{1-\lambda}{2} + \lambda y)^4
        \end{aligned}
    \end{equation}

    \begin{equation}
        \label{eq:subint2}
        \begin{aligned}
         &\int_{y}^{\frac{1-\lambda}{2}+\lambda y} u^2\left(\frac{1-\lambda}{2}+\lambda y-u\right)\,du = \left(\frac{1}{3} u^3 (\frac{1-\lambda}{2} + \lambda y) - \frac{1}{4} u^4\right) \Big|_{y}^{\frac{1 - \lambda}{2} + \lambda y} \\
         &\qquad\qquad = \left(\frac{1}{3} (\frac{1 - \lambda}{2} + \lambda y)^3 (\frac{1-\lambda}{2} + \lambda y) - \frac{1}{4} (\frac{1 - \lambda}{2} + \lambda y)^4\right) - \left(\frac{1}{3} y^3 (\frac{1-\lambda}{2} + \lambda y) - \frac{1}{4} y^4\right) \\
         &\qquad\qquad = \frac{1}{4} y^4 - \frac{1}{3}y^3(\frac{1 - \lambda}{2} + \lambda y) + \frac{1}{12} (\frac{1 - \lambda}{2} + \lambda y)^4
        \end{aligned}
    \end{equation}

    Adding Equations~\eqref{eq:subint1} and~\eqref{eq:subint2} gives
    \begin{equation}
        \label{eq:summing}
        \begin{aligned}
        &\frac{1}{2} y^4 - \frac{2\lambda}{3}y^4 + \frac{1}{12} \left((-\frac{1-\lambda}{2} + \lambda y)^4 + (\frac{1-\lambda}{2} + \lambda y)^4 \right) \\
        &\qquad\qquad= (\frac{1}{2} - \frac{2\lambda}{3}) y^4 + \frac{1}{12} \left(2 (\frac{1-\lambda}{2})^4 + 12 (\frac{1-\lambda}{2})^2 (\lambda y)^2 + 2 (\lambda y)^4\right) \\
        &\qquad\qquad= (\frac{1}{2} - \frac{2\lambda}{3} + \frac{1}{6} \lambda^4)y^4 +  \frac{(1 - \lambda)^2}{4} \lambda^2 y^2 + \frac{1}{6} (\frac{1 - \lambda}{2})^4
        \end{aligned}
    \end{equation}

    Combining these terms,
    \begin{equation}
        \begin{aligned}
        \Exp[\Qcal_{t+1}^2\mid \Qcal_t=y]
        &=
        \left(\frac{1}{2} y^2-2y^4\right) + \frac{2}{(1 -\lambda)^2} \left((\frac{1}{2} - \frac{2\lambda}{3} + \frac{1}{6} \lambda^4)y^4 +  \frac{(1 - \lambda)^2}{4} \lambda^2 y^2 + \frac{1}{6} (\frac{1 - \lambda}{2})^4  \right) \\
        &= \frac{(1 -\lambda)^2}{48} + \frac{1 + \lambda^2}{2} y^2 - \frac{(1 - \lambda)(\lambda + 3)}{3} y^4
        \end{aligned}
        \end{equation}
\end{proof}

\subsection{Proof of Corollary~\ref{cor:upperQ}}
\begin{proof}
        \sbpara{Upper bound.}
        Taking expectations in Lemma~\ref{lem:centered-second-moment} gives
        $\Exp[\Qcal_{t+1}^2] \leq \frac{(1 -\lambda)^2}{48} + \frac{1 + \lambda^2}{2} \Exp[\Qcal_{t}^2]$.
        Let $a \coloneqq \frac{(1 - \lambda)^2}{48}$ and $b = \frac{1 + \lambda^2}{2}$.
        Iterating the recursion gives
        \begin{equation}
            \begin{aligned}
                \Exp[\Qcal_{t+1}^2] &\leq a + b \Exp[\Qcal_t^2] \\
                &=a(1 + b + \ldots + b^{t-1}) + b^t \Exp[\Qcal_1^2] \\
                &=a\cdot\frac{1 - b^t}{1 - b} + b^t \Exp[\Qcal_1^2] \\
                &=\frac{a}{1 -b} + b^t \left(\Exp[\Qcal_1^2] - \frac{a}{1-b}\right) \\
                &= \frac{1-\lambda}{24(1+\lambda)} + \left(\frac{1 + \lambda^2}{2}\right)^t \left(\Exp[\Qcal_1^2] - \frac{1-\lambda}{24(1 + \lambda)}\right).
            \end{aligned}
        \end{equation}
        Since $\Exp[\Qcal_1^2]\le\frac14$,
        \[
        \Exp[\Qcal_1^2]-\frac{1-\lambda}{24(1+\lambda)}
        \le
        \frac{5+7\lambda}{24(1+\lambda)}
        \]
        This implies the stated finite-time upper bound; taking $t\to\infty$ gives the stationary upper bound.

        \sbpara{Lower bound.}
        Since $\Qcal_t^4\le\frac14\Qcal_t^2$, Lemma~\ref{lem:centered-second-moment} also gives
        $\Exp[\Qcal_{t+1}^2] \geq \frac{(1 -\lambda)^2}{48} + \frac{7\lambda^2+2\lambda+3}{12} \Exp[\Qcal_{t}^2]$.
        Let $a \coloneqq \frac{(1 - \lambda)^2}{48}$ and $b = \frac{7\lambda^2+2\lambda+3}{12}$.
        Since $\Exp[\Qcal_1^2]\ge0$, iterating gives
        \begin{equation}
            \begin{aligned}
                \Exp[\Qcal_{t+1}^2] &\geq a + b \Exp[\Qcal_t^2] \\
                &=a(1 + b + \ldots + b^{t-1}) + b^t \Exp[\Qcal_1^2] \\
                &=a\cdot\frac{1 - b^t}{1 - b} + b^t \Exp[\Qcal_1^2] \\
                &\geq\frac{a}{1 -b} + b^t \left( - \frac{a}{1-b}\right) \\
                &= \frac{1-\lambda}{4(9+7\lambda)}
                \left[1-\left(\frac{7\lambda^2+2\lambda+3}{12}\right)^t\right]
            \end{aligned}
        \end{equation}

\end{proof}

\subsection{Proof of Lemma~\ref{lem:centered-second-moment-general-g}}

\begin{proof}
    By symmetry, for $0\le q\le\frac{1}{2}$,
    $\Exp[\Qcal_{t+1}^2\mid O_t=\frac{1}{2}+q] = \Exp[\Qcal_{t+1}^2\mid O_t=\frac{1}{2}-q]$.
    Hence, this common value is $\Exp[\Qcal_{t+1}^2\mid \Qcal_t=q]$.
    Write $x_q=\frac{1}{2}+q$ and condition on $O_t=x_q$.
    By Proposition~\ref{prop:general-transition-density}, the conditional distribution has an atom at $O_t$ and a continuous part on the two sides of $O_t$.
    For the display below, set
    $u_z=(z-\lambda x_q)/(1-\lambda)$.
    Therefore
    \[
    \begin{aligned}
    \Exp[\Qcal_{t+1}^2\mid O_t=x_q]
    &=
    2G(x_q)\bigl(1-G(x_q)\bigr)q^2\\
    &+
    \frac{2}{1-\lambda}
    \int_{\lambda x_q}^{x_q}
    \left(z-\frac{1}{2}\right)^2
    G(u_z)g(u_z)\,dz\\
    &+
    \frac{2}{1-\lambda}
    \int_{x_q}^{1-\lambda+\lambda x_q}
    \left(z-\frac{1}{2}\right)^2
    \bigl(1-G(u_z)\bigr)g(u_z)\,dz.
    \end{aligned}
    \]
    Substitute $u=\frac{z-\lambda x_q}{1-\lambda}$, so
    $z-\frac{1}{2}=\lambda q+(1-\lambda)\left(u-\frac{1}{2}\right)$.
    Then
    \begin{equation}
    \begin{aligned}
    \Exp[\Qcal_{t+1}^2\mid \Qcal_t=q]
    &=
    2G\!\left(\frac{1}{2}+q\right)
    \left(1-G\!\left(\frac{1}{2}+q\right)\right)q^2\\
    &\quad+
    2\int_0^{\frac{1}{2}+q}
    \left[\lambda q+(1-\lambda)\left(u-\frac{1}{2}\right)\right]^2
    G(u)g(u)\,du\\
    &\quad+
    2\int_{\frac{1}{2}+q}^{1}
    \left[\lambda q+(1-\lambda)\left(u-\frac{1}{2}\right)\right]^2
    \left(1-G(u)\right)g(u)\,du.
    \end{aligned}
    \end{equation}
    For the first integral, split at $1/2$ and substitute $u=\frac{1}{2}-s$ on the left half and $u=\frac{1}{2}+v$ on the right half.
    \begin{equation}
    \begin{aligned}
    &2\int_0^{\frac{1}{2}+q} \left[\lambda q+(1-\lambda)\left(u-\frac{1}{2}\right)\right]^2 G(u)g(u)\,du \\
    &\quad=
    2\int_0^{\frac{1}{2}} \left[\lambda q+(1-\lambda)\left(u-\frac{1}{2}\right)\right]^2 G(u)g(u)\,du +
    2\int_{\frac{1}{2}}^{\frac{1}{2}+q} \left[\lambda q+(1-\lambda)\left(u-\frac{1}{2}\right)\right]^2 G(u)g(u)\,du\\
    &\quad= 2\int_0^{\frac{1}{2}} \left[\lambda q-(1-\lambda)s\right]^2 G(\frac{1}{2}-s)g(\frac{1}{2}-s)\,ds + 2\int_0^{q} \left[\lambda q+(1-\lambda)v\right]^2 G(\frac{1}{2}+v)g(\frac{1}{2}+v)\,dv.
    \end{aligned}
    \end{equation}

    For the second integral, we substitute $u=\frac{1}{2}+s$. Using the symmetry of $G$, $1-G(\frac{1}{2}+s)=G(\frac{1}{2}-s)$.
    Therefore
    \begin{equation}
    \begin{aligned}
    &2\int_{\frac{1}{2}+q}^{1}
    \left[\lambda q+(1-\lambda)\left(u-\frac{1}{2}\right)\right]^2
    \left(1-G(u)\right)g(u)\,du\\
    &\qquad=
    2\int_q^{1/2}
    \left(\lambda q+(1-\lambda)s\right)^2
    G\!\left(\frac{1}{2}-s\right)
    g\!\left(\frac{1}{2}+s\right)\,ds.
    \end{aligned}
    \end{equation}

    Since $g$ is symmetric around $1/2$, $g(\frac{1}{2}-s)=g(\frac{1}{2}+s)$.
    We split the integral into the intervals $[0,q]$ and $[q,1/2]$.
    Putting these pieces together, we have
    \begin{equation}
    \begin{aligned}
    \Exp[\Qcal_{t+1}^2\mid \Qcal_t=q]
    &=
    2G\!\left(\frac{1}{2}+q\right)
    \left(1-G\!\left(\frac{1}{2}+q\right)\right)q^2\\
    &\quad+
    2\int_0^q
    \left(\lambda q-(1-\lambda)s\right)^2
    G\!\left(\frac{1}{2}-s\right)
    g\!\left(\frac{1}{2}+s\right)\,ds\\
    &\quad+
    2\int_q^{1/2}
    \left(\lambda q-(1-\lambda)s\right)^2
    G\!\left(\frac{1}{2}-s\right)
    g\!\left(\frac{1}{2}+s\right)\,ds\\
    &\quad+
    2\int_0^q
    \left(\lambda q+(1-\lambda)s\right)^2
    G\!\left(\frac{1}{2}+s\right)
    g\!\left(\frac{1}{2}+s\right)\,ds\\
    &\quad+
    2\int_q^{1/2}
    \left(\lambda q+(1-\lambda)s\right)^2
    G\!\left(\frac{1}{2}-s\right)
    g\!\left(\frac{1}{2}+s\right)\,ds.
    \end{aligned}
    \end{equation}

    Expanding the four squared terms, and grouping the
    $\lambda^2q^2$, $(1-\lambda)^2s^2$, and $\lambda(1-\lambda)qs$ parts, gives
    \[
    \begin{aligned}
    \Exp[\Qcal_{t+1}^2\mid \Qcal_t=q]
    &=
    2G\!\left(\frac{1}{2}+q\right)
    \left(1-G\!\left(\frac{1}{2}+q\right)\right)q^2\\
    &\quad+
    2\lambda^2q^2
    \left[
    \int_0^q g\!\left(\frac{1}{2}+s\right)\,ds
    +
    2\int_q^{1/2}
    G\!\left(\frac{1}{2}-s\right)g\!\left(\frac{1}{2}+s\right)\,ds
    \right]\\
    &\quad+
    2(1-\lambda)^2
    \left[\int_0^q s^2g\!\left(\frac{1}{2}+s\right)\,ds + 2\int_q^{1/2} s^2G\!\left(\frac{1}{2}-s\right)g\!\left(\frac{1}{2}+s\right)\,ds\right]\\
    &\quad+
    4\lambda(1-\lambda)q
    \int_0^q
    s\left[
    G\!\left(\frac{1}{2}+s\right)-G\!\left(\frac{1}{2}-s\right)
    \right]
    g\!\left(\frac{1}{2}+s\right)\,ds.
    \end{aligned}
    \]
    Here we used $G(\frac{1}{2}-s)+G(\frac{1}{2}+s)=1$ on the interval $[0,q]$, and the
    cross terms on $[q,1/2]$ cancel. Next, we simplify the terms in each group.

    \sbpara{A. $2G\!\left(\frac{1}{2}+q\right)\left(1-G\!\left(\frac{1}{2}+q\right)\right)q^2$.}
    \begin{equation}
    \begin{aligned}
    2G\!\left(\frac{1}{2}+q\right)\left(1-G\!\left(\frac{1}{2}+q\right)\right)q^2
    &=2\left(\frac{1}{4} - \left(G\!\left(\frac{1}{2} + q\right) - \frac{1}{2}\right)^2\right)q^2 \\
    &= \frac{1}{2}q^2 - 2q^2\left(G\!\left(\frac{1}{2} + q\right) - \frac{1}{2}\right)^2 \\
    \end{aligned}
    \end{equation}

    \sbpara{B. The $2\lambda^2q^2$ group.}
    We again apply the symmetry property of $g$ to simplify the integral.
    We have $\int_0^q g\!\left(\frac{1}{2}+s\right)\,ds=G\!\left(\frac{1}{2}+q\right)-\frac{1}{2}$
    and, using $g(\frac{1}{2}+s)=g(\frac{1}{2}-s)$, $\int_q^{1/2} G\!\left(\frac{1}{2}-s\right)g\!\left(\frac{1}{2}+s\right)\,ds= \frac{1}{2}\left(1-G\!\left(\frac{1}{2}+q\right)\right)^2$.
    \begin{equation}
    \begin{aligned}
        &2\lambda^2q^2
        \left[
        \int_0^q g\!\left(\frac{1}{2}+s\right)\,ds + 2\int_q^{1/2} G\!\left(\frac{1}{2}-s\right)g\!\left(\frac{1}{2}+s\right)\,ds
        \right] \\
        &\quad=
        2\lambda^2q^2
        \left[
        G\!\left(\frac{1}{2}+q\right)-\frac{1}{2} + \left(1-G\!\left(\frac{1}{2}+q\right)\right)^2
        \right] \\
        &\quad= 2\lambda^2q^2 \left[G(\frac{1}{2} + q) - \frac{1}{2} + 1 + G(\frac{1}{2} +q)^2 -2G(\frac{1}{2} + q)\right] \\
        &\quad = 2\lambda^2 q^2 \left[(G(\frac{1}{2} +q) - \frac{1}{2})^2 + \frac{1}{4} \right] \\
        &\quad = 2\lambda^2 q^2 (G(\frac{1}{2} +q) - \frac{1}{2})^2 + \frac{1}{2} \lambda^2 q^2
    \end{aligned}
    \end{equation}

    \sbpara{C. The $(1-\lambda)^2s^2$ group.} We use $s\le q$ on $[0,q]$ to derive an upper bound.
    \begin{equation}
        \begin{aligned}
            &2(1-\lambda)^2 \left[\int_0^q s^2g\!\left(\frac{1}{2}+s\right)\,ds + 2\int_q^{1/2} s^2G\!\left(\frac{1}{2}-s\right)g\!\left(\frac{1}{2}+s\right)\,ds\right] \\
            &\quad= 2(1-\lambda)^2 \left[\int_0^q s^2 (1- 2G(\frac{1}{2} - s))g\!\left(\frac{1}{2}+s\right)\,ds + 2\int_0^{1/2} s^2G\!\left(\frac{1}{2}-s\right)g\!\left(\frac{1}{2}+s\right)\,ds\right]  \\
            &\quad= 4(1-\lambda)^2 \int_0^{1/2} s^2G\!\left(\frac{1}{2}-s\right)g\!\left(\frac{1}{2}+s\right)\,ds + 2 (1-\lambda)^2\int_0^q s^2 (1- 2G(\frac{1}{2} - s))g\!\left(\frac{1}{2}+s\right)\,ds \\
            &\quad= 4 (1-\lambda)^2 \int_0^{1/2}\left(\frac{1}{2}-u\right)^2G(u)g(u)\,du + 2 (1-\lambda)^2\int_0^q s^2 (1- 2G(\frac{1}{2} - s))g\!\left(\frac{1}{2}+s\right)\,ds \\
            &\quad \leq 4 (1-\lambda)^2 \int_0^{1/2}\left(\frac{1}{2}-u\right)^2G(u)g(u)\,du + 4 q^2(1-\lambda)^2\int_0^q (G(\frac{1}{2} + s) - \frac{1}{2})g\!\left(\frac{1}{2}+s\right)\,ds \\
            &\quad \leq 4 (1-\lambda)^2 \int_0^{1/2}\left(\frac{1}{2}-u\right)^2G(u)g(u)\,du + 2q^2(1-\lambda)^2 \left(G(\frac{1}{2} + q) - \frac{1}{2}\right)^2
        \end{aligned}
    \end{equation}

    \sbpara{D. The $4 \lambda (1 - \lambda) q$ group.} Again using $s\le q$ on $[0,q]$,
    \begin{equation}
        \begin{aligned}
            &4\lambda(1-\lambda)q \int_0^q s\left[G\!\left(\frac{1}{2}+s\right)-G\!\left(\frac{1}{2}-s\right)\right] g\!\left(\frac{1}{2}+s\right)\,ds \\
            &\quad=8\lambda(1-\lambda)q \int_0^q s\left[G\!\left(\frac{1}{2}+s\right)-\frac{1}{2}\right] g\!\left(\frac{1}{2}+s\right)\,ds \\
            &\quad\le 8\lambda(1-\lambda)q^2 \int_{\frac{1}{2}}^{q + \frac{1}{2}} \left(G(u)-\frac{1}{2}\right) g(u)\,du \\
            &\quad=4\lambda(1 -\lambda)q^2 \left(G(\frac{1}{2} + q) - \frac{1}{2}\right)^2
        \end{aligned}
    \end{equation}

    Adding the four groups A--D and reorganizing the terms, we obtain

    \begin{equation}
        \begin{aligned}
            \Exp[\Qcal_{t+1}^2\mid \Qcal_t=q] &\leq \frac{1}{2}q^2 - 2q^2\left(G\!\left(\frac{1}{2} + q\right) - \frac{1}{2}\right)^2 \\
            &\quad +2\lambda^2 q^2 (G(\frac{1}{2} +q) - \frac{1}{2})^2 + \frac{1}{2} \lambda^2 q^2 \\
            &\qquad +4 (1-\lambda)^2 \int_0^{1/2}\left(\frac{1}{2}-u\right)^2G(u)g(u)\,du + 2q^2(1-\lambda)^2 \left(G(\frac{1}{2} + q) - \frac{1}{2}\right)^2 \\
            &\quad\qquad +4\lambda(1 -\lambda)q^2 \left(G(\frac{1}{2} + q) - \frac{1}{2}\right)^2 \\
            &\stackrel{(a)}{=} 4 (1-\lambda)^2 \int_0^{1/2}\left(\frac{1}{2}-u\right)^2G(u)g(u)\,du + \frac{1 + \lambda^2}{2} q^2 \\
            &\quad + \left(-2q^2 + 2\lambda^2q^2 + 2q^2(1-\lambda)^2 + 4\lambda(1-\lambda)q^2\right) \left(G(\frac{1}{2} + q) - \frac{1}{2}\right)^2 \\
            &= 4 (1-\lambda)^2 \int_0^{1/2}\left(\frac{1}{2}-u\right)^2G(u)g(u)\,du + \frac{1 + \lambda^2}{2} q^2 \\
            &\quad + 2q^2\left(-1 + (\lambda + (1-\lambda))^2\right) \left(G(\frac{1}{2} + q) - \frac{1}{2}\right)^2 \\
            &= 4 (1-\lambda)^2 \int_0^{1/2}\left(\frac{1}{2}-u\right)^2G(u)g(u)\,du + \frac{1 + \lambda^2}{2} q^2.
        \end{aligned}
    \end{equation}
    Step $(a)$ is just a rearrangement of terms.
    This proves the upper bound for every possible value $q\in[0,1/2]$ of $\Qcal_t$.
    Taking expectations on both sides completes the proof.
\end{proof}

\subsection{Proof of Theorem~\ref{thm:centered-second-moment-general-g}}
\begin{proof}
    Since $G(u)\le\frac{1}{2}$ for $u\in[0,\frac{1}{2}]$,
    \begin{equation}
        \begin{aligned}
            4 \int_0^{1/2}\left(\frac{1}{2}-u\right)^2G(u)g(u)\,du &\leq 2 \int_0^{1/2}\left(\frac{1}{2}-u\right)^2g(u)\,du \\
            &= \Var(U)
        \end{aligned}
    \end{equation}
    by symmetry around $1/2$.

    Let $c=(1+\lambda^2)/2$ and $A_t=\Exp[\Qcal_t^2]$.
    Lemma~\ref{lem:centered-second-moment-general-g} and the preceding bound give
    \[
    A_{t+1}
    \le
    (1-\lambda)^2\Var(U)+cA_t.
    \]
    Iterating the recursion and using $A_0\le1/4$ gives
    \[
    \begin{aligned}
    A_t
    &\le
    c^tA_0+(1-\lambda)^2\Var(U)\sum_{j=0}^{t-1}c^j\\
    &\le
    \frac14c^t
    +
    (1-\lambda)^2\Var(U)\frac{1-c^t}{1-c}\\
    &=
    \frac14\left(\frac{1+\lambda^2}{2}\right)^t
    +
    \frac{2(1-\lambda)}{1+\lambda}\Var(U)
    \left(1-\left(\frac{1+\lambda^2}{2}\right)^t\right),
    \end{aligned}
    \]
    because $1-c=(1-\lambda)(1+\lambda)/2$.
    Taking $t\to\infty$ gives the stationary bound.
\end{proof}

\section{Simulation setups and more results}
\label{app:more_simulations}
Here we provide details of our simulations and more results.
We discretize the Markov chain by partitioning the interval $[0, 1]$ into $5000$ equal bins, yielding $5000$ representative nodes in the discretized chain.
For each value of $\lambda$, we construct the discretized transition matrix from the conditional CDF of the one-step update: the probability of moving from node $x_i$ to bin $[a_j,a_{j+1}]$ is
\[
    \Pr[O_{t+1}\in [a_j,a_{j+1}]\mid O_t=x_i]
    =
    F_{\lambda,G}(a_{j+1}\mid x_i)-F_{\lambda,G}(a_j\mid x_i),
\]
where $F_{\lambda,G}(\cdot\mid x_i)$ is the conditional CDF induced by drawing two independent voters from the corresponding Beta distribution, applying the anchored update with parameter $\lambda$, and taking the median with the current outcome $x_i$.
We evaluate $100$ equally spaced anchoring parameters $\lambda\in\{0.00,0.01,\ldots,0.99\}$.

We compute the stationary distribution by power iteration, using the mixing time as the stopping criterion: following Theorem~\ref{thm:wasserstein-mixing-time}, we set the mixing-time tolerance to $\varepsilon_{\mathrm{mix}} = \frac{1}{5000}$.
More precisely, for each $\lambda$ we start from the uniform distribution on the $5000$ nodes and iterate the transition matrix for the mixing-time upper bound
$\left\lceil \frac{\log(1/\varepsilon_{\mathrm{mix}})} {\log(2/(1+\lambda))} \right\rceil$
steps; the resulting distribution is used as the stationary approximation.
The empirical mixing time reported in the figures is then computed from the same trajectory as the first time $t$ for which the discretized $1$-Wasserstein distance from the time-$t$ distribution to this stationary approximation is at most $\varepsilon_{\mathrm{mix}}$.
For two binned distributions $\mu$ and $\nu$ on grid midpoints $x_1,\ldots,x_n$, this distance is computed as
$
    W_1^{(n)}(\mu,\nu)
    =
    \sum_{i=1}^{n-1}
    \left|
    \sum_{j=1}^{i}(\mu_j-\nu_j)
    \right|
    (x_{i+1}-x_i).
$
When computing the theoretical lower bound according to Theorem~\ref{thm:wasserstein-mixing-time}, we take its value to be $0$ at $\lambda = 0$.

The additional Beta distributions below show the same qualitative pattern as in the main text: stronger anchoring increases the empirical mixing time, while the stationary distributions become more concentrated around the deliberative fixed point.

\begin{figure}[t]
    \centering
    \begin{tabular}{@{}>{\centering\arraybackslash}m{0.45\textwidth} >{\centering\arraybackslash}m{0.45\textwidth}@{}}
        \includegraphics[width=\linewidth]{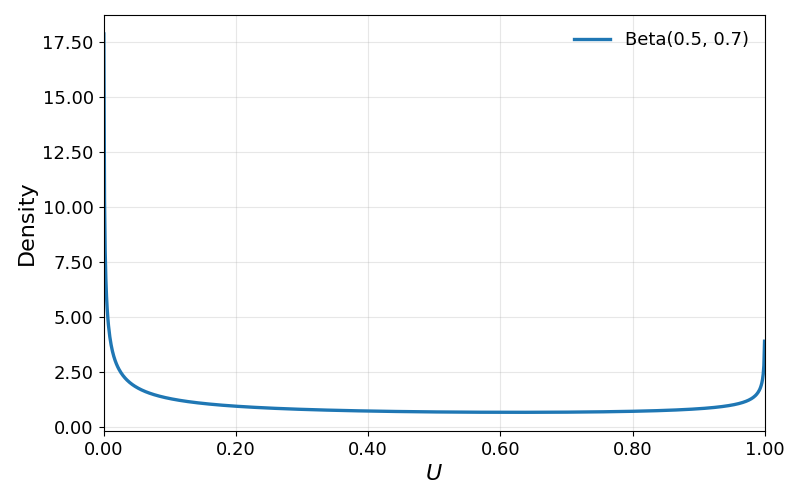} &
        \includegraphics[width=\linewidth]{fig/beta_alpha_05_beta_07/mixing_time_vs_theory.png} \\[0.3em]
        {\small (a)} &
        {\small (b)} \\[1em]
        \includegraphics[width=\linewidth]{fig/beta_alpha_05_beta_07/median_fixed_point_vs_lambda.png} &
        \includegraphics[width=\linewidth]{fig/beta_alpha_05_beta_07/stationary_density_lambdas} \\[0.3em]
        {\small (c)} &
        {\small (d)} \\
    \end{tabular}
    \caption{Combined results on $\mathrm{Beta}(0.5, 0.7)$. (a) The distribution of $\mathrm{Beta}(0.5, 0.7)$. (b) The theoretical mixing time and the estimated mixing time, we use the logarithm scale for the mixing time. (c) The population median, stationary median, and deliberative fixed point for different values of $\lambda$. (d) Larger anchoring concentrates the stationary distribution.}
    \label{fig:hist_lambda_alpha_05_beta_07}
\end{figure}

\begin{figure}[t]
    \centering
    \begin{tabular}{@{}>{\centering\arraybackslash}m{0.45\textwidth} >{\centering\arraybackslash}m{0.45\textwidth}@{}}
        \includegraphics[width=\linewidth]{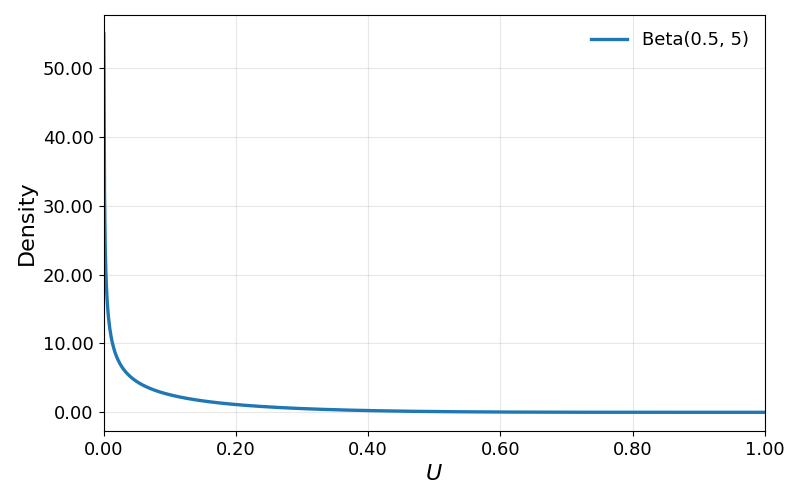} &
        \includegraphics[width=\linewidth]{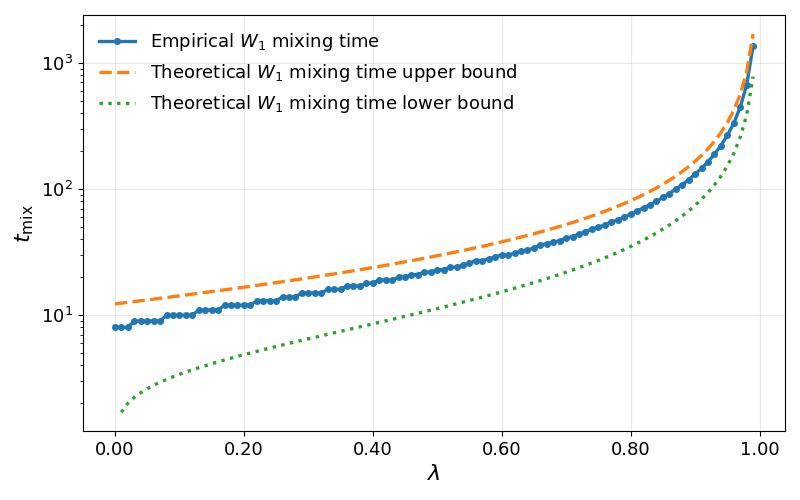} \\[0.3em]
        {\small (a)} &
        {\small (b)} \\[1em]
        \includegraphics[width=\linewidth]{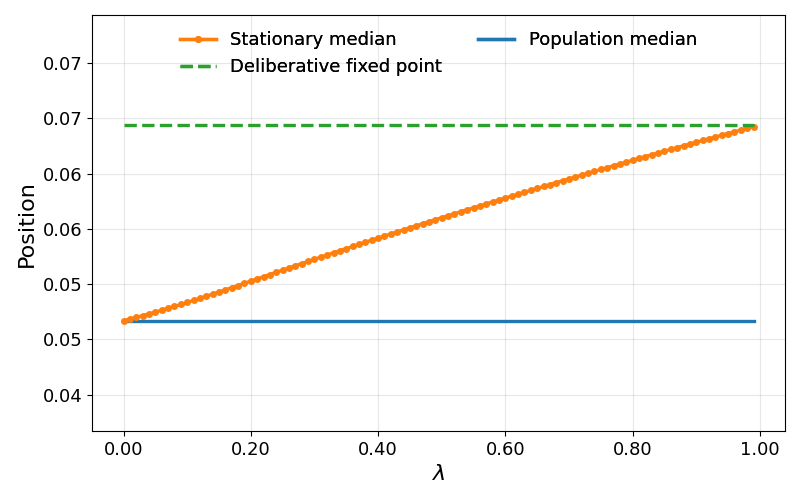} &
        \includegraphics[width=\linewidth]{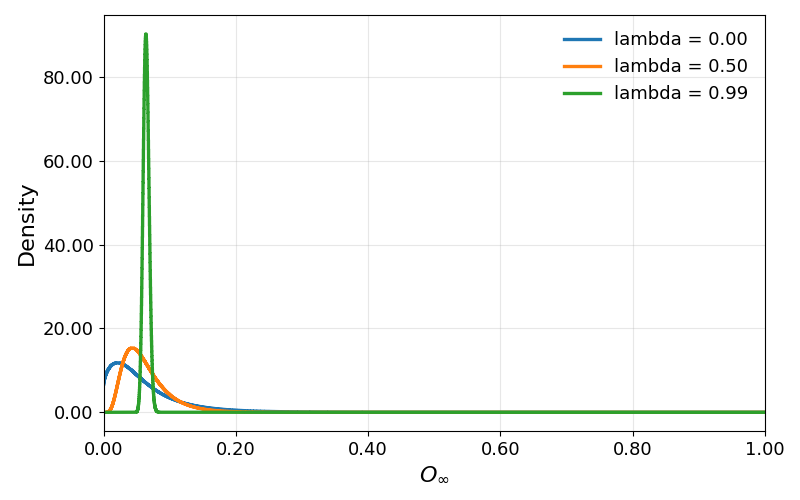} \\[0.3em]
        {\small (c)} &
        {\small (d)} \\
    \end{tabular}
    \caption{Combined results on $\mathrm{Beta}(0.5, 5)$. (a) The distribution of $\mathrm{Beta}(0.5, 5)$. (b) The theoretical mixing time and the estimated mixing time, we use the logarithm scale for the mixing time. (c) The population median, stationary median, and deliberative fixed point for different values of $\lambda$. (d) Larger anchoring concentrates the stationary distribution.}
    \label{fig:hist_lambda_alpha_05_beta_5}
\end{figure}

\begin{figure}[t]
    \centering
    \begin{tabular}{@{}>{\centering\arraybackslash}m{0.45\textwidth} >{\centering\arraybackslash}m{0.45\textwidth}@{}}
        \includegraphics[width=\linewidth]{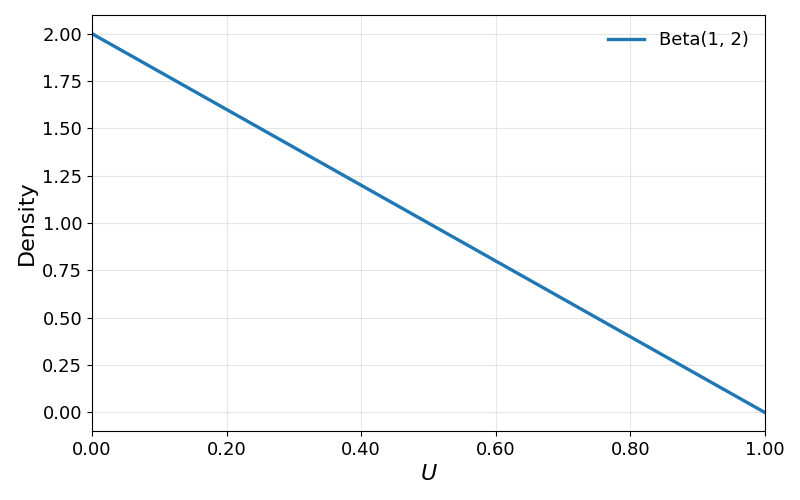} &
        \includegraphics[width=\linewidth]{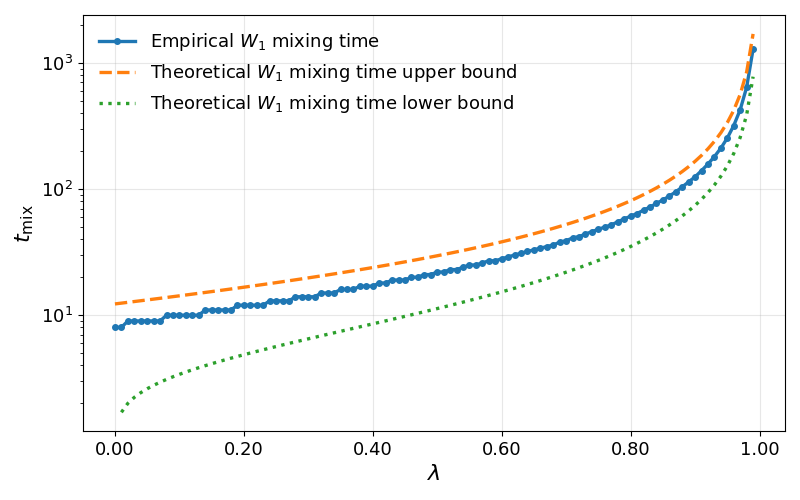} \\[0.3em]
        {\small (a)} &
        {\small (b)} \\[1em]
        \includegraphics[width=\linewidth]{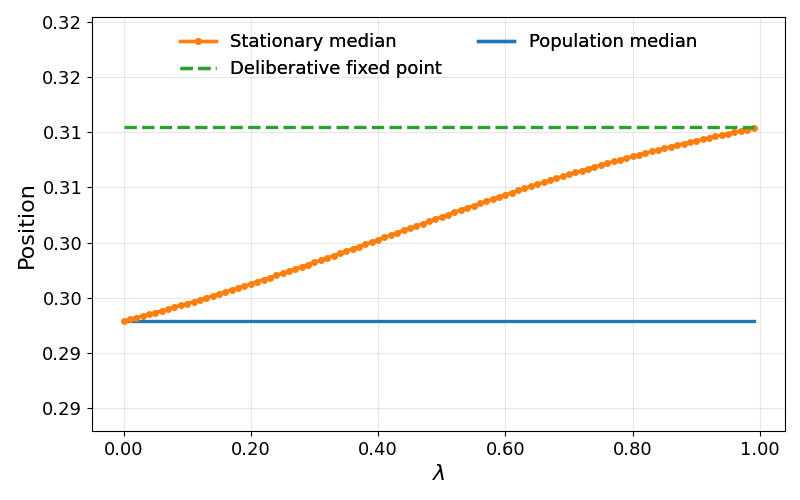} &
        \includegraphics[width=\linewidth]{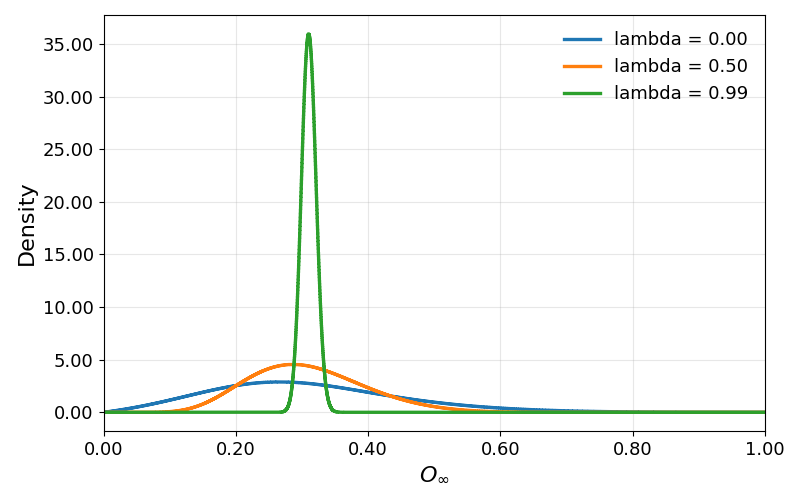} \\[0.3em]
        {\small (c)} &
        {\small (d)} \\
    \end{tabular}
    \caption{Combined results on $\mathrm{Beta}(1, 2)$. (a) The distribution of $\mathrm{Beta}(1, 2)$. (b) The theoretical mixing time and the estimated mixing time, we use the logarithm scale for the mixing time. (c) The population median, stationary median, and deliberative fixed point for different values of $\lambda$. (d) Larger anchoring concentrates the stationary distribution.}
    \label{fig:hist_lambda_alpha_1_beta_2}
\end{figure}

\begin{figure}[t]
    \centering
    \begin{tabular}{@{}>{\centering\arraybackslash}m{0.45\textwidth} >{\centering\arraybackslash}m{0.45\textwidth}@{}}
        \includegraphics[width=\linewidth]{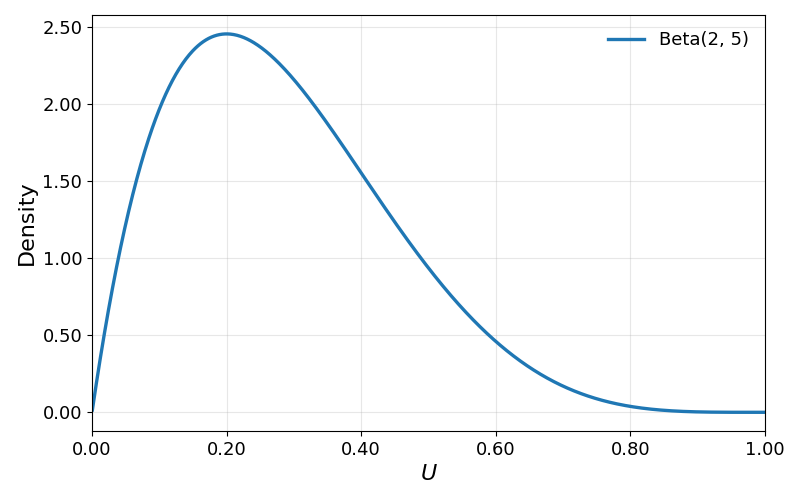} &
        \includegraphics[width=\linewidth]{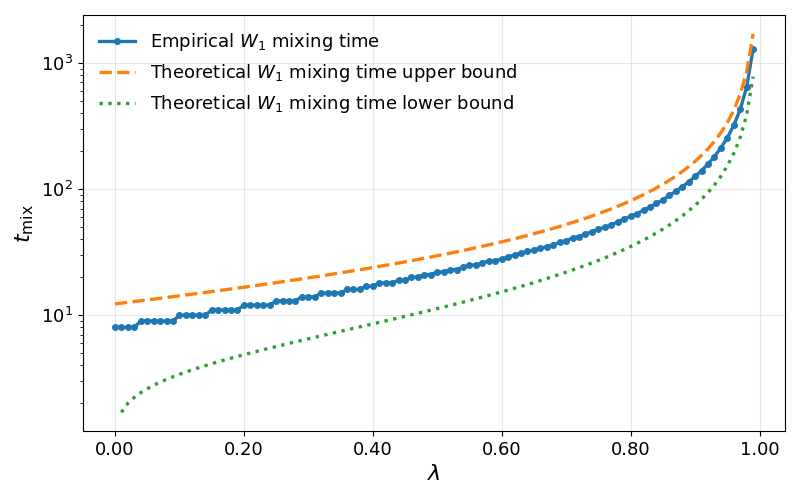} \\[0.3em]
        {\small (a)} &
        {\small (b)} \\[1em]
        \includegraphics[width=\linewidth]{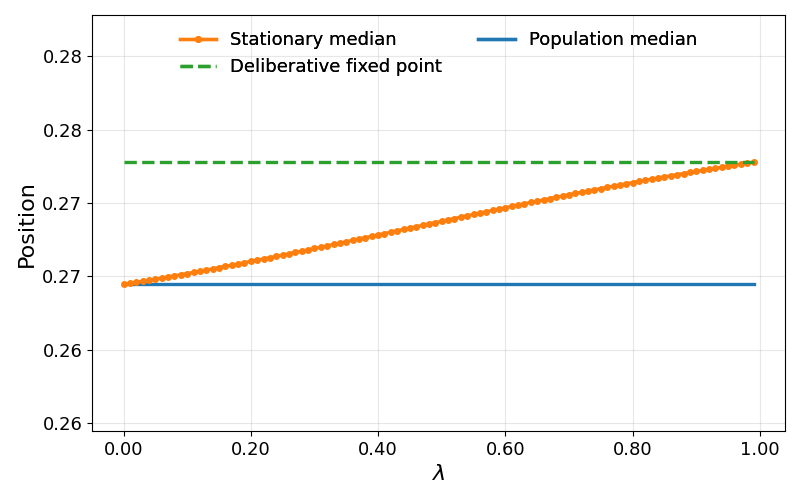} &
        \includegraphics[width=\linewidth]{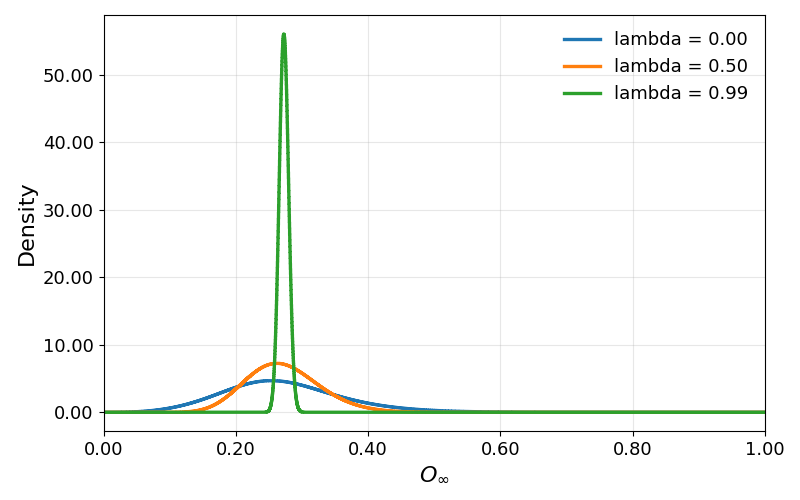} \\[0.3em]
        {\small (c)} &
        {\small (d)} \\
    \end{tabular}
    \caption{Combined results on $\mathrm{Beta}(2, 5)$. (a) The distribution of $\mathrm{Beta}(2, 5)$. (b) The theoretical mixing time and the estimated mixing time, we use the logarithm scale for the mixing time. (c) The population median, stationary median, and deliberative fixed point for different values of $\lambda$. (d) Larger anchoring concentrates the stationary distribution.}
    \label{fig:hist_lambda_alpha_2_beta_5}
\end{figure}

\begin{figure}[t]
    \centering
    \begin{tabular}{@{}>{\centering\arraybackslash}m{0.45\textwidth} >{\centering\arraybackslash}m{0.45\textwidth}@{}}
        \includegraphics[width=\linewidth]{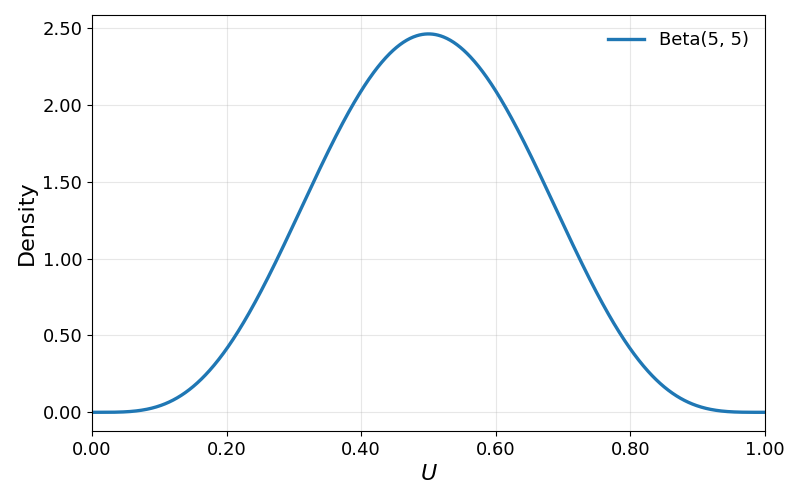} &
        \includegraphics[width=\linewidth]{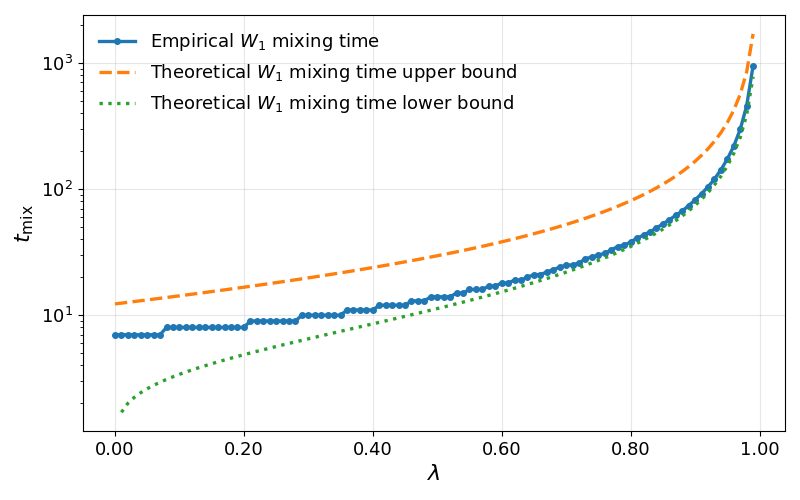} \\[0.3em]
        {\small (a)} &
        {\small (b)} \\[1em]
        \includegraphics[width=\linewidth]{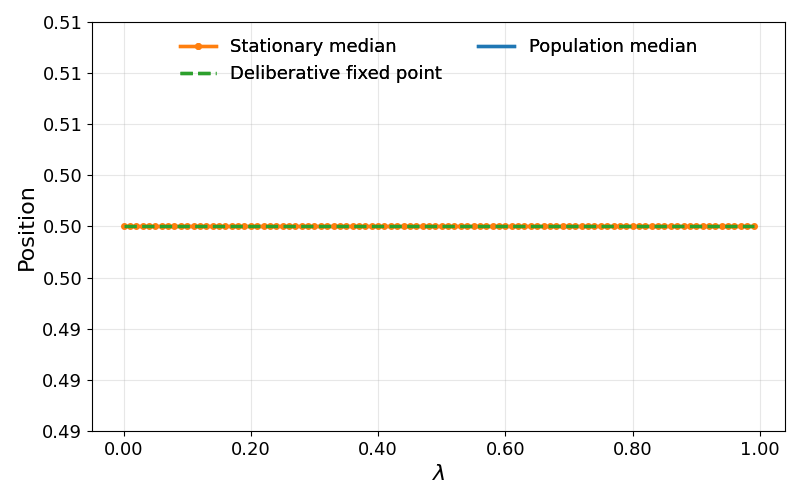} &
        \includegraphics[width=\linewidth]{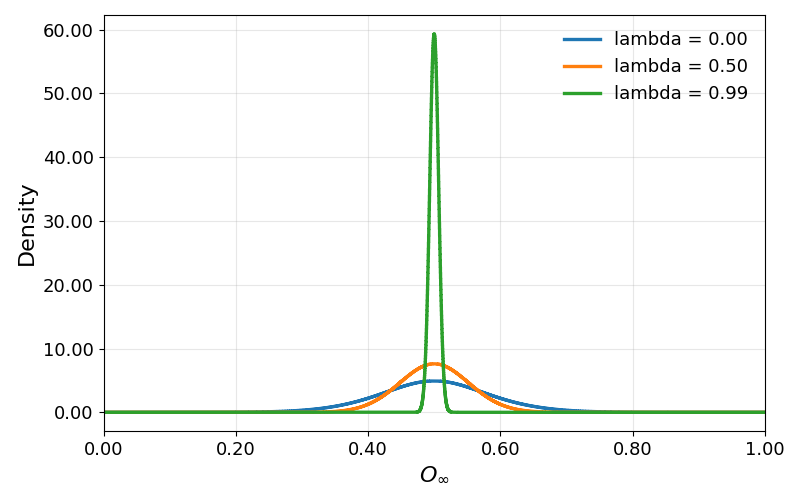} \\[0.3em]
        {\small (c)} &
        {\small (d)} \\
    \end{tabular}
    \caption{Combined results on $\mathrm{Beta}(5, 5)$. (a) The distribution of $\mathrm{Beta}(5, 5)$. (b) The theoretical mixing time and the estimated mixing time, we use the logarithm scale for the mixing time. (c) The population median, stationary median, and deliberative fixed point for different values of $\lambda$. (d) Larger anchoring concentrates the stationary distribution.}
    \label{fig:hist_lambda_alpha_5_beta_5}
\end{figure}

\end{document}